\let\c@definition\c@theorem
\let\c@lemma\c@theorem
\let\c@corollary\c@theorem
\let\c@remark\c@theorem
\let\c@example\c@theorem
\let\c@proposition\c@theorem
\author{Thomas~Brihaye \and V\'eronique~Bruy\`ere \and Julie~De~Pril}
 \institute{University of Mons - UMONS\\
   Place du Parc 20, 7000 Mons, Belgium\\
   \mail} 
 \title{On Equilibria in Quantitative Games with Reachability/Safety
   Objectives}
\begin{document}
\maketitle

\begin{abstract}
  In this paper, we study turn-based quantitative multiplayer non
  zero-sum games played on finite graphs with both reachability and
  safety objectives. In this framework a player with a reachability
  objective aims at reaching his own goal as soon as possible, whereas
  a player with a safety objective aims at avoiding his bad set or, if
  impossible, delaying its visit as long as possible.  We prove the
  existence of Nash equilibria with finite memory in quantitative
  multiplayer reachability/safety games. Moreover, we prove the
  existence of finite-memory secure equilibria for quantitative
  two-player reachability games.
\end{abstract}

\keywords{Nash equilibrium, Turn-based quantitative game,
Secure equilibrium, Reachability/Safety objectives.}

\section{Introduction}

\paragraph{General framework.} 
The construction of correct and efficient computer systems (hardware
or software) is recognized as an extremely difficult task. To support
the design and verification of such systems, mathematical
logic, automata theory~\cite{HU} and
more recently model-checking~\cite{CGP00} have been intensively
studied. The model-checking approach, which is now an important part
of the design cycle in industries, has proved its efficiency when
applied to systems that can be accurately modeled as a finite-state
automaton. In contrast, the application of these techniques to
computer software, complex systems like embedded systems or
distributed systems has been less successful. This could be partly
explained by the following reasons: classical automata-based models do
not faithfully capture the complex interactive behavior of modern
computational systems that are usually composed of several interacting
components, also interacting with an environment that is only
partially under control. Recent research works show that it is
suitable to generalize automata models used in the classical approach
to verification, with the more flexible and mathematically deeper
game-theoretic framework~\cite{Na50,OR94}.

\paragraph{Game theory meets automata theory.} 
The basic framework that extends computational models with concepts
from game theory is the so-called two-player zero-sum games played on
graphs~\cite{GTW02}. Many problems in verification and design of
reactive systems can be modeled with this approach, like modeling
controller-environment interactions.  Given a model of a system
interacting with a hostile environment, given a control objective
(like preventing the system to reach some bad configurations), the
controller synthesis problem asks to build a controller ensuring that
the control objective is enforced whatever the environment will
do. Two-player zero-sum games played on graphs are adequate models to
solve this problem~\cite{T95}. Moves of Player 1 model actions of the
controller whereas moves of Player 2 model the uncontrollable actions
of the environment, and a winning strategy for Player 1 is an abstract
form of a control program that enforces the control objective.

The controller synthesis problem is suitable to model purely
antagonist interactions between a controller and a hostile
environment. However in order to study more complex systems with more
than two components whose objectives are not necessarily antagonist,
we need multiplayer and non zero-sum games to model them
adequately. Moreover, we do not look for winning strategies, but
rather try to find relevant notions of equilibria, for instance the
famous notion of Nash equilibria~\cite{Na50}. On the other hand, only
qualitative objectives have been considered so far to specify, for
example, that a player must be able to reach a target set of states in
the underlying game graph. But, in line with the previous point, we
also want to express and solve games for quantitative objectives such
as forcing the game to reach a particular set of states within a given
time bound, or within a given energy consumption limit. In summary, we
need to study \emph{equilibria} for \emph{multiplayer non zero-sum}
games played on graphs with \emph{quantitative} objectives. This
article provides some new results in this research direction.

\paragraph{Related work.}

Several recent papers have considered two-player zero-sum games played
on finite graphs with regular objectives enriched by some
\emph{quantitative aspects}. Let us mention some of them: games with
\emph{finitary objectives}~\cite{CH06}, games with \emph{prioritized
  requirements}~\cite{AKW08}, \emph{request-response games} where the
waiting times between the requests and the responses are
minimized~\cite{HTW08,Z09}, and games whose winning conditions are
expressed via \emph{quantitative languages}~\cite{BCHJ09}.

Other works concern qualitative non zero-sum games.
The notion of secure equilibrium, an interesting refinement of Nash
equilibrium, has been introduced in~\cite{CJH06}. It has been proved
that a unique secure equilibrium always exists for two-player non zero-sum games
with regular objectives. In~\cite{GU08}, general criteria ensuring
existence of Nash equilibria, subgame perfect equilibria (resp. secure
equilibria) are provided for $n$-player (resp. $2$-player)
games, as well as complexity results.

%% The question of existence of secure equilibria in the $n$-player case
%% is left open in both papers. \fbox{TB: OK avec la derniere phrase? me
%%   semble important}
  
Finally, we mention reference \cite{BG09} that combines both
quantitative and non zero-sum aspects. It is maybe the nearest related
work compared to us, however the framework and the objectives are
pretty different.  In~\cite{BG09}, the authors study games played on
graphs with terminal vertices where quantitative payoffs are assigned
to the players. These games may have cycles but all the infinite plays
form a single outcome (like in chess where every infinite play is a
draw). That paper gives criteria that ensure the existence of Nash
(and subgame perfect) equilibria in pure and memoryless strategies.

%% \fbox{TB: Veut-on dire dans l'intro que notre resultat n'est pas une
%%   csq immediate de Kuhn? Il faut le dire dans le papier.}

\paragraph{Our contribution.} 
% We here study turn-based quantitative multiplayer non zero-sum games
% played on finite graphs with reachability objectives. In this
% framework each player aims at reaching his own goal as soon as
% possible. We focus on existence results for two solution concepts:
% Nash equilibrium and secure equilibrium.  We prove the existence of
% Nash (resp. secure) equilibria in $n$-player (resp. $2$-player)
% games. Moreover, we show that these equilibria can be chosen with
% finite memory. For the case of Nash equilibria, we extend our results
% in two directions. First we prove that finite-memory Nash equilibria
% still exist when the model is enriched by allowing $n$-tuples of
% non-negative costs on edges (one cost by player). This result provides
% an answer to a question \fbox{we} posed in~\cite{BBD10}. Secondly, we
% prove the existence of Nash equilibria in quantitative games where
% both safety and reachability objectives coexist.

We here study turn-based quantitative multiplayer non zero-sum games
played on finite graphs with reachability objectives. In this
framework each player aims at reaching his own goal as soon as
possible. We focus on existence results for two solution concepts:
Nash equilibrium and secure equilibrium.  We prove the existence of
finite-memory Nash (resp. secure) equilibria in $n$-player
(resp. $2$-player) games.  Moreover, we prove that given a Nash
(resp. secure) equilibrium of a $n$-player (resp. $2$-player) game,
we can build a finite-memory Nash (resp. secure) equilibrium of the
\emph{same type}, i.e. preserving the set of players achieving their
objectives.  For the case of Nash equilibria, we extend our results in
two directions. First we prove that finite-memory Nash equilibria
still exist when the model is enriched by allowing $n$-tuples of
non-negative costs on edges (one cost by player). This result provides
an answer to a question we posed in~\cite{BBD10}. Secondly, we prove
the existence of Nash equilibria in quantitative games where both
safety and reachability objectives coexist.

Our results are not a direct consequence of the existing results in
the qualitative framework, they require some new proof techniques.  To
the best of our knowledge, this is the first general result about the
existence of equilibria in quantitative multiplayer games played on
graphs.

\paragraph{Organization of the paper.} 
Section~\ref{sec:prelim} is dedicated to definitions. We present the
games and the equilibria we study. In Section~\ref{sec:NE} we first
prove an existence result for Nash equilibria and provide the
finite-memory characterization.  Similar results concerning secure
equilibria in two-player games are established in Section~\ref{sec:SE}.
Finally, in Section~\ref{sec:ext}, we discuss the extensions of our
results on Nash equilibria.

A part of these results has been published in~\cite{BBD10}, namely the
existence of finite-memory Nash (resp. secure) equilibria in
multiplayer (resp. $2$-player) games, and the fact that given a Nash
equilibrium we can build a finite-memory Nash equilibrium of the same
type. Additionally in this paper we give proofs of the previous
results and we extend our existence result for Nash equilibria in the
two directions mentioned above, namely $(i)$ $n$-tuples of
non-negative costs on edges and $(ii)$ reachability/safety
objectives. Moreover, in the two-player case, we prove that given a
secure equilibrium, we can build a finite-memory secure equilibrium of
the same type.

\section{Preliminaries}\label{sec:prelim}
%======================

\subsection{Definitions}\label{sub:def}

We consider here \emph{quantitative} games played on a graph where all
the players have \emph{reachability\footnote{The general case of
    reachability/safety objectives is handled in
    Subsection~\ref{sub:safety}.}  objectives}.  It means that, given
a certain set of vertices~$\F_i$, each player~$i$ wants to reach one
of these vertices as soon as possible.

This section is mainly inspired by reference \cite{GU08}.

\begin{definition}\label{def-game}
  An \emph{infinite turn-based quantitative multiplayer reachability
    game} is a tuple $\game$ where
  \begin{itemize}
  \item[\textbullet] $\Pi$ is a finite set of players,
  \item[\textbullet] $\G$ is a finite directed graph where $V$ is the
    set of vertices, $(V_i)_{i \in \Pi}$ is a partition of $V$ into
    the state sets of each player, $v_0 \in V$ is the initial vertex,
    and $E \subseteq V \times V$ is the set of edges, and
  \item[\textbullet] $\F_i \subseteq V$ is the goal set of player~$i$.
  \end{itemize}
\end{definition}

We assume that each vertex has at least one outgoing edge.  The game
is played as follows. A token is first placed on the vertex~$v_0$.
Player~$i$, such that $v_0 \in V_i$, has to choose one of the outgoing
edges of~$v_0$ and put the token on the vertex~$v_1$ reached when
following this edge. Then, it is the turn of the player who
owns~$v_1$. And so on.

A \emph{play}~$\rho \in V^{\omega}$ (resp. a \emph{history}~$h \in
V^+$) of $\mathcal{G}$ is an \emph{infinite} (resp. a \emph{finite})
path through the graph~$G$ starting from vertex~$v_0$. Note that a
history is always non empty because it starts with~$v_0$.  The set~$H
\subseteq V^+$ is made up of all the histories of~$\mathcal{G}$.  A
\emph{prefix} (resp.  \emph{proper prefix})~$\la$ of a
history~$h=h_0\ldots h_k$ is a finite sequence $h_0\ldots h_l$, with
$l\leq k$ (resp. $l<k$), denoted by $\la \leq h$ (resp. $\la < h$).
We similarly consider a prefix~$\la$ of a play~$\rho$, denoted by~$\la
< \rho$.

We say that a play~$\rho=\rho_0\rho_1\ldots$ \emph{visits} a set~$S
\subseteq V$ (resp. a vertex~$v \in V$) if there exists $l \in \N$
such that $\rho_l$ is in $S$ (resp. $\rho_l=v$). The same terminology
also stands for a history~$h$.  Similarly, we say that $\rho$
\emph{visits}~$S$ \emph{after} (resp. \emph{in}) \emph{a
  prefix~$\rho_0\ldots\rho_k$} if there exists $l > k$ (resp.  $l \leq
k$) such that $\rho_l$ is in $S$.  For any play~$\rho$ we denote by
$\visit(\rho)$ the set of players~$i \in \Pi$ such that $\rho$ visits
$\F_i$. The set~$\visit(h)$ for a history~$h$ is defined similarly.
The function~$\last$ returns, given a history~$h= h_0\ldots h_k$, the
last vertex~$h_k$ of $h$, and the \emph{length}~$|h|$ of~$h$ is the
number~$k$ of its \emph{edges}\footnote{Note that the length is not
  defined as the number of vertices.}.

For any play~$\rho=\rho_0\rho_1\ldots$ of $\mathcal{G}$, we note
$\payoff_i(\rho)$ the \emph{cost} of player~$i$, defined by:
\[ \payoff_i(\rho) = \left\{
\begin{array}{ll}
  l & \mbox{ if $l$ is the \emph{least} index such that $\rho_l \in
    \F_i$,}\\ + \infty & \mbox{ otherwise.}
\end{array}\right. \]
We note $\payoff(\rho) = (\payoff_i(\rho))_{\ipi}$ the \emph{cost
  profile} for the play~$\rho$.  The aim of each player~$i$ is to
\emph{minimize} the cost he has to pay, i.e. reach his goal set~$\F_i$
as soon as possible.
%Remark that if $\cost_i(v,v')$ equals to zero for all $\ipi$ and
%$(v,v')\in E$, then we get a \emph{qualitative} game where either a
%player wins (cost $0$), or loses (cost $+ \infty$).

A \emph{strategy} of player~$i$ in $\mathcal{G}$ is a
function~$\sigma: V^* V_i \to V$ assigning to each history~$hv$ ending
in a vertex~$v$ of player~$i$, a next vertex~$\sigma(hv)$ such that
$(v,\sigma(hv))$ belongs to $E$. We say that a
play~$\rho=\rho_0\rho_1\ldots$ of $\mathcal{G}$ is \emph{consistent}
with a strategy~$\sigma$ of player~$i$ if
$\rho_{k+1}=\sigma(\rho_0\ldots\rho_k)$ for all $k \in \N$ such that
$\rho_k \in V_i$. The same terminology is used for a history~$h$ of
$\mathcal{G}$.  A \emph{strategy profile} of $\mathcal{G}$ is a
tuple~$(\sigma_i)_{\ipi}$ where $\sigma_i$ is a strategy for
player~$i$.  It determines a unique play of $\mathcal{G}$ consistent
with each strategy~$\sigma_i$, called the \emph{outcome} of
$(\sigma_i)_{\ipi}$ and denoted by $\langle (\sigma_i)_{\ipi}
\rangle$.

A strategy~$\sigma$ of player~$i$ is \emph{\memoryless} if $\sigma$
depends only on the current vertex, i.e. $\sigma(hv)=\sigma(v)$ for
all~$h \in H$ and~$v \in V_i$. More generally, $\sigma$ is a
\emph{finite-memory strategy} if the equivalence
relation~$\approx_{\sigma}$ on $H$ defined by $h \approx_{\sigma} h'$
if $\sigma(h\delta) = \sigma(h'\delta)$ for all~$\delta \in V^*V_i$
has finite index. In other words, a finite-memory strategy is a
strategy that can be implemented by a finite automaton with output.  A
strategy profile~$(\sigma_i)_{\ipi}$ is called \emph{\memoryless} or
\emph{finite-memory} if each~$\sigma_i$ is a \memoryless\ or a
finite-memory strategy, respectively.

For a strategy profile~$(\sigma_i)_{\ipi}$ with outcome $\rho$ and a
strategy~$\sigma_j'$ of player~$j$ ($j\in \Pi$), we say that
\emph{player~$j$ deviates from~$\rho$ after a prefix~$h$ of~$\rho$} if
there exists a prefix~$h'$ of~$\rho$ such that $h \leq h'$, $h'$ is
consistent with $\sigma_j'$ and $\sigma_j'(h') \not = \sigma_j(h')$.
We also say that \emph{player~$j$ deviates from~$\rho$ just after a
  prefix~$h$ of $\rho$} if $h$ is consistent with $\sigma_j'$ and
$\sigma_j'(h) \not = \sigma_j(h)$.

%--------------------
% EQUILIBRE DE NASH |
%--------------------

We now introduce the notion of \emph{Nash equilibrium} and
\emph{secure equilibrium}.
\begin{definition}\label{def:ne}
  A strategy profile~$(\sigma_i)_{\ipi}$ of a game~$\mathcal{G}$
  is a \emph{Nash equilibrium} if for all player~$j \in \Pi$ and for all
  strategy~$\sigma_j'$ of player~$j$, we have:
  $$\payoff_j(\rho) \leq \payoff_j(\rho')$$ where $\rho = \langle
  (\sigma_i)_{\ipi} \rangle$ and $\rho' = \langle
  \sigma_j',(\sigma_i)_{\ipimj} \rangle$.
\end{definition}
This definition means that player~$j$ (for all~$j \in \Pi$) has no
incentive to deviate since he increases his cost when
using~$\sigma_j'$ instead of~$\sigma_j$.  Keeping notations of
Definition~\ref{def:ne} in mind, a strategy~$\sigma_j'$ such that
$\payoff_j(\rho) > \payoff_j(\rho')$ is called a \emph{profitable
  deviation} for player~$j$ with respect to~$(\sigma_i)_{\ipi}$. In
this case either player~$j$ pays an infinite cost for~$\rho$ and a
finite cost for $\rho'$ ($\rho'$ visits~$\F_j$, but $\rho$ does not),
or player~$j$ pays a finite cost for~$\rho$ and a strictly lower cost
for~$\rho'$ ($\rho'$ visits~$\F_j$ earlier than~$\rho$ does).

%--------------------
% EQUILIBRE DE SURETE |
%--------------------
%definition $\precsym_j$, ES, definition equivalente

%\fbox{faire \`a deux joueurs seulement. pas rel de pr\'ef\'erence \`a
%  n joueurs}
As our results on secure equilibria stand for two-player games, we
define this notion only in this context. In order to define the
concept of secure equilibrium\footnote{Our definition naturally
  extends the notion of \emph{secure equilibrium} proposed
  in~\cite{CJH06} to the quantitative reachability framework. A longer
  discussion comparing the two notions can be found in
  Section~\ref{sec:qualiquanti}.}  we first need to associate two
appropriate binary relations~$\prec_1$ and~$\prec_2$ on cost profiles
with player~1 and~2 respectively.  Given two cost profiles~$(x_1,x_2)$
and $(y_1,y_2)$:
\begin{align*}
  & (x_1,x_2) \prec_1 (y_1,y_2) \quad \text{iff} \quad (x_1 > y_1)
  \vee (x_1=y_1 \wedge x_2 < y_2)\,.
\end{align*}
We then say that \emph{player~$1$ prefers} $(y_1,y_2)$ \emph{to}
$(x_1,x_2)$. In other words, player~$1$ prefers a cost profile to
another either if he can decrease his own cost, or if he can increase
the cost of player~$2$, while keeping his own cost.  We define the
relation~$\prec_2$ symmetrically.

\begin{definition}\label{def:ES}
  A strategy profile~$(\sigma_1,\sigma_2)$ of a two-player
  game~$\mathcal{G}$ is a \emph{secure equi\-li\-brium} if there does
  not exist any strategy~$\sigma_1'$ of player~$1$ such that:
  $$\payoff(\rho) \prec_1 \payoff(\rho')$$ where $\rho = \langle
  \sigma_1,\sigma_2 \rangle$ and $\rho' = \langle \sigma_1',\sigma_2
  \rangle$, and there does not exist any strategy~$\sigma_2'$ of
  player~$2$ such that:
  $$\payoff(\rho) \prec_2 \payoff(\rho')$$ where $\rho = \langle
  \sigma_1,\sigma_2 \rangle$ and $\rho' = \langle \sigma_1
  ,\sigma_2'\rangle$.
\end{definition}
In other words, player~$1$ (resp. $2$) has no incentive to deviate,
with respect to the relation~$\prec_1$ (resp. $\prec_2$).  Note that
any secure equilibrium is a Nash equi\-li\-brium.  A
stra\-te\-gy~$\sigma_j'$ such that $\payoff(\rho) \prec_j
\payoff(\rho')$ is called a \emph{$\prec_j$-profitable deviation} for
player~$j$ with respect to $(\sigma_1,\sigma_2)$ (for $j \in
\{1,2\}$).

%-------
% TYPE |
%-------
Let us go back to the multiplayer framework and define the notion of
\type\ of an equilibrium.
\begin{definition}
  The \type\ of a strategy profile~$(\sigma_i)_{\ipi}$ in a
  rea\-cha\-bi\-li\-ty game~$\mathcal{G}$ is the set of players~$j \in
  \Pi$ such that the outcome~$\rho$ of $(\sigma_i)_{\ipi}$ visits
  $\F_j$. It is denoted by $\valtype((\sigma_i)_{\ipi})$.
\end{definition}
In other words, $\valtype((\sigma_i)_{\ipi})=\visit(\rho)$.

%The previous definitions are illustrated on a simple two-player
%game in the Appendix (Section~\ref{sec:ann ex}, Example~\ref{ex:EN-ES}).

%exemple avec EN/ES
The previous definitions are illustrated in the following example.
%\fbox{CHANGER TYPE DS example.tex}

%Let us now illustrate the previous definitions on a simple
%two-player game.

\begin{example}\label{ex:EN-ES}
  Let $\mathcal{G}=(V,V_1,V_2,v_0,E,\F_1,\F_2)$ be the two-player game
  depicted in
  Figure~\ref{fig-ex}. The states of player~$1$ (resp. $2$) are
  represented by circles (resp. squares)\footnote{We will keep this convention
  through the article.}. Thus, according to Figure~\ref{fig-ex}, $V_1 =
  \{A,C,D\}$ and $V_2=\{B\}$, the initial vertex~$v_0$ is the vertex~$A$,
  and we set $\F_1=\{C\}$ and $\F_2=\{D\}$.
 
  \begin{figure}[h!]
    \centering
    \begin{tikzpicture}[->,>=stealth',shorten >=1pt,auto,node
        distance=3cm,bend angle=20]
      \everymath{\scriptstyle}
      \tikzstyle{j0}=[draw,circle,text centered,,inner sep=2pt]
      \tikzstyle{j1}=[draw,rectangle,text centered,,inner sep=4pt]
      \tikzstyle{j2}=[draw,circle,fill=black!20!white,text centered,,inner 
sep=2pt]
      \tikzstyle{j3}=[draw,circle,double,text centered,,inner sep=2pt]
      
      \node[j0]  (0) {$A$};
      \node[j1]  (1) [left of=0] {$B$};
      \node[j3]  (2) [right of=0] {$D$};
      \node[j2]  (3) [below of=1,xshift=1.5cm,yshift=1cm] {$C$};
      \node (4) [above of=0,yshift=-2.25cm] {};
      \path
      (4) edge (0)
      (0) edge[bend right]  (1)
      (1) edge[bend right]  (0)
      (0) edge[bend left]  (2)
      (2) edge[bend left] (0)
      (1) edge  (3)
      (3) edge  (0);

    \end{tikzpicture} \caption{A two-player game with
    $\F_1=\{C\}$ and  $\F_2=\{D\}$.}  
\label{fig-ex} 
\end{figure}
 
    An example of play in~$\mathcal{G}$ is given by $\rho =
    (AD)^\omega$, which visits~$\F_2$ but not~$\F_1$, leading to the
    cost profile $\payoff((AD)^\omega) = (+ \infty,1)$. The
    play~$\rho$ is, among others, the outcome of the
    strategy\footnote{Note that player~1 has no choice in vertices~$C$
      and~$D$, that is, $\sigma_1(hv)$ is necessarily equal to~$A$
      for~$v \in \{C,D\}$.}  profile~$(\sigma_1,\sigma_2)$ where
    $\sigma_1(hA)=D$ and $\sigma_2(hB)=C$, for all histories~$h$.

    Let us show that the strategy profile~$(\sigma_1,\sigma_2)$ 
    is not a Nash equilibrium,
    by proving that player~$1$ has a profitable deviation~$\sigma_1'$
    in which he
    manages to decrease his own cost. With $\sigma_1'$ defined by
    $\sigma_1'(hA)=B$, we get the play $\langle
    \sigma_1',\sigma_2 \rangle = (ABC)^\omega$ such that
    $\payoff((ABC)^\omega)=(2,+\infty)$, and in particular
    $\payoff_1((ABC)^\omega)< \payoff_1(\rho)$.  
                
    On the opposite side, one can show that $(\sigma_1',\sigma_2)$ is
    a Nash equilibrium.  However $(\sigma_1',\sigma_2)$ is not a
    secure equilibrium. Indeed, player~$2$ has a $\prec_2$-profitable
    deviation in which he can increase player~$1$'s cost without
    modifying his own cost.  With~$\sigma_2'$ the strategy of
    player~$2$ defined by $\sigma_2'(hB)=A$, we get the play $\langle
    \sigma_1',\sigma_2' \rangle = (AB)^\omega$ such that
    $\payoff((AB)^\omega) = (+\infty,+\infty)$, and $\payoff(\langle
    \sigma_1',\sigma_2 \rangle) \prec_2 \payoff(\langle
    \sigma_1',\sigma_2' \rangle)$.

    Notice that all strategies
    discussed so far are \memoryless. In order to
    obtain a Nash equilibrium of \type\ $\{1,2\}$, finite-memory strategies
    are necessary. We define the following finite-memory strategy
    profile~$(\tau_1,\tau_2)$:
     \begin{align*} \tau_1(hA) = \begin{cases} D
    & \text{if } h=\epsilon\\ B & \text{if }
    h\ne\epsilon \end{cases} \quad ; \quad \tau_2(hB)= \begin{cases} C
    & \text{if } h \text{ visits } D\\ A
    & \text{otherwise.}  \end{cases} \end{align*} 
     The outcome~$\pi=\langle (\tau_1,\tau_2) \rangle$ is 
     equal to~$AD(ABC)^\omega$
    and has costs $(4,1)$. In order to prove that $(\tau_1,\tau_2)$
    is a Nash equilibrium, we prove that no player has a
    profitable deviation. 
    For player~2 it is clearly impossible to get a cost less than~1. 
    To try to get a cost less than~4, player~1 must use a 
    strategy~$\tau_1'$ such that~$\tau_1'(A)=B$. But then player~2 
    chooses~$\tau_2(AB)=A$. The prefix~$ABA$ of the outcome 
    of~$(\tau_1',\tau_2)$
    shows that player~1 will increase his cost of 4.

    However $(\tau_1,\tau_2)$ is not a secure
    equilibrium since player~$2$ has a $\prec_2$-profitable
    deviation~$\tau_2'$ such that $\tau_2'(hB)=A$ for all histories~$h$.
     One can show that, in this example, there is no
    secure equilibrium of \type\ $\{1,2\}$. 
\end{example}

The questions studied in this article are the following ones:
\begin{pbm}\label{pbm 1}
  Given~$\mathcal{G}$ a quantitative multiplayer (resp. two-player)
  reachability game, does there exist a Nash equilibrium (resp. a
  secure equilibrium) in~$\mathcal{G}$?
\end{pbm}

\begin{pbm}\label{pbm 2}
  Given a Nash equilibrium (resp. a secure equilibrium) in a
  quantitative multiplayer (resp. two-player) reachability
  game~$\mathcal{G}$, does there exist a finite-memory Nash
  equilibrium (resp. secure equilibrium) with the same \type?
\end{pbm}
We provide positive answers in Sections~\ref{sec:NE}
and~\ref{sec:SE}. Notice that these problems have been investigated in
the qualitative framework (see~\cite{GU08}).

\subsection{Qualitative Games vs Quantitative Games}\label{sec:qualiquanti}
We show in this section that Problems~\ref{pbm 1} and~\ref{pbm 2} can
not be reduced to problems on qualitative games.

%\fbox{TB: Il faut definir type qq part...}\fbox{J:ok}

Given a quantitative multiplayer reachability game~$\mathcal{G}$, one
can naturally define a \emph{qualitative} version of~$\mathcal{G}$,
denoted by~$\qualiG$, such that the payoffs\footnote{For qualitative
games, we use the notion of \emph{payoff} rather than the notion
of \emph{cost} since \win \ (resp. \lose) can be seen as a payoff of
$1$ (resp. $0$) and the aim of the players is to maximize their
payoffs.} are \emph{qualitative}. Given a play~$\rho$
of~$\mathcal{G}$, the qualitative payoff of player~$i$ is defined by:
\[ \payoffquali_i(\rho) = \left\{
\begin{array}{ll}
  \win & \mbox{ if } \payoff_i(\rho) \mbox{ is finite}\\        
  \lose & \mbox{ otherwise.}
\end{array}\right. \]

We note~$\payoffquali(\rho) = (\payoffquali_i(\rho))_{\ipi}$ the
\emph{qualitative} payoff profile for the play~$\rho$. In this framework,
player~$i$ aims at reaching his own goal set, i.e. at obtaining
payoff~$\win$. With this idea in mind, one can naturally adapt the notion
of Nash (resp. secure) equilibrium to the qualitative framework. 

The existence of Nash (resp. secure) equilibria 
in~$n$-player (resp.~$2$-player) qualitative games~$\qualiG$ has been
proved in~\cite[Corollary~12]{GU08} (resp. \cite[Theorem~2]{CJH06})
for reachability objectives, and more generally for Borel
objectives.

The next example illustrates that lifting Nash equilibria 
in~$\qualiG$ to Nash equilibria in~$\mathcal{G}$ does not work.
We developed new ideas in Sections~\ref{sec:NE} and~\ref{sec:SE}
to solve Problem~\ref{pbm 1}.

\begin{example}\label{ex:qualiquanti}
  Let us now consider the two-player game~$\mathcal{G}$ depicted in
  Figure~\ref{fig-ex-sequaliquanti}, such that~$\F_1=\{B,E\}$
  and~$\F_2=\{C\}$. Notice that only player~1 effectively plays in
  this game. We are going to exhibit a secure (and thus Nash)
  equi\-li\-brium~$(\sigma_1,\sigma_2)$ in the qualitative
  game~$\qualiG$ that can not be lifted neither to a secure nor to a
  Nash equilibrium in the quantitative game~$\mathcal{G}$. The
  strategy profile~$(\sigma_1,\sigma_2)$ is defined such that~$\langle
  (\sigma_1,\sigma_2) \rangle = ADE^{\omega}$. It is a secure
  equilibrium in~$\qualiG$ with the qualitative payoff
  profile~$(\win,\lose)$.  However~$(\sigma_1,\sigma_2)$ is not a Nash
  (and thus not a secure) equilibrium in~$\mathcal{G}$. Indeed, the
  play~$ABC^{\omega}$ provides a smaller cost to player~1,
  i.e.~$\payoff_1(ABC^{\omega}) < \payoff_1(ADE^{\omega})$. Notice
  that in this example, there is no equilibrium in~$\mathcal{G}$ of
  type~$\{1\}$.

 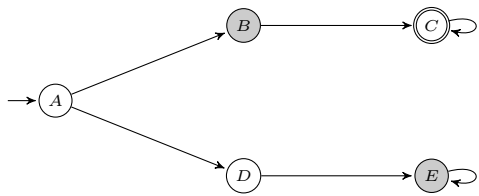
\begin{figure}[h!]
   \centering
     \begin{tikzpicture}[->,>=stealth',shorten >=1pt,auto,node
         distance=2.5cm,bend angle=20]
       \everymath{\scriptstyle}
       \tikzstyle{j0}=[draw,circle,text centered,,inner sep=2pt]
       \tikzstyle{j1}=[draw,rectangle,text centered,,inner sep=4pt]
       \tikzstyle{j2}=[draw,circle,double,text centered,,inner sep=2pt]
       \tikzstyle{j3}=[draw,circle,fill=black!20!white,text centered,,inner 
sep=2pt]

       \node[j0]  (0) {$A$};
       \node[j3]  (1) [right of=0,yshift=1cm] {$B$};
       \node[j0]  (2) [right of=0,yshift=-1cm] {$D$};
       \node[j2]  (3) [right of=1] {$C$};
       \node[j3]  (4) [right of=2] {$E$};

       \node (5) [left of=0,xshift=1.75cm] {};
       \path
       (5) edge (0)
       (0) edge  (1)
       (1) edge  (3)
       (0) edge  (2)
       (2) edge (4)
       (3) edge[loop right]  (3)
       (4) edge[loop right]  (4);

     \end{tikzpicture}
     \caption{A game~$\mathcal{G}$ with an equilibrium in~$\qualiG$ that
     can not be lifted to $\mathcal{G}$.}
     \label{fig-ex-sequaliquanti}
\end{figure}

\end{example}

The next proposition shows that on the opposite side, any Nash
equilibrium in a quantitative game~$\mathcal{G}$ can be lifted
to a Nash equilibrium in the qualitative game~$\qualiG$.

\begin{proposition}\label{prop:qualiquanti NE}
  If~$(\sigma_i)_{\ipi}$ is a Nash equilibrium in a quantitative
  multiplayer reachability game~$\mathcal{G}$, 
  then~$(\sigma_i)_{\ipi}$ is also a Nash
  equilibrium in~$\qualiG$.
\end{proposition}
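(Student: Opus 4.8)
The plan is to argue by contraposition. I would suppose that $(\sigma_i)_{\ipi}$ is \emph{not} a Nash equilibrium in the qualitative game $\qualiG$, and derive that it cannot be a Nash equilibrium in $\mathcal{G}$ either. The crucial point to fix at the outset is that $\qualiG$ shares the same arena and the same strategies as $\mathcal{G}$; only the interpretation of the payoff differs. Consequently, for any player $j$ and any deviating strategy $\sigma_j'$, the two games produce the \emph{same} outcome play $\rho' = \langle \sigma_j', (\sigma_i)_{\ipimj} \rangle$, and likewise the same reference play $\rho = \langle (\sigma_i)_{\ipi} \rangle$. The only thing that changes is whether we read $\payoff_j$ or $\payoffquali_j$ off these plays.

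Next I would unfold what a qualitative profitable deviation means. If $(\sigma_i)_{\ipi}$ is not a Nash equilibrium in $\qualiG$, then, since in the qualitative game each player \emph{maximizes} his payoff, there is a player $j$ and a strategy $\sigma_j'$ with $\payoffquali_j(\rho) < \payoffquali_j(\rho')$. As the qualitative payoffs take only the two values $\lose$ and $\win$, with $\lose < \win$, this forces $\payoffquali_j(\rho) = \lose$ and $\payoffquali_j(\rho') = \win$. By the very definition of $\payoffquali_j$, the first equality says that $\payoff_j(\rho) = +\infty$ (the play $\rho$ never visits $\F_j$), while the second says that $\payoff_j(\rho')$ is finite (the play $\rho'$ does visit $\F_j$).

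I would then close the argument by observing that these two facts immediately give $\payoff_j(\rho) = +\infty > \payoff_j(\rho')$, so that $\sigma_j'$ is a genuine profitable deviation for player $j$ in the quantitative game $\mathcal{G}$. This contradicts the hypothesis that $(\sigma_i)_{\ipi}$ is a Nash equilibrium in $\mathcal{G}$, which establishes the contrapositive.

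There is essentially no serious obstacle here: the content of the statement is the asymmetry between the two notions of improvement. A qualitative improvement (passing from $\lose$ to $\win$) is always also a quantitative improvement (passing from an infinite to a finite cost), whereas a purely quantitative improvement (reaching $\F_j$ strictly earlier while staying finite) leaves the qualitative payoff unchanged and so need not survive the passage to $\qualiG$; this one-way nature is exactly what Example~\ref{ex:qualiquanti} shows cannot be reversed. The only mild care needed is to track the reversal of the optimization direction—minimizing cost in $\mathcal{G}$ versus maximizing payoff in $\qualiG$—so that the inequality defining a qualitative deviation is oriented correctly.
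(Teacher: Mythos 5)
Your argument is correct and is essentially the paper's own proof: the paper also observes that a profitable deviation in $\qualiG$ forces $\payoffquali_j(\rho)=\lose$ and $\payoffquali_j(\rho')=\win$, hence $\payoff_j(\rho)=+\infty$ and $\payoff_j(\rho')<+\infty$, yielding a profitable deviation in $\mathcal{G}$. The only cosmetic difference is that you phrase it as a contraposition while the paper phrases it as a proof by contradiction.
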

\begin{proof}
  For a contradiction, let us assume that in~$\qualiG$, player~$j$ has
  a profitable deviation~$\sigma_j'$ w.r.t.~$(\sigma_i)_{\ipi}$.  This
  is only possible if~$\payoffquali_j(\langle
  (\sigma_i)_{\ipi}\rangle) = \lose$ and $\payoffquali_j(\langle
  \sigma_j',(\sigma_i)_{\ipimj} \rangle) = \win$. Thus when
  playing~$\sigma_j'$ against~$(\sigma_i)_{\ipimj}$, player~$j$
  manages to visit~$\F_j$. Clearly enough,~$\sigma_j'$ would also be a
  profitable deviation w.r.t. $(\sigma_i)_{\ipi}$ in~$\mathcal{G}$,
  contradicting the hypothesis.  \qed
\end{proof}

Note that
Proposition~\ref{prop:qualiquanti NE} is false for secure equilibria.
To see that, let us come back to the game~$\mathcal{G}$ of 
Figure~\ref{fig-ex-sequaliquanti}. The strategy profile~$(\sigma_1,
\sigma_2)$ such that $\langle \sigma_1,\sigma_2\rangle
 = ABC^{\omega}$ is a secure
equilibrium in the quantitative game~$\mathcal{G}$ but not in
the qualitative game~$\qualiG$.

\subsection{Unraveling}

In the proofs of this article we need to unravel the
graph~$G=(V,(V_i)_{i\in \Pi},v_0,E)$ from the initial vertex~$v_0$,
which ends up in an \emph{infinite tree}, denoted by~$\tinf$.  This
tree can be seen as a new graph where the set of vertices is the
set~$H$ of histories of~$\mathcal{G}$, the initial vertex is~$v_0$,
and a pair~$(hv,hvv') \in H \times H$ is an edge of~$\tinf$ if~$(v,v')
\in E$. A history~$h$ is a vertex of player~$i$ in~$\tinf$
if~$\last(h) \in V_i$, and it belongs to the goal set of player~$i$
if~$\last(h) \in \F_i$.

We denote by~$\ginf$ the related game. This game~$\ginf$ played on the
un\-ra\-ve\-ling~$\tinf$ of~$G$ is equivalent to the
game~$\mathcal{G}$ that is played on~$G$ in the following sense.  A
play~$(\rho_0)(\rho_0\rho_1)(\rho_0\rho_1\rho_2)\ldots$ in~$\ginf$
induces a unique play~$\rho = \rho_0\rho_1\rho_2\ldots$
in~$\mathcal{G}$, and conversely.  Thus, we denote a play in~$\ginf$
by the respective play in~$\mathcal{G}$. The bijection between plays
of~$\mathcal{G}$ and plays of~$\ginf$ allows us to use the same cost
function~$\payoff$, and to transform easily strategies
in~$\mathcal{G}$ to strategies in~$\ginf$ (and conversely).

 We also need to study the tree~$\tinf$ limited to a certain
 depth~$\depth \geq 0$: we note~$\tfin$ the \emph{truncated tree
   of~$\tinf$ of depth~$\depth$} and~$\gfin$ the \emph{finite} game
 played on~$\tfin$.  More precisely, the set of vertices of~$\tfin$ is
 the set of histories~$h \in H$ of length~$\leq\depth$; the edges
 of~$\tfin$ are defined in the same way as for~$\tinf$ except that for
 the histories~$h$ of length~$\depth$, there exists no edge~$(h,hv)$.
 A play~$\rho$ in~$\gfin$ corresponds to a history of~$\mathcal{G}$ of
 length \emph{equal to}~$\depth$.  The notions of cost and strategy
 are defined exactly like in the game~$\ginf$, but limited to the
 depth~$\depth$. For instance, a player pays an infinite cost for a
 play~$\rho$ (of length~$\depth$) if his goal set is not visited
 by~$\rho$.

%\subsection{Qualitative games vs quantitative games}\label{sec:qualiquanti}

%\input{qualiquanti}

\subsection{Qualitative Two-player Zero-sum Reachability Games}
%\fbox{Section d\'ej\`a \'ecrite mais n'avait pas \'et\'e mise dans l'article
%  soumis, \`a laisser?}

%\subsection{Qualitative two-player zero-sum reachability games}
In this section we recall well-known properties of
qualitative two-player zero-sum reachability games~\cite[Chapter~2]{GTW02}.
This will be necessary in our proofs.

\begin{definition}\label{def-game-2j-sn}
  A \emph{qualitative two-player zero-sum reachability game} is
  a tuple $\gamedjsn$ where 
  \begin{itemize}
  \item[\textbullet] $\Gdjsn$ is a finite directed graph 
    where~$V$ is the set of vertices,~$V_1,V_2$ is a partition of~$V$
    into the state sets of player~1 and player~2,
    and~$E \subseteq V \times V$ is the set of edges, 
  \item[\textbullet] $\F \subseteq V$ is the goal set of player~1.
  \end{itemize}
\end{definition}

Given an initial vertex~$v_0 \in V$, the notions of \emph{play}, 
\emph{history} and \emph{strategy}  are 
the same as the ones defined in Section~\ref{sub:def}.
Player~1 (resp. player~2) \emph{wins} a play~$\rho$ of~$\mathcal{G}$
if~$\rho$ visits $\F$ (resp.~$\rho$ does not visit~$\F$).
The game is said \emph{zero-sum} because every play is won by exactly one
of the two players.

In zero-sum games, it is interesting to know if one of the players can play
in such a way that he is sure to win, however the other player plays. We
can formalize this by introducing the notion of \emph{winning strategy}.
A strategy~$\sigma_i$ for player~$i$ is a winning strategy 
\emph{from an initial vertex}~$v$ if
all plays of~$\mathcal{G}$ starting in~$v$
that are consistent with~$\sigma_i$ are won by player~$i$.
If player~$i$ has a winning strategy in~$\mathcal{G}$ from~$v$, we say that
player~$i$ \emph{wins} the game~$\mathcal{G}$ from~$v$.
We say that a game~$\mathcal{G}$ is \emph{determined} if for all~$v \in V$,
one of the two players has a winning strategy from~$v$.

Martin showed \cite{M75} that every qualitative two-player zero-sum game
with a Borel type winning condition is determined. In particular, we
have the following proposition:

\begin{proposition}[\cite{GTW02}]\label{propGTW02}
  Let~$\gamedjsn$ be a qualitative two-player zero-sum reachability game.
  Then for all~$v \in V$, one of the two players has a 
  \emph{\memoryless}\ winning strategy from~$v$ (in particular,
 $\mathcal{G}$ is determined).\\
  \noindent Moreover for all vertices~$v$ from which he wins the game, player~1
  (resp. player~2)
  has a \memoryless\ strategy that is independent of~$v$ and that
  forces the play to visit~$\F$ within at most~$|V|-1$ edges (resp.
  to stay in~$V\setminus \F$).
\end{proposition}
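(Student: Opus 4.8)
The plan is to establish all three claims — memoryless strategies, the length bound, and determinacy — at once through the classical \emph{attractor} construction towards player~1's goal set $\F$. First I would define the nondecreasing sequence of sets $\mathrm{Attr}^0 = \F$ and
\[
\mathrm{Attr}^{k+1} = \mathrm{Attr}^k \cup \{ v \in V_1 : \exists\, (v,v') \in E,\ v' \in \mathrm{Attr}^k \} \cup \{ v \in V_2 : \forall\, (v,v') \in E,\ v' \in \mathrm{Attr}^k \}.
\]
Since each $\mathrm{Attr}^k \subseteq V$ and the sequence is increasing, it stabilizes; I set $W_1 = \bigcup_{k} \mathrm{Attr}^k$ and $W_2 = V \setminus W_1$. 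A counting argument on the strictly increasing portion of the chain shows that if a vertex has \emph{rank} $r(v) = \min\{ k : v \in \mathrm{Attr}^k \}$ equal to $m$, then $\mathrm{Attr}^0, \dots, \mathrm{Attr}^m$ are $m+1$ distinct subsets of $V$, forcing $m \leq |V|-1$.

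Next I would read off player~1's memoryless strategy on $W_1$: at each $v \in V_1 \cap W_1$ with $r(v) = k \geq 1$, the defining clause provides an edge to some $v'$ with $r(v') \leq k-1$, and player~1 always follows such an edge; this choice depends only on $v$. The key invariant is that along any consistent play starting in $W_1$ the rank strictly decreases at every step — at player-1 vertices by construction, and at a player-2 vertex of rank $k$ because \emph{all} its successors lie in $\mathrm{Attr}^{k-1}$. As the rank is bounded by $|V|-1$ and drops to $0$ exactly when $\F$ is reached, the play visits $\F$ within at most $|V|-1$ edges.

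Symmetrically, I would argue that $W_2$ is a trap for player~2. Contraposing the attractor definition shows that every $v \in V_2 \cap W_2$ has at least one successor in $W_2$ (else $v$ would have been absorbed), while every $v \in V_1 \cap W_2$ has all its successors in $W_2$. Player~2's memoryless strategy then selects, at each $v \in V_2 \cap W_2$, some successor in $W_2$; any consistent play from $W_2$ remains in $W_2 \subseteq V \setminus \F$ forever, so $\F$ is never visited and player~2 wins. Since $W_1$ and $W_2$ partition $V$, one player wins from every vertex, yielding determinacy, and both strategies, being defined uniformly over the respective winning regions, are independent of the initial vertex.

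I expect the only genuinely delicate point to be the bookkeeping for the length bound: pinning the maximal rank at $|V|-1$ rather than $|V|$, and checking that the strict-decrease invariant holds at player-2 vertices and not merely at player-1 vertices. Everything else reduces to a routine verification that consistent plays track the attractor ranks exactly as prescribed.
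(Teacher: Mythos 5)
Your proof is correct: it is the classical attractor (backward reachability) construction, with the winning regions $W_1$ and $W_2$ read off as the attractor of $\F$ and its complement, the rank function giving both the memoryless strategy and the $|V|-1$ bound, and the trap argument giving player~2's memoryless strategy on $W_2$. The paper itself states this proposition as a citation to the literature and gives no proof, and your argument is precisely the standard one from the cited source; the only point worth making explicit is that the bound $m\le |V|-1$ on the rank uses $\mathrm{Attr}^0=\F\neq\emptyset$ (the case $\F=\emptyset$ being vacuous for player~1), which your ``strictly increasing chain of $m+1$ subsets'' phrasing glosses over slightly but does not invalidate.
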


\section{Nash Equilibria}\label{sec:NE}
%=========================

From now on we will often use the term \emph{game} to denote a
quantitative multiplayer reachability game according to
Definition~\ref{def-game}.

\subsection{Existence of a Nash Equilibrium}\label{sub:ex NE}
%--------------------------------------------

In this section we positively solve Problem~\ref{pbm 1} for Nash
equilibria.

\begin{theorem}\label{theo:ex EN}
  In every quantitative multiplayer reachability game, there exists a
  finite-memory Nash equilibrium.
\end{theorem}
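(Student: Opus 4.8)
The plan is to follow the classical ``punishment'' paradigm for building Nash equilibria, but with an equilibrium play chosen carefully enough to respect the quantitative reachability costs. First, for each player~$j \in \Pi$ I would set up a \emph{punishment device}: consider the qualitative two-player zero-sum reachability game in which player~$j$ tries to reach~$\F_j$ while the coalition of all the other players, merged into a single opponent, tries to keep the token out of~$\F_j$ for ever. By Proposition~\ref{propGTW02} this game is determined with \memoryless\ strategies; let $W_j \subseteq V$ be the set of vertices from which player~$j$ can force a visit to~$\F_j$, and fix a \memoryless\ coalition strategy~$\tau^{(j)}$ that, from every vertex of $V \setminus W_j$, keeps player~$j$ out of~$\F_j$ for ever, together with the fact that from any vertex of~$W_j$ player~$j$ can reach~$\F_j$ within at most $|V|-1$ edges.

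Second, I would isolate the outcome~$\rho^\ast$ of the equilibrium. For a profile in which deviations trigger the punishment above, the Nash condition reduces to a purely combinatorial property of~$\rho^\ast$: writing $\payoff_j(\rho^\ast)$ for the time at which player~$j$ reaches~$\F_j$ along~$\rho^\ast$ (possibly $+\infty$), we need that for every prefix~$h$ of~$\rho^\ast$ ending in a vertex of player~$j$ and every successor~$w$ of $\last(h)$, the quantity $|h|+1+d_j(w)$ is at least $\payoff_j(\rho^\ast)$, where $d_j(w)$ is the least number of edges in which player~$j$ can force~$\F_j$ from~$w$ against the coalition (so $d_j(w)=+\infty$ when $w \notin W_j$ and $d_j(w) \le |V|-1$ otherwise). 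In words, no single deviation lets a player reach his goal earlier than he already does along~$\rho^\ast$. I would produce such a play by a backward/attractor-style analysis on the finite graph: work in the truncated game~$\gfin$ for a depth~$\depth$ large enough (of the order of $|V|\cdot|\Pi|$) that every player who ever reaches his goal does so within depth~$\depth$, select a play of~$\gfin$ that is locally optimal in this cost sense for each mover, and continue it into an infinite play on which no further goal is visited. Since $G$ is finite, $\rho^\ast$ can be taken eventually periodic, hence finitely describable.

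Finally, I would define the strategy profile $(\sigma_i)_{\ipi}$: every player follows~$\rho^\ast$ as long as the current history is a prefix of~$\rho^\ast$; as soon as some player~$j$ is the first to leave~$\rho^\ast$, all the other players switch for ever to their component of the \memoryless\ punishment strategy~$\tau^{(j)}$. That this is a Nash equilibrium is then immediate from the two ingredients above: against punishment a deviating player~$j$ can beat~$\payoff_j(\rho^\ast)$ only by reaching~$\F_j$ strictly earlier, which the displayed inequality forbids, while if $\payoff_j(\rho^\ast)=+\infty$ the coalition keeps him out of~$\F_j$ altogether. The profile has finite memory: each player needs only to remember his position inside the finite, eventually periodic description of~$\rho^\ast$ together with the identity of the deviator once a punishment phase has started, and the strategies~$\tau^{(j)}$ are themselves \memoryless.

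I expect the main obstacle to be the second step: exhibiting a single infinite play~$\rho^\ast$ that simultaneously satisfies the no-earlier-deviation inequality for \emph{every} player at \emph{every} prefix, with an eventually periodic witness. The delicate point is that shortening one player's reaching time can open a profitable deviation for another, so the construction must balance all players' timing constraints at once; this is precisely where the quantitative setting departs from the qualitative one (compare Example~\ref{ex:qualiquanti}), and where the forced-reaching bound of Proposition~\ref{propGTW02} has to be used quantitatively rather than merely as a win/lose statement.
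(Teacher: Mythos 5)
Your overall architecture --- an equilibrium play~$\rho^\ast$ plus trigger punishments, with the Nash condition reduced to the inequality $|h|+1+d_j(w)\ge\payoff_j(\rho^\ast)$ at every deviation point --- is sound, and that inequality is indeed sufficient for your punishment scheme. But there is a genuine gap exactly where you locate it: you never construct~$\rho^\ast$. ``Select a play of~$\gfin$ that is locally optimal in this cost sense for each mover'' is not a construction: the constraint is global (its right-hand side is $\payoff_j(\rho^\ast)$, which depends on the entire infinite play), and it is precisely the interaction between different players' timing constraints that defeats a greedy or attractor-based choice --- improving one player's reaching time can violate the inequality for another player at an earlier prefix. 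The paper resolves this by invoking Kuhn's theorem to obtain, by backward induction, a genuine Nash equilibrium of the truncated game~$\gfin$ of depth $(|\Pi|+1)\cdot 2\cdot|V|$; the Nash property of that finite profile is what delivers your inequality for the players who reach their goals (for a fixed deviation, the adversarial coalition is at least as harmful as the equilibrium strategies $(\sigma_i)_{\ipimj}$, and any improving deviation for such a player must reach~$\F_j$ inside the truncated tree), while Lemma~\ref{lemma:jeu djsn} delivers it, in the form $d_j(w)=+\infty$, for the players who do not.

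The second unproved step is the passage to an eventually periodic play ``on which no further goal is visited.'' One cannot close off an arbitrary finite outcome with a cycle: if the cycle passes through a vertex from which some unsatisfied player~$j$ can force~$\F_j$, looping forever is not sustainable, and if the cycle meets a new~$\F_j$ it changes $\payoff_j(\rho^\ast)$ and can break your inequality at earlier prefixes. This is why the paper extracts a prefix $\alpha\beta\gamma$ of the finite-game outcome with $\beta$ a cycle, $\visit(\alpha)=\visit(\alpha\beta\gamma)$, and a buffer~$\gamma$ of length~$|V|$ beyond~$\alpha\beta$: the buffer is exactly what makes Lemma~\ref{lemma:jeu djsn} go through (a player who could still force~$\F_j$ from the cycle would have a profitable deviation already in~$\gfin$). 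Two smaller points: your scheme needs \memoryless\ optimal \emph{delaying} strategies realizing the quantitative values~$d_j(w)$, which is true but strictly more than Proposition~\ref{propGTW02} provides; the paper sidesteps this by punishing an already-satisfied deviator with the finite-game equilibrium strategies themselves rather than with an adversarial coalition. With these ingredients supplied, your plan becomes essentially the paper's proof.
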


The proof of this theorem is based on the following ideas. By Kuhn's
theorem (Theorem~\ref{theo:kuhn}), there exists a Nash equilibrium in
the game~$\gfin$ played on the finite tree~$\tfin$, for any
depth~$\depth$.  By choosing an adequate depth~$\depth$,
Proposition~\ref{prop:NE fin ds inf} enables to extend this Nash
equilibrium to a Nash equilibrium in the infinite tree~$\tinf$, and
thus in~$\mathcal{G}$.  Let us detail these ideas.

We first recall Kuhn's theorem \cite{kuhn53ega}.  A \emph{preference
  relation} is a total reflexive transitive binary relation.

\begin{theorem}[Kuhn's theorem]\label{theo:kuhn}
  Let~$\tree$ be a \emph{finite} tree and~$\mathcal{G}_\tree$ a game
  played on~$\tree$. For each player~$i \in \Pi$, let~$\precsim_i$ be
  a preference relation on cost profiles.  Then there exists a
  strategy profile~$(\sigma_i)_{\ipi}$ such that for every player~$j
  \in \Pi$ and every strategy~$\sigma_j'$ of player~$j$
  in~$\mathcal{G}_\tree$ we have
  $$\payoff(\rho') \precsim_j \payoff(\rho)$$ where~$\rho=\langle
  (\sigma_i)_{\ipi} \rangle$ and~$\rho' = \langle
  \sigma_j',(\sigma_i)_{\ipimj} \rangle$.
\end{theorem}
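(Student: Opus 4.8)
The plan is to prove the statement by backward induction on the finite tree~$\tree$, the standard route for Kuhn's theorem. Since $\tree$ is finite it has bounded height, and every maximal history (leaf) determines a complete play of $\mathcal{G}_\tree$, hence a fixed cost profile. I would simultaneously construct the strategy profile $(\sigma_i)_{\ipi}$ and attach to every vertex~$h$ of $\tree$ a cost profile $\mu(h)$, namely the intended cost profile of the play obtained when all players follow the constructed strategies from~$h$ onwards.

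The construction proceeds from the leaves upward. For a leaf~$h$, I set $\mu(h)$ to be $\payoff(\rho_h)$, where $\rho_h$ is the play associated with~$h$. For an internal vertex~$h$ belonging to player~$i$, I let $i$ select a successor $hv$ maximising $\mu(hv)$ with respect to $\precsim_i$ among the finitely many children of~$h$, set $\sigma_i(h):=v$ accordingly, and put $\mu(h):=\mu(hv)$. Here the \emph{totality} of $\precsim_i$ is essential: on a nonempty finite set a total preorder always has a greatest element, so the maximiser exists, while reflexivity and transitivity guarantee it dominates every sibling. This defines $(\sigma_i)_{\ipi}$, and by construction $\payoff(\langle (\sigma_i)_{\ipi}\rangle)=\mu(v_0)$.

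To verify the equilibrium condition I would prove the following claim by induction on the height of the subtree rooted at~$h$: for every player~$j$, every vertex~$h$ and every strategy $\sigma_j'$ of~$j$, the play $\rho'_h$ starting at~$h$ and consistent with $\sigma_j'$ and with $\sigma_i$ for $i\neq j$ satisfies $\payoff(\rho'_h)\precsim_j\mu(h)$. The base case (a leaf) is reflexivity. For an internal vertex owned by some $k\neq j$, the successor is forced to be $\sigma_k(h)$ both in $\rho'_h$ and in the definition of $\mu(h)$, so the claim follows at once from the induction hypothesis. For an internal vertex owned by~$j$ himself, $\sigma_j'$ chooses some child $hv'$; the induction hypothesis gives $\payoff(\rho'_h)=\payoff(\rho'_{hv'})\precsim_j\mu(hv')$, and the maximality built into the construction gives $\mu(hv')\precsim_j\mu(hv^\ast)=\mu(h)$, where $hv^\ast$ is the child chosen by $\sigma_j$; transitivity of $\precsim_j$ then yields the claim. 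Applying it at the root $h=v_0$ gives $\payoff(\rho')\precsim_j\payoff(\rho)$, exactly the desired inequality.

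The only genuinely delicate point is that the deviation $\sigma_j'$ ranges over \emph{all} strategies — in particular arbitrary history-dependent ones, not merely the memoryless ones — so the inequality cannot be checked locally at $v_0$ alone. The subtree induction above is precisely what handles this, propagating $\payoff(\rho'_h)\precsim_j\mu(h)$ along the entire deviating branch. Beyond that the argument is routine, finiteness of $\tree$ ensuring both termination of the backward construction and existence of the required maxima.
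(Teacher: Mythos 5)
Your backward-induction argument is correct and complete: the totality, reflexivity and transitivity of each $\precsim_i$ together with the finiteness of $\tree$ do guarantee a $\precsim_i$-greatest child at every internal node, and your subtree induction correctly handles arbitrary history-dependent deviations $\sigma_j'$ rather than only memoryless ones. Note that the paper itself does not prove this statement --- it merely recalls Kuhn's theorem with a citation to \cite{kuhn53ega} --- so your proof fills in the standard textbook argument, which is exactly the intended one.
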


Note that $\payoff(\rho') \precsim_j \payoff(\rho)$ means that
player~$j$ prefers the cost profile of the play~$\rho$ than the one
of~$\rho'$, or they are equivalent for him.

\begin{corollary}\label{coro:kuhn}
  Let~$\mathcal{G}$ be a game and~$\tinf$ be the unraveling
  of~$G$. Let~$\gfin$ be the game played on the truncated tree
  of~$\tinf$ of depth~$\depth$, with~$\depth \geq 0$. Then there
  exists a Nash equilibrium in~$\gfin$.
\end{corollary}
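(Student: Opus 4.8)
The plan is to apply Kuhn's theorem (Theorem~\ref{theo:kuhn}) directly to the finite game~$\gfin$, choosing preference relations that encode exactly the Nash equilibrium condition of Definition~\ref{def:ne}. Since~$\gfin$ is played on the finite tree~$\tfin$, Kuhn's theorem is applicable as soon as suitable preference relations~$\precsim_i$ are fixed, so the whole argument reduces to selecting the right relations and unfolding the resulting conclusion.

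First I would define, for each player~$i \in \Pi$, a binary relation~$\precsim_i$ on cost profiles by setting $(x_k)_{k \in \Pi} \precsim_i (y_k)_{k \in \Pi}$ iff $y_i \leq x_i$; that is, player~$i$ weakly prefers a profile in which his own cost is smaller or equal, ignoring all the other coordinates. I would then check that each~$\precsim_i$ is indeed a preference relation in the required sense, namely total, reflexive and transitive: all three properties are inherited from the total order~$\leq$ on $\N \cup \{+\infty\}$ read off the $i$-th coordinate alone.

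Applying Kuhn's theorem with these relations yields a strategy profile~$(\sigma_i)_{\ipi}$ in~$\gfin$ such that, for every player~$j \in \Pi$ and every strategy~$\sigma_j'$ of player~$j$, we have $\payoff(\rho') \precsim_j \payoff(\rho)$, where $\rho = \langle (\sigma_i)_{\ipi} \rangle$ and $\rho' = \langle \sigma_j', (\sigma_i)_{\ipimj} \rangle$. Unfolding the definition of~$\precsim_j$, this inequality reads $\payoff_j(\rho) \leq \payoff_j(\rho')$, which is precisely the defining condition of a Nash equilibrium. Hence~$(\sigma_i)_{\ipi}$ is a Nash equilibrium in~$\gfin$.

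There is no genuine obstacle here; the corollary is a routine specialization of Kuhn's theorem. The single point requiring care is the \emph{orientation} of the preference relations: since the players minimize their costs whereas Kuhn's theorem is stated in terms of an abstract ``prefers'' relation, I must set up~$\precsim_j$ so that a lower cost for player~$j$ counts as more preferred. With this convention the conclusion $\payoff(\rho') \precsim_j \payoff(\rho)$ translates into the ``$\leq$'' of Definition~\ref{def:ne} and not into its reverse.
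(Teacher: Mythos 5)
Your proof is correct and follows essentially the same route as the paper: the paper's own proof of this corollary defines exactly the same preference relation ($(x_k)_{k\in\Pi} \precsim_j (y_k)_{k\in\Pi}$ iff $x_j \geq y_j$) and invokes Kuhn's theorem in the same way. Your version merely spells out the verification of totality, reflexivity and transitivity and the orientation check, which the paper leaves implicit.
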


\begin{proof}
  For each player~$j \in \Pi$, we define the relation~$\precsim_j$ on
  cost profiles in the following way: let~$(x_i)_{\ipi}$
  and~$(y_i)_{\ipi}$ be two cost profiles, we say that~$(x_i)_{\ipi}
  \precsim_j (y_i)_{\ipi}$ iff~$x_j \geq y_j$. It is clearly a
  preference relation which captures the Nash equilibrium.  The
  stra\-tegy profile~$(\sigma_i)_{\ipi}$ of Kuhn's theorem is then a
  Nash equilibrium in~$\gfin$.  \qed
\end{proof}

Proposition~\ref{prop:NE fin ds inf} states that it is possible to
extend a Nash equilibrium in~$\gfin$ to a Nash equilibrium in the
game~$\ginf$, if the depth~$d$ is equal to~$(|\Pi| + 1)\cdot
2\cdot|V|$. We obtain Theorem~\ref{theo:ex EN} as a consequence of
Corollary~\ref{coro:kuhn} and Proposition~\ref{prop:NE fin ds inf}.

\begin{proposition}\label{prop:NE fin ds inf}
  Let~$\mathcal{G}$ be a game and $\tinf$ be the unraveling
  of~$G$. Let~$\gfin$ be the game played on the truncated tree
  of~$\tinf$ of depth~$\depth=(|\Pi| + 1)\cdot 2\cdot|V|$.  If there
  exists a Nash equilibrium in the game~$\gfin$, then there exists a
  finite-memory Nash equilibrium in the game~$\ginf$.
\end{proposition}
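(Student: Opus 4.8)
The plan is to turn the finite Nash equilibrium on~$\gfin$ into a finite-memory equilibrium on~$\ginf$ by using its outcome as a \emph{reference play} that all players follow, while threatening any unilateral deviator with a punishment coming from the associated two-player zero-sum reachability game. Write $(\sigma_i)_{i\in\Pi}$ for the given Nash equilibrium of~$\gfin$ and let $\bar h$ be its outcome, a history of~$\mathcal{G}$ of length $\depth=(|\Pi|+1)\cdot 2\cdot|V|$. For each player~$j$, consider the zero-sum reachability game in which $j$ tries to reach $\F_j$ and the coalition $\Pi\setminus\{j\}$ tries to prevent it; by Proposition~\ref{propGTW02} this game is determined with memoryless strategies, the coalition having one that keeps $j$ out of $\F_j$ forever on its winning region, and $j$ having one that reaches $\F_j$ within $|V|-1$ edges on the complementary region.

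First I would establish the crucial \emph{punishability invariant}: along $\bar h$, whenever a player $j$ has not yet visited $\F_j$ and is more than $|V|-1$ steps away from the horizon, the current vertex lies in the coalition's winning region for $\F_j$. Indeed, were it in $j$'s own winning region, $j$ could abandon $\sigma_j$ and apply his memoryless forcing strategy, reaching $\F_j$ within $|V|-1$ additional edges, hence within depth $\depth$, against the fixed strategies $(\sigma_i)_{i\ne j}$; this would be a profitable deviation in~$\gfin$, contradicting that $(\sigma_i)_{i\in\Pi}$ is a Nash equilibrium. The same reasoning bounds from below the \emph{value} of each of these games at every vertex reachable by a deviation from a prefix of $\bar h$: the coalition can always force $j$'s cost to be at least $\payoff_j(\bar h)$, with the convention that the value is $+\infty$ (i.e.\ $j$ is kept out forever) exactly when $\payoff_j(\bar h)=+\infty$. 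This is the step I expect to be the main obstacle, since it must hold simultaneously for all players and at every relevant position, and it is precisely what forces the depth: one needs a margin of at least $|V|$ steps for the $|V|-1$ forcing argument to stay inside the finite horizon.

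Second I extract from $\bar h$ an ultimately periodic reference play $\bar\rho=u\,w^\omega$ that preserves every cost, i.e.\ $\payoff_i(\bar\rho)=\payoff_i(\bar h)$ for all~$i$. Since at most $|\Pi|$ players ever visit their goal, at most $|\Pi|$ positions of $\bar h$ carry a first visit; the $|\Pi|+1$ blocks of length $2|V|$ provided by the depth then contain, by pigeonhole, a block free of first visits and long enough (length $2|V|>|V|$) to exhibit a repeated vertex, yielding a cycle~$w$. Cutting the intermediate useless cycles of $\bar h$ and looping on $w$ produces $\bar\rho$: the finite part $u$ retains every first visit, so winners keep their exact cost; the loop $w$ avoids every goal not already reached, so losers keep cost $+\infty$; and, by the invariant above, all vertices of $uw$ lie in the coalition's winning region of each losing player, so deviations along the periodic tail remain punishable. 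Because $u$ and $w$ are short, the resulting strategies will have finite memory.

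Finally I define the strategy profile $(\tau_i)_{i\in\Pi}$ on~$\ginf$: as long as the history is a prefix of $\bar\rho$, each player follows $\bar\rho$; as soon as some player~$j$ deviates, every other player switches to his memoryless component of the coalition strategy against~$j$ described above. The memory needed is only \emph{still on~$\bar\rho$, and if not, who deviated first}, so $(\tau_i)_{i\in\Pi}$ is finite-memory. To check it is a Nash equilibrium, fix~$j$ and a deviation $\tau_j'$ with outcome $\rho'$. Deviating after $j$ has already reached $\F_j$ cannot help; otherwise, from the deviation point the coalition plays to hold $j$'s cost at least the value of the corresponding zero-sum game, which by the invariant is at least $\payoff_j(\bar\rho)=\payoff_j(\bar h)$. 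Hence $\payoff_j(\rho')\ge\payoff_j(\bar\rho)=\payoff_j(\langle(\tau_i)_{i\in\Pi}\rangle)$, so $\tau_j'$ is not profitable, and $(\tau_i)_{i\in\Pi}$ is the desired finite-memory Nash equilibrium.
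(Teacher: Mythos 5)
Your overall plan is the paper's: carve a first-visit-free window of length $2\cdot|V|$ out of the finite outcome by pigeonhole, extract a cycle from it, take $\abo$ as the new outcome, and threaten the first deviator with a coalition strategy from the qualitative zero-sum game $\mathcal{G}_j$. The genuine gap is in how you punish a player~$j$ whose goal \emph{is} visited by the reference play, say with $\payoff_j(\bar\rho)=p<+\infty$. You make the coalition switch to the memoryless strategy $\strat_{-j}$ of Proposition~\ref{propGTW02} and assert it ``holds $j$'s cost at least the value of the corresponding zero-sum game, which is at least $p$''. But the game you set up is \emph{qualitative}: its value is win/lose, and $\strat_{-j}$ guarantees nothing at a vertex of $j$'s winning region. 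Your own punishability invariant is only provable at a position $t$ with $t+|V|-1<p$ (otherwise forcing $\F_j$ within $|V|-1$ steps is not a profitable deviation in $\gfin$), so nothing prevents $j$ from deviating in the window $p-|V|\le t<p$ to a vertex $u$ of his winning region. There the Nash property of $(\sigma_i)_{\ipi}$ in $\gfin$ only constrains what happens against $(\sigma_i)_{\ipimj}$; against the \emph{arbitrary} behaviour of $\strat_{-j}$ outside the coalition's winning region, $j$ may be led into $\F_j$ at a time $<p$, making the deviation profitable. The paper's Lemma~\ref{lemma:NE same type} avoids this with a case split: if $\alpha$ visits $\F_j$, the coalition replays the original finite-tree strategies $(\sigma_i)_{\ipimj}$ up to depth $\depth$ and the Nash property in $\gfin$ together with $\payoff_j(\abo)\le|\alpha|<\depth$ closes the argument; only if $\alpha$ does not visit $\F_j$ does it use $\strat_{-j}$, which is exactly what Lemma~\ref{lemma:jeu djsn} licenses. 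Repairing your uniform punishment would require optimal memoryless strategies for the \emph{quantitative} min-cost reachability game, machinery you neither set up nor get from Proposition~\ref{propGTW02}.

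Two further inaccuracies. First, $\payoff_i(\bar\rho)=\payoff_i(\bar h)$ for all $i$ is false: a player whose first visit in $\bar h$ occurs after the chosen window gets cost $+\infty$ in $uw^\omega$, and this loss of type is unavoidable (Example~\ref{ex:diff-types}). It does not hurt bare existence, but it changes what you must prove for such a player, namely \emph{permanent} exclusion from $\F_j$, not a delay. Second, for that exclusion to be provable at the end of $uw$ you must pick the repeated vertex within the first $|V|+1$ positions of the window so that at least $|V|$ further first-visit-free steps remain (the paper's segment $\gamma$, with $|\alpha\beta|\le \nbband\cdot|V|$ and $|\alpha\beta\gamma|=(\nbband+1)\cdot|V|$); taking the cycle anywhere in the window, as you do, can leave no margin for the ``reach within $|V|-1$ edges'' contradiction.
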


%\begin{proof}[of Theorem~\ref{theo:ex EN}]
%    As a consequence of Corollary~\ref{coro:kuhn} and 
%    Proposition~\ref{prop:NE fin ds inf}, there exists a Nash equilibrium
%    in the game~$\ginf$, and thus in the game~$\mathcal{G}$. We thus
%    obtain Theorem~\ref{theo:ex EN}.
%\qed
%\end{proof}

The proof of Proposition~\ref{prop:NE fin ds inf} roughly works as
follows.  Let~$(\sigma_i)_{\ipi}$ be a Nash equilibrium in~$\gfin$.  A
well-chosen prefix~$\alpha\beta$, with~$\beta$ being a cycle, is first
extracted from the outcome~$\rho$ of~$(\sigma_i)_{\ipi}$.  The outcome
of the required Nash equilibrium~$(\tau_i)_{\ipi}$ in~$\ginf$ will be
equal to~$\abo$. As soon as a player deviates from this play, all the
other players form a coalition to punish him in a way that this
deviation is not profitable for him. These ideas are detailed in
Lemmas~\ref{lemma:jeu djsn} and~\ref{lemma:NE same type}.  One can see
Lemma~\ref{lemma:jeu djsn} as a technical result used to prove
Lemma~\ref{lemma:NE same type}, which is the main ingredient to show
Proposition~\ref{prop:NE fin ds inf}. The proof of
Lemma~\ref{lemma:jeu djsn} relies on a particular case (stated below)
of Proposition~\ref{propGTW02}. More precisely, we consider the
qualitative two-player zero-sum game~$\mathcal{G}_j$ played on the
graph~$G$, where player~$j$ plays in order to reach his goal
set~$\F_j$, against the coalition of all other players that wants to
prevent him from reaching his goal set. Player~$j$ plays on the
vertices from~$V_j$ and the coalition on~$V \setminus V_j$.

\begin{proposition}[\cite{GTW02}]\label{prop:gamei}
  Let~$\gamej$ be the qualitative two-player zero-sum reachability
  game associated to player~$j$. Then player~$j$ has a \memoryless\
  stra\-tegy~$\strat_j$ that enables him to reach~$\F_j$ within
  $|V|-1$ edges from each vertex~$v$ from which he wins the
  game~$\mathcal{G}_j$.  On the contrary, the coalition has a
  \memoryless\ strategy~$\strat_{-j}$ that forces the play to stay
  in~$V \setminus \F_j$ from each vertex~$v$ from which it wins the
  game~$\mathcal{G}_j$.
\end{proposition}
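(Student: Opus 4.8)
The plan is to observe that the game $\mathcal{G}_j$ is nothing but a qualitative two-player zero-sum reachability game in the sense of Definition~\ref{def-game-2j-sn}, so that the statement follows by a direct application of Proposition~\ref{propGTW02}.

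First I would spell out the identification. In $\mathcal{G}_j$, take player~1 to be player~$j$ and player~2 to be the coalition $\Pi \setminus \{j\}$. The underlying graph is $G$ itself: the vertex set is $V$, the partition is $V_1 = V_j$ and $V_2 = V \setminus V_j$, and the edge relation is $E$. The goal set $\F$ of player~1 is taken to be $\F_j$. Since every vertex of $G$ has at least one outgoing edge, this data indeed defines a game in the sense of Definition~\ref{def-game-2j-sn}, and the winning condition (player~1 wins a play iff it visits $\F_j$, player~2 wins iff it stays in $V \setminus \F_j$) is exactly the reachability/avoidance condition of that definition.

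Then I would simply invoke Proposition~\ref{propGTW02}. Its first assertion gives, for player~1 (i.e.\ player~$j$), a single \memoryless\ strategy $\strat_j$ that forces a visit to $\F_j$ within at most $|V|-1$ edges from every vertex from which player~$j$ wins $\mathcal{G}_j$; this is the first half of the statement. Its second assertion gives, for player~2 (i.e.\ the coalition), a single \memoryless\ strategy $\strat_{-j}$ that keeps the play inside $V \setminus \F_j$ from every vertex from which the coalition wins; this is the second half. The determinacy part of Proposition~\ref{propGTW02} moreover guarantees that from each vertex exactly one of the two sides wins, so the two strategies together account for all of $V$.

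There is essentially no obstacle here: the whole content of the statement is already packaged in Proposition~\ref{propGTW02}, which itself rests on Martin's determinacy theorem and the standard attractor analysis for reachability games. The only point worth remarking on is the legitimacy of treating the $|\Pi|-1$ opponents as a single player~2, and this is harmless, since in a zero-sum reachability game the outcome depends only on the bipartition of the vertices into the two sides and on the edge relation, not on any finer assignment of the opponent's vertices among the individual players. Hence merging $V \setminus V_j$ under the control of one adversary changes nothing, and the proposition follows.
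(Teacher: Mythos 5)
Your proposal is correct and matches the paper's treatment: the paper states Proposition~\ref{prop:gamei} as a particular case of Proposition~\ref{propGTW02}, obtained by viewing $\mathcal{G}_j$ as the two-player zero-sum reachability game in which player~$j$ controls $V_j$ and the coalition controls $V \setminus V_j$, with goal set $\F_j$. Your additional remark that merging the opponents into a single adversary is harmless is a reasonable point to make explicit, but the identification and the appeal to Proposition~\ref{propGTW02} are exactly the paper's argument.
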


\begin{lemma}\label{lemma:jeu djsn}
  Suppose~$d \geq 0$. Let~$(\sigma_i)_{\ipi}$ be a Nash equilibrium
  in~$\gfin$ and~$\rho$ the (finite) outcome of~$(\sigma_i)_{\ipi}$.
  Assume that~$\rho$ has a prefix~$\alpha\beta\gamma$, where~$\beta$
  contains at least one vertex, such that
  \begin{align*}
    & \visit(\alpha)=\visit(\alpha\beta\gamma)\\
    & \last(\alpha)=\last(\alpha\beta)\\
    & |\alpha\beta| \leq \nbband\cdot|V|\\
    & |\alpha\beta\gamma| = (\nbband+1)\cdot|V|
  \end{align*}
  for some~$\nbband \geq 1$.\\
  \noindent Let~$j \in \Pi$ be such that~$\alpha$ does not
  visit~$\F_j$.  Consider the qualitative two-player zero-sum
  game~$\gamej$.  Then for all histories~$h\vertu$ of~$\mathcal{G}$
  consistent with~$(\sigma_i)_{\ipimj}$ and such that
  $|h\vertu|\leq|\alpha\beta|$, the coalition of the players~$i
  \not=j$ wins the game~$\mathcal{G}_j$ from~$\vertu$.
\end{lemma}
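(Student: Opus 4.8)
The plan is to argue by contradiction using the determinacy of~$\gamej$. Suppose the conclusion fails: there is a history~$h\vertu$ consistent with~$(\sigma_i)_{\ipimj}$, with~$|h\vertu|\leq|\alpha\beta|$, from whose last vertex~$\vertu$ the coalition does \emph{not} win~$\gamej$. By Proposition~\ref{prop:gamei} the game~$\gamej$ is determined, so player~$j$ wins it from~$\vertu$; hence he has a \memoryless\ strategy~$\strat_j$ forcing a visit to~$\F_j$ within at most~$|V|-1$ edges from~$\vertu$, whatever the coalition does. From this I would build a deviation~$\sigma_j'$ of player~$j$ in~$\gfin$ and show it is profitable, contradicting the Nash equilibrium property.

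I would define~$\sigma_j'$ to play, at the~$j$-turns occurring along~$h\vertu$, the moves prescribed by~$h\vertu$, and then, once~$\vertu$ has been reached, to follow~$\strat_j$. Because~$h\vertu$ is consistent with~$(\sigma_i)_{\ipimj}$, when player~$j$ plays~$\sigma_j'$ against the coalition strategies~$(\sigma_i)_{\ipimj}$ the produced play is steered through~$h\vertu$; from~$\vertu$ onward~$\strat_j$ forces~$\F_j$ to be reached within at most~$|V|-1$ further edges. Hence the outcome~$\rho'=\langle \sigma_j',(\sigma_i)_{\ipimj}\rangle$ visits~$\F_j$ at some index at most~$|h\vertu|+|V|-1\leq |\alpha\beta|+|V|-1\leq \nbband\cdot|V|+|V|-1=(\nbband+1)\cdot|V|-1$, which is strictly smaller than~$|\alpha\beta\gamma|=(\nbband+1)\cdot|V|\leq d$. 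In particular this visit occurs within the truncated tree~$\tfin$, so~$\payoff_j(\rho')\leq (\nbband+1)\cdot|V|-1$.

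It remains to compare with~$\payoff_j(\rho)$. Since~$\alpha$ does not visit~$\F_j$ and~$\visit(\alpha)=\visit(\alpha\beta\gamma)$, the prefix~$\alpha\beta\gamma$ does not visit~$\F_j$ either; as~$|\alpha\beta\gamma|=(\nbband+1)\cdot|V|$, none of the vertices~$\rho_0,\ldots,\rho_{(\nbband+1)\cdot|V|}$ lies in~$\F_j$, whence~$\payoff_j(\rho)\geq (\nbband+1)\cdot|V|+1$ (possibly~$+\infty$). Therefore~$\payoff_j(\rho)>\payoff_j(\rho')$, so~$\sigma_j'$ is a profitable deviation for player~$j$ with respect to~$(\sigma_i)_{\ipi}$, contradicting that~$(\sigma_i)_{\ipi}$ is a Nash equilibrium in~$\gfin$. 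This contradiction establishes the claim.

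The only genuinely delicate point is verifying that~$\sigma_j'$ is a legitimate deviation whose outcome really reaches~$\F_j$ strictly before depth~$d$. This rests on two facts: the consistency of~$h\vertu$ with the coalition's strategies, which guarantees the deviating play can be routed through~$h\vertu$; and the uniform bound from Proposition~\ref{prop:gamei} that~$\strat_j$ wins within~$|V|-1$ edges against \emph{any} coalition behaviour, so that the hypothesis~$|\alpha\beta|\leq\nbband\cdot|V|$ keeps the entire play inside depth~$d$. Note that the cycle condition~$\last(\alpha)=\last(\alpha\beta)$ is not used in this proof; it only becomes relevant later, when the finite outcome is turned into an infinite one.
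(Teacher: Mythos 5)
Your proof is correct and follows essentially the same route as the paper's: argue by contradiction, use determinacy to give player~$j$ a \memoryless\ winning strategy reaching~$\F_j$ within~$|V|-1$ edges from~$\vertu$, route the deviation through~$h\vertu$ (legitimate since~$h\vertu$ is consistent with~$(\sigma_i)_{\ipimj}$), and bound~$\payoff_j(\rho')$ by~$(\nbband+1)\cdot|V|-1$ against~$\payoff_j(\rho)>(\nbband+1)\cdot|V|$ to contradict the Nash equilibrium property in~$\gfin$. The only cosmetic differences are that you merge the paper's two cases ($\payoff_j(\rho)$ finite or infinite) into one uniform lower bound, and you correctly observe that the cycle condition~$\last(\alpha)=\last(\alpha\beta)$ is unused here.
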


Condition~$\visit(\alpha)=\visit(\alpha\beta\gamma)$ means that
if~$\F_i$ is visited by~$\alpha\beta\gamma$, it has already been
visited by~$\alpha$. Condition~$\last(\alpha)=\last(\alpha\beta)$
means that~$\beta$ is a cycle.  The play~$\rho$ of
Lemma~\ref{lemma:jeu djsn} is illustrated in Figure~\ref{fig:slicing
  play}.

Lemma~\ref{lemma:jeu djsn} says in particular that the players~$i
\not=j$ can play together to prevent player~$j$ from reaching his goal
set~$\F_j$, in case he deviates from the play~$\alpha\beta$
(as~$\alpha\beta$ is consistent with~$(\sigma_i)_{\ipimj}$).  We
denote by~$\strat_{-j}$ the \memoryless\ winning strategy of the
coalition.  For each player~$i \not= j$, let~$\strat_{i,j}$ be the
\memoryless\ strategy of player~$i$ in~$\mathcal{G}$ induced
by~$\strat_{-j}$.

\begin{proof}[of Lemma~\ref{lemma:jeu djsn}]
  By contradiction suppose that player~$j$ wins the 
  game~$\mathcal{G}_j$ from~$\vertu$. By Proposition~\ref{prop:gamei}
  player~$j$ has a \memoryless\ winning strategy~$\strat_j$
  which enables him to reach his goal set~$\F_j$ within
  at most~$|V|-1$ edges from~$\vertu$. We show that~$\strat_j$
  leads to a profitable deviation for player~$j$ w.r.t.~$(\sigma_i)_{\ipi}$ 
  in the game~$\gfin$, which is impossible
  by hypothesis. 

  Let~$\rho'$ be a play in~$\gfin$ such that~$h\vertu$ is a prefix
  of~$\rho'$, and from~$\vertu$,
  player~$j$ plays according to the strategy~$\strat_j$ and
  the other players~$i \not= j$ continue to play according to~$\sigma_i$.
  %\fbox{attention ~$\vertu \in V$ peut etre repete dans~$\alpha\beta$!!}
    As the play~$\rho'$ is consistent with the \memoryless\ 
  winning strategy~$\strat_j$ from~$\vertu$, it visits~$\F_j$
  and we have
  \begin{align*}
    \payoff_j(\rho') & \leq  |h\vertu| + |V| &
    \text{(by Proposition~\ref{prop:gamei})} \\
    & \leq  (\nbband+1)\cdot|V| & \text{(by hypothesis)} \\
    & \leq  \depth & \text{(as~$\alpha\beta\gamma\leq \rho$).}
  \end{align*}
 
  We consider the following two cases.  If $\payoff_j(\rho)=+\infty$
  (i.e. $\rho$ does not visit~$\F_j$), we have
  $$\payoff_j(\rho') < \payoff_j(\rho) = +\infty.$$ On the contrary,
  if $\payoff_j(\rho)<+\infty$ (i.e. $\rho$ visits~$\F_j$, but after
  the prefix~$\alpha\beta\gamma$ by hypothesis), then we have
  $$\payoff_j(\rho') < \payoff_j(\rho)$$ as~$\payoff_j(\rho) >
  (\nbband+1)\cdot|V|$.
  
  Since~$\rho'$ is consistent with~$(\sigma_i)_{\ipimj}$, the 
  strategy of player~$j$ induced by the play~$\rho'$ is
  a profitable deviation for player~$j$ w.r.t.~$(\sigma_i)_{\ipi}$ in both 
  cases, which is a contradiction.
  \qed
\end{proof}

Now that we have proved Lemma~\ref{lemma:jeu djsn}, we use it in order
to obtain Lemma~\ref{lemma:NE same type}, which states that one can
define a Nash equilibrium~$(\tau_i)_{\ipi}$ in the game~$\ginf$, based
on the Nash equilibrium~$(\sigma_i)_{\ipi}$ in the game~$\gfin$.

\begin{lemma}\label{lemma:NE same type}
  Suppose~$d \geq 0$. Let~$(\sigma_i)_{\ipi}$ be a Nash equilibrium 
  in~$\gfin$ and~$\alpha\beta\gamma$ be a prefix of~$\rho=\langle
  (\sigma_i)_{\ipi} \rangle$ as defined in Lemma~\ref{lemma:jeu djsn}.
  Then there exists a Nash equilibrium~$(\tau_i)_{\ipi}$ in the
  game~$\ginf$.  Moreover~$(\tau_i)_{\ipi}$ is finite-memory, and
  $\valtype((\tau_i)_{\ipi})=\visit(\alpha)$.
\end{lemma}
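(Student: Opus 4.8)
The plan is to take as the outcome of the desired equilibrium the lasso-shaped play $\pi=\abo$, which is a genuine play of $\ginf$ because $\last(\alpha)=\last(\alpha\beta)$ makes $\beta$ a cycle. I would define $(\tau_i)_{\ipi}$ as follows: as long as the current history is a prefix of $\pi$, every player $i$ plays the move prescribed by $\pi$; as soon as some player $j$ is the first to leave $\pi$, each other player $i\ne j$ switches forever to the \memoryless\ coalition strategy $\strat_{i,j}$ associated with the zero-sum game $\gamej$. Here I take $\strat_{-j}$ to be a \memoryless\ strategy of the coalition that not only keeps the play inside $V\setminus\F_j$ from every vertex of its winning region (Lemma~\ref{lemma:jeu djsn}, Proposition~\ref{prop:gamei}), but from the remaining vertices postpones a visit to $\F_j$ for as long as possible.

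The three routine properties come first. If nobody deviates, the outcome is $\pi$, so $\valtype((\tau_i)_{\ipi})=\visit(\pi)$; since $\beta$ is a cycle, $\pi$ visits exactly the vertices of $\alpha\beta$, and from $\visit(\alpha)\subseteq\visit(\alpha\beta)\subseteq\visit(\alpha\beta\gamma)=\visit(\alpha)$ we obtain $\visit(\pi)=\visit(\alpha)$, as required. Each $\tau_i$ is finite-memory: before any deviation the only relevant information is the (cyclic) position along $\alpha\beta$, and after a deviation it is merely the identity of the first deviator together with the current vertex, the coalition part being \memoryless; hence the relation $\approx_{\tau_i}$ has finite index.

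The core of the proof is checking that $(\tau_i)_{\ipi}$ is a Nash equilibrium. I would fix a player $j$, a deviation $\tau_j'$, and let $\rho'$ be the resulting outcome and $r$ the position of the first deviation (if there is none then $\rho'=\pi$ and there is nothing to prove). When $\alpha$ does not visit $\F_j$ we have $\payoff_j(\pi)=+\infty$, and Lemma~\ref{lemma:jeu djsn} applies directly: the deviation vertex $\pi_r$ lies on $\alpha\beta$, the coalition wins $\gamej$ from it, hence also from the successor chosen by player~$j$ (a vertex he owns), so $\strat_{i,j}$ keeps $\rho'$ out of $\F_j$ forever and $\payoff_j(\rho')=+\infty$. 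When $\alpha$ visits $\F_j$, let $p=\payoff_j(\pi)\le|\alpha|$ be the position of the first visit; a deviation at a position $r\ge p$ cannot help since $\rho'$ then still meets $\F_j$ at $p$, so the only candidate profitable deviations occur at some $r<p$.

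The main obstacle is precisely this last configuration: player~$j$ has already secured cost $p$ in $\pi$ but tries to lower it by branching off at $r<p$. I would split on the zero-sum game $\gamej$ at $\pi_r$. If the coalition wins $\gamej$ from $\pi_r$, then $\strat_{i,j}$ prevents $\F_j$ after $r$, giving $\payoff_j(\rho')=+\infty>p$. Otherwise player~$j$ wins $\gamej$ from $\pi_r$, so by Proposition~\ref{prop:gamei} he can force $\F_j$ within $|V|-1$ edges; playing that forcing strategy against $(\sigma_i)_{\ipimj}$ inside $\gfin$ (legitimate since $r<p\le|\alpha|\le\depth$) reaches $\F_j$ from $\pi_r$, and because $(\sigma_i)_{\ipi}$ is a Nash equilibrium of $\gfin$ its reaching time cannot be smaller than $p-r$. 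Since $\strat_{i,j}$ is chosen to delay $\F_j$ maximally, it performs at least as well against player~$j$ as $(\sigma_i)_{\ipimj}$ does, so $\F_j$ is not reached before position $p$ along $\rho'$, i.e. $\payoff_j(\rho')\ge p$. In every case player~$j$ gains nothing, and $(\tau_i)_{\ipi}$ is a Nash equilibrium. The delicate point, which I expect to require the most care, is exactly this comparison between the punishing coalition in $\ginf$ and the original equilibrium strategies in $\gfin$ for players already satisfied by $\alpha$, since for such players Lemma~\ref{lemma:jeu djsn} no longer supplies a blocking coalition.
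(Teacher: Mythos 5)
Your construction coincides with the paper's on the outcome $\abo$, on the type computation, on the finite-memory argument, and on the punishment of a deviator~$j$ whose goal is \emph{not} visited by~$\alpha$ (the coalition switches to the \memoryless\ winning strategy of~$\gamej$, justified by Lemma~\ref{lemma:jeu djsn}). Where you genuinely diverge is the punishment of a player~$j$ whose goal \emph{is} visited by~$\alpha$, say at position~$p\leq|\alpha|$. The paper handles this case with no new zero-sum machinery: after such a deviation the coalition simply replays the finite-tree equilibrium strategies~$(\sigma_i)_{\ipimj}$ up to depth~$\depth$ (and then arbitrarily), so that a profitable deviation in~$\ginf$, truncated at depth~$\depth$, is verbatim a profitable deviation w.r.t.~$(\sigma_i)_{\ipi}$ in~$\gfin$ --- an immediate contradiction. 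You instead punish every deviator with a \memoryless\ coalition strategy for~$\gamej$, extended on player~$j$'s winning region by a \memoryless\ strategy that ``postpones a visit to $\F_j$ for as long as possible'', and then argue that this maximal-delay strategy guarantees at least as long a delay as $(\sigma_i)_{\ipimj}$ does, the latter guaranteeing $p-r$ steps from~$\pi_r$ by the Nash property of~$(\sigma_i)_{\ipi}$ in~$\gfin$. That comparison is sound \emph{once the maximal-delay strategy exists}: optimality over all coalition strategies, including the history-dependent one induced by~$(\sigma_i)_{\ipimj}$ from~$\pi_r$, yields $\payoff_j(\rho')\geq r+(p-r)=p$.

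The gap is precisely that existence claim. Proposition~\ref{prop:gamei}, your only source for~$\strat_{-j}$, is purely qualitative: it provides a \memoryless\ strategy avoiding~$\F_j$ forever on the coalition's winning region and says nothing about delaying~$\F_j$ on player~$j$'s winning region. That the quantitative game ``maximize the number of edges before~$\F_j$'' admits a \memoryless\ coalition strategy achieving the optimal delay (the supremum over \emph{all} coalition strategies) from every vertex is true --- it is the \memoryless\ determinacy of reachability-time (shortest-path) games with unit costs --- but it is neither stated nor proved in the paper, and it is exactly the ingredient the replay-of-$(\sigma_i)$ device is designed to avoid. As written, your proof is therefore incomplete at its self-identified delicate point; either supply or cite that determinacy result, or adopt the paper's trick of reusing~$(\sigma_i)_{\ipimj}$ inside~$\tfin$ for deviators already satisfied by~$\alpha$, which also removes the need for the case split on who wins~$\gamej$ from~$\pi_r$.
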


\begin{proof}
  Let us set~$\Pi=\{1,\ldots,n\}$.  As~$\alpha$ and~$\beta$ end in the same
  vertex, we can consider the infinite play~$\abo$ in the
  game~$\ginf$.  Without loss of generality we can order the
  players~$i \in \Pi$ so that
  \begin{align*}
    \forall i \leq \jk & \ \ \ &  \payoff_i(\abo) < +\infty  
    &\ \ \ \  \text{($\alpha$ visits~$\F_i$)}\\
    \forall i > \jk & & \payoff_i(\abo) = +\infty  &\ \ \ \
    \text{($\alpha$ does not visit~$\F_i$)}
  \end{align*}
  where~$0 \leq \jk \leq n$.  In the second case, notice that~$\rho$
  could visit~$\F_i$ (but after the prefix~$\alpha\beta\gamma$).

  The Nash equilibrium~$(\tau_i)_{\ipi}$ required by
  Lemma~\ref{lemma:NE same type} is intuitively defined as follows.
  First the outcome of~$(\tau_i)_{\ipi}$ is exactly~$\abo$.  Secondly
  the first player~$j$ who deviates from~$\abo$ is punished by the
  coalition of the other players in the following way. If~$j \leq \jk$
  and the deviation occurs in the tree~$\tfin$, then the coalition
  plays according to~$(\sigma_i)_{\ipimj}$ in this tree. It prevents
  player~$j$ from reaching his goal set~$\F_j$ faster than in~$\abo$.
  And if~$j > \jk$, the coalition plays according
  to~$(\strat_{i,j})_{\ipimj}$ (given by Lemma~\ref{lemma:jeu djsn})
  so that player~$j$ does not reach his goal set at all.

  We begin by defining a punishment function~$\pun$ on the vertex
  set~$H$ of~$\tinf$ such that~$\pun(h)$ indicates the first
  player~$j$ who has deviated from~$\abo$, with respect to~$h$. We
  write~$\pun(h)= \bot$ if no deviation has occurred. For $v_0$, we
  define $\pun(v_0) = \bot$ and for~$h \in V^+$ such that~$\last(h)
  \in V_i$ and $v \in V$, we let:
  \[ \pun(hv) = \left\{
  \begin{array}{ll}
    \bot & \mbox{ if~$\pun(h)= \bot$ and~$hv < \abo$,}\\
    i & \mbox{ if~$\pun(h)= \bot$ and~$hv \not< \abo$,}\\
    \pun(h) & \mbox{ otherwise ($\pun(h) \not=\bot$)\,.} 
  \end{array}\right. \]
  
  The Nash equilibrium~$(\tau_i)_{\ipi}$ is then defined as follows:
  let~$h$ be a history ending in a vertex of~$V_i$,
  \begin{align}\label{eq:def taui}   \tau_i(h) = \left\{
  \begin{array}{ll}
    v & \mbox{ if~$\pun(h)=\bot$ ($h<\abo$); such
      that~$hv<\abo$,}\\ 
    \textit{arbitrary} & \mbox{
      if~$\pun(h)=i$,}\\ 
    \strat_{i,\pun(h)}(h) & \mbox{ if~$\pun(h)
      \not= \bot,i$ and~$\pun(h)>\jk$,}\\ 
    \sigma_i(h) & \mbox{
      if~$\pun(h) \not= \bot,i$, $\pun(h) \leq \jk$ and $|h| <
      \depth$,}\\ 
    \textit{arbitrary} & \mbox{ otherwise ($\pun(h)
      \not= \bot,i$, $\pun(h) \leq \jk$ and $|h| \geq \depth$)}
  \end{array}\right.\end{align}
  where \textit{arbitrary} means that the next vertex is chosen
  arbitrarily (in a \memoryless\ way). Clearly the outcome
  of~$(\tau_i)_{\ipi}$ is the play~$\abo$,
  and~$\valtype((\tau_i)_{\ipi})$ is equal to~$\visit(\alpha)$
  ($=\visit(\alpha\beta)$).
  
  It remains to prove that~$(\tau_i)_{\ipi}$ is a finite-memory Nash
  equilibrium in the game~$\ginf$.
  %The end of this proof can be 
  %found in the Appendix (Section~\ref{sec:ann ex EN}).
  %\input{proof-lemma-NE-same-type} ci-dessous
  %\begin{proof}[End of the proof of Lemma~\ref{lemma:NE same type}]
  We first show that the strategy profile~$(\tau_i)_{\ipi}$ defined in
  Equation~\eqref{eq:def taui} is a Nash equilibrium in the
  game~$\ginf$.
  Let~$\tau_j'$ be a strategy of player~$j$. We show that this is not
  a profitable deviation for player~$j$ w.r.t.~$(\tau_i)_{\ipi}$. We
  distinguish the following two cases:
  \begin{enumerate}
  \item[$(i)$] $j \leq \jk$ ($\payoff_j(\abo) < +\infty$,~$\alpha$
    visits~$\F_j$).
    
    \smallskip To improve his cost, player~$j$ has no incentive to
    deviate after the prefix~$\alpha$. Thus we assume that the
    strategy~$\tau_j'$ causes a deviation from a vertex visited
    in~$\alpha$. By Equation~\eqref{eq:def taui} the other players
    first play according to~$(\sigma_i)_{\ipimj}$ in~$\gfin$, and then
    in an arbitrary way.

    Suppose that~$\tau_j'$ is a profitable deviation for player~$j$
    w.r.t.~$(\tau_i)_{\ipi}$ in the game~$\ginf$.  Let us set~$\playtau=
    \langle (\tau_i)_{\ipi} \rangle$ and~$\playdev= \langle
    \tau_j',(\tau_i)_{\ipimj} \rangle$. Then
    $$\payoff_j(\playdev) < \payoff_j(\playtau).$$ On the other hand
    we know that
    $$\payoff_j(\playtau) = \payoff_j(\rho) \leq |\alpha|.$$ So if we
    limit the play~$\playdev$ in~$\ginf$ to its prefix of
    length~$\depth$, we get a play~$\rho'$ in~$\gfin$ such that
    $$\payoff_j(\rho')=\payoff_j(\playdev) < \payoff_j(\rho).$$ As the
    play~$\rho'$ is consistent with the
    strategies~$(\sigma_i)_{\ipimj}$ by Equation~\eqref{eq:def taui},
    the strategy~$\tau_j'$ restricted to the tree~$\tfin$ is a
    profitable deviation for player~$j$ w.r.t.~$(\sigma_i)_{\ipi}$ in
    the game~$\gfin$. This contradicts the fact
    that~$(\sigma_i)_{\ipi}$ is a Nash equilibrium in this game.
    \smallskip
  \item[$(ii)$] $j > \jk$ ($\payoff_j(\abo) = +\infty$,~$\abo$ does not
    visit~$\F_j$).

    \smallskip If player~$j$ deviates from~$\abo$ (with the
    strategy~$\tau_j'$), by Equation~\eqref{eq:def taui} the other
    players combine against him and play according to~$\strat_{-j}$.
    By Lemma~\ref{lemma:jeu djsn} this coalition wins the
    game~$\mathcal{G}_j$ from any vertex visited by~$\abo$.  So the
    strategy~$\strat_{-j}$ of the coalition keeps the play~$\langle
    \tau_j', (\tau_i)_{\ipimj}\rangle$ away from the set~$\F_j$,
    whatever player~$j$ does. Therefore~$\tau_j'$ is not a profitable
    deviation for player~$j$ w.r.t.~$(\tau_i)_{\ipi}$ in the
    game~$\ginf$.
  \end{enumerate}

%  \fbox{finite-memory \`a mettre \`a part pour utiliser dans ES}\\
  We now prove that $(\tau_i)_{\ipi}$ is a finite-memory strategy profile.
%We prove that $(\tau_i)_{\ipi}$ has finite memory.
%We use in this proof the notations of Definition~\eqref{eq:def taui}.
According to the definition of finite-memory strategy (see
Section~\ref{sec:prelim}) we have to prove that each 
relation~$\approx_{\tau_i}$ on~$H$ has finite index (recall that
$h \approx_{\tau_i} h'$ if $\tau_i(h\delta)=\tau_i(h'\delta)$ for 
all~$\delta \in V^*V_i$).
In this aim we define for each 
player~$i$ an equivalence relation~$\sim_{\tau_i}$ with finite index such
that $$\forall h,h' \in H,\ \ \ 
h \sim_{\tau_i} h' \Rightarrow h \approx_{\tau_i} h'.$$

We first define an equivalence relation~$\sim_{\pun}$ with finite index related
to the punishment function~$\pun$. For all prefixes~$h$,~$h'$ 
of~$\alpha\beta^{\omega}$, i.e. such that no player is punished, 
this relation does not distinguish two histories that are
identical except for a certain number of cycles~$\beta$. 
For the other histories it just has to remember the first player, say~$i$,
who has deviated. The definition of~$\sim_{\pun}$ is as follows:
\begin{align*}
  & h \sim_{\pun} h' & &
  \text{ if } h = \alpha\beta^l\beta',\ h'= \alpha\beta^m\beta',\ 
   \beta' < \beta,\ l,m \geq 0\\
  & hv \sim_{\pun} h'v' & & \text{ if } v,v' \in V_i ,\ 
  h,h' < \abo, \text{ but } hv, h'v' \not < \abo\\
  & hv \sim_{\pun} hv\delta & &\text{ if } h < \abo,\ hv \not< \abo,\ 
  \delta \in V^*.
\end{align*}%\fbox{prefixes de~$\alpha$ dans classes separees! pas necessaire?}
The relation~$\sim_{\pun}$ is an equivalence relation on~$H$ with finite
index.

We now turn to the definition of~$\sim_{\tau_i}$. It is based on the
definition of~$\tau_i$ (given in~\eqref{eq:def taui})
and~$\sim_{\pun}$. To get an equivalence with finite index we proceed
as follows. Recall that each strategy~$\strat_{i,\pun(h)}$ is \memoryless\
and when a player plays arbitrarily, his strategy is also \memoryless.
Furthermore notice that, in the definition of~$\tau_i$, 
the strategy~$\sigma_i$ is 
only applied to histories~$h$ with length~$|h| < \depth$. 
For histories~$h$ such that~$\tau_i(h)=v$ with~$hv < \abo$, 
it is enough to remember information
with respect to~$\alpha\beta$ as already done for~$\sim_{\pun}$. 
Therefore for~$h,h' \in H$ we define~$\sim_{\tau_i}$ 
in the following way:
\begin{align*}
  h \sim_{\tau_i} h' & \text{\ \ if\ \ } h \sim_{\pun} h' & \text {and\ \ }
  &  \big(\pun(h)=\bot \\
  & & & \text{ or } \pun(h) = i \text{ and } \last(h)=\last(h')\\
  & & & \text{ or } \pun(h) \not= \bot, i,\ \pun(h) > \jk  \text{ and } 
  \last(h)=\last(h')\\
  & & & \text{ or } \pun(h) \not= \bot, i,\ \pun(h) \leq \jk,\ 
  |h|,|h'| \ge \depth \text{ and}\\
  & & & \ \ \ \ \  \last(h)=\last(h')\big).
\end{align*}
Notice that this relation satisfies
$$h \sim_{\tau_i} h' \Rightarrow \tau_i(h)=\tau_i(h') \text{ and }
\last(h)=\last(h')$$
and has finite index.
Therefore if~$h \sim_{\tau_i} h'$, then~$h \approx_{\tau_i} h'$ and the 
relation~$\approx_{\tau_i}$ has finite index.

  \qed
\end{proof}

We can now proceed to the proof of Proposition~\ref{prop:NE fin ds inf}.

\begin{proof}[of Proposition~\ref{prop:NE fin ds inf}]
  Let us set~$\Pi=\{1,\ldots,n\}$ and $\depth=(n+1)\cdot 2\cdot|V|$.
  Let~$(\sigma_i)_{\ipi}$ be a Nash equilibrium in the game~$\gfin$
  and~$\rho$ its outcome.

  To be able to use Lemma~\ref{lemma:NE same
    type}, we consider the prefix~$\la\lap$ of~$\rho$ of minimal
  length such that
  \begin{align}
    \exists \, \nbband \geq 1 \ \ \ \ \ \ & |\la| = (\nbband-1)\cdot|V|\notag\\
    & |\la\lap|=(\nbband+1)\cdot|V|\notag\\
    & \visit(\la)=\visit(\la\lap)\,. \label{eq:v(l)=v(ll')}
  \end{align}
   The following statements are true.
  \begin{itemize}
 % \item $\visit(\la) \subseteq \visit(\rho)$.
 % \item If~$\visit(\la) \subset \visit(\rho)$ and~$i \in
 %   \visit(\rho) \setminus \visit(\la)$, then~$\rho$
 %  visits~$\F_i$ \emph{after} the prefix~$\la\lap$.
  \item[$\bullet$] $\nbband \leq 2\cdot n+1$.
  \item[$\bullet$] If~$\visit(\la) \varsubsetneq \visit(\rho)$,
    then~$\nbband < 2\cdot n+1$.
  \end{itemize}
  Indeed the first statement results from the fact that in the worst
  case, the play~$\rho$ visits the goal set of a new player in each
  prefix of length~$i\cdot2\cdot|V|$,~$1 \leq i \leq n$, i.e.~$|\la|=
  n\cdot 2\cdot|V|$. It follows that~$\la\lap$ exists as a prefix
  of~$\rho$, because the length~$\depth$ of~$\rho$ is equal
  to~$(n+1)\cdot 2\cdot|V|$ by hypothesis.  Thus~$\visit(\la)\subseteq
  \visit(\rho)$.  Suppose that there exists~$i \in \visit(\rho)
  \setminus \visit(\la)$, then~$\rho$ visit~$\F_i$ after the
  prefix~$\la\lap$ by Equation~\eqref{eq:v(l)=v(ll')}. The second
  statement follows easily.

  Given the length of~$\lap$, one vertex of~$V$ is visited at least
  twice by~$\lap$. More precisely, we can write
  \begin{align*}
    \la\lap = \alpha\beta\gamma  \text{\ \ \  with\ \ \ }
    & \last(\alpha)= \last(\alpha\beta)\notag\\
    & |\alpha| \geq (\nbband -1) \cdot |V| \notag\\
    & |\alpha\beta| \leq \nbband\cdot|V|\,.%\label{eq:lg ab}
  \end{align*}
  In particular,~$|\la| \leq |\alpha|$. See Figure~\ref{fig:slicing
    play}.  We have~$\visit(\alpha)=\visit(\alpha\beta\gamma)$, and
  $|\alpha\beta\gamma| =(\nbband+1)\cdot|V|$.

  %\documentclass{article}

%\usepackage{pgf,tikz}
%\usetikzlibrary {arrows}
%\usetikzlibrary {decorations}
%\usetikzlibrary {snakes}

%\begin{document}

\begin{figure}[h!]
    \centering
    \begin{tikzpicture}[yscale=.5,xscale=.6]
      \everymath{\scriptstyle}
      \draw (0,8) -- (-5,0);
      \draw (0,8) -- (5,0);

      \draw[fill=black] (0,8) circle (2pt);

      \draw (-5,0) -- (5,0);

      \draw[very thin,dotted] (-4,2) -- (5.1,2);
      \draw[very thin,dotted] (-4,3) -- (5.1,3);
      \draw[very thin,dotted] (-4,4) -- (5.1,4);
      \draw[very thin,dotted] (-4,6) -- (5.1,6);
      \draw[very thin,dotted] (-4,7) -- (5.1,7);
      \draw[very thin,dotted] (-4,8) -- (5.1,8);

      \path (5.1,7) node[right] (q0) {$|V|$};
      \path (5.1,6) node[right] (q0) {$2\cdot|V|$};
      \path (5.1,4) node[right] (q0) {$(\nbband-1)\cdot|V|$};
      \path (5.1,3) node[right] (q0) {$\nbband\cdot|V|$};
      \path (5.1,2) node[right] (q0) {$(\nbband+1)\cdot|V|$};
      \path (5.2,0) node[right] (q0) {$\depth$};

      \draw[thick, dashed] (0,8) .. controls (-1,6) .. (.15,3.7);
      \draw[fill=black] (.15,3.7) circle (2pt);
      \path (-1.2,5) node[right] (q0) {$\alpha$};

      \draw[thick] (.1,3.7) .. controls (0.2,3.6) .. (.4,3.2);
      \draw[fill=black] (.4,3.2) circle (2pt);
      \path (.4,3.5) node[right] (q0) {$\beta$};

      \draw[thick, densely dotted] (.4,3.2) .. controls (.8,2.5) .. (1.2,2);
      \draw[fill=black] (1.2,2) circle (2pt);
      \path (.6,2.4) node[left] (q0) {$\gamma$};

      \draw (1.2,2) .. controls (1.5,1.5) .. (2.1,0);

      \draw (2.1,-.1) node[below] (q0) {$\rho$};

      \draw [snake=brace] (0,7.9) -- (1.2,4) node [right,pos=.45]
            {${} \, \la$};

      \draw [snake=brace] (1.3,4) -- (2.1,2) node [right,pos=.44]
            {${} \, \lap$};

    \end{tikzpicture}
    \caption{Slicing of the play $\rho$ in the tree $\tfin$.}
    \label{fig:slicing play}
  \end{figure}

%\end{document}

  As the hypotheses of Lemma~\ref{lemma:NE same type} are verified, we
  can apply it in this context to get a finite-memory Nash
  equilibrium~$(\tau_i)_{\ipi}$ in the game~$\ginf$
  with~$\valtype((\tau_i)_{\ipi})=\visit(\alpha)$.  \qed
\end{proof}

%The Nash equilibrium constructed in the proof of 
%Lemma~\ref{lemma:NE same type} has the nice property to have finite memory
%as stated in the next corollary.

%\begin{corollary}\label{coro:fin-mem NE}
%  The Nash equilibrium~$(\tau_i)_{\ipi}$ defined in~\eqref{eq:def taui}
%  is a finite-memory strategy profile.
%\end{corollary}

%\begin{proof}
%  \input{proof-fin-mem}
%  \qed
%\end{proof}

%exemple sur le fait qu'on ne garde pas forcement le meme type
Proposition~\ref{prop:NE fin ds inf} asserts that given a
game~$\mathcal{G}$ and the game~$\gfin$ played on the truncated tree
of~$\tinf$ of a well-chosen depth~$\depth$, one can lift any Nash
equilibrium~$(\sigma_i)_{\ipi}$ of~$\gfin$ to a Nash
equilibrium~$(\tau_i)_{\ipi}$ of~$\mathcal{G}$. The proof of
Proposition~\ref{prop:NE fin ds inf} states that the \type\
of~$(\tau_i)_{\ipi}$ is equal to~$\visit(\alpha)$. We give an example
that shows that it is impossible to preserve the \type\ of the lifted
Nash equilibrium~$(\sigma_i)_{\ipi}$.  %\fbox{CHANGER TYPE ds ex-diff-types.tex}
  \begin{example}\label{ex:diff-types}
  Let us consider the two-player game~$\mathcal{G}$ depicted in
  Figure~\ref{fig-extypedifun} with $\F_1=\{C\}$,~$\F_2=\{E\}$. 
  One can show that~$\mathcal{G}$ admits
  only Nash equilibria of \type~$\{2\}$ or~$\emptyset$. 
  Indeed, on one hand, there is no play of~$\mathcal{G}$
  where both goals are visited, and on the other hand given
  a strategy profile~$(\sigma_i)_{\ipi}$ such that~$ \langle
  (\sigma_i)_{\ipi} \rangle$ visits~$\F_1$ (i.e.~$ \langle
  (\sigma_i)_{\ipi} \rangle$ is of the form~$A^+BC^\omega$), playing~$D$ 
  instead of~$C$ is clearly a profitable deviation for
  player~$2$. 

  We will now see that for each~$\depth \geq 2$ the game played
  on~$\tfin$ admits a Nash equilibrium of \type~$\{1\}$. From the
  above discussion, this equilibrium can not be lifted to a Nash
  equilibrium of the same \type\ in~$\mathcal{G}$. A truncated
  tree~$\tfin$ is depicted in Figure~\ref{fig-extypedifdeux}. One can
  show that the strategy profile leading to the
  outcome~$A^{\depth-1}BC$ (depicted in bold in the figure) is a Nash
  equilibrium in~$\gfin$ of \type~$\{1\}$. Following the lines of the
  proof of Proposition~\ref{prop:NE fin ds inf}, we see that this Nash
  equilibrium is lifted to a Nash equilibrium of~$\mathcal{G}$ with
  outcome~$A^{\omega}$ and \type~$\emptyset$.

\begin{figure}[h!]
  \null\hfill
  \begin{minipage}[b]{0.35\linewidth}
    \centering
    \begin{tikzpicture}[xscale=.7]
      \everymath{\scriptstyle}

      \path (0,0) node[draw,circle,inner sep=2pt] (A) {$A$};
      \path (2,0) node[draw,rectangle,inner sep=4pt] (B) {$B$};
      \path (4,0) node[draw,circle,inner sep=2pt,fill=black!20!white] (C) 
{$C$};
      \path (2,-2) node[draw,rectangle,inner sep=4pt] (D) {$D$};
      \path (4,-2) node[draw,circle,inner sep=2pt,double] (E) {$E$};

\phantom{
      \path (4,-3.5) node[draw,circle,inner sep=2pt] (phantom) {$E$};
}

   %   \path (4.6,-.1) node[] (dd) {$\F_1$};
   %   \path (4.6,-2.1) node[] (dd) {$\F_2$};

      \draw[arrows=-latex'] (A) .. controls +(135:.75cm) and
      +(45:.75cm) .. (A);

      \draw[arrows=-latex'] (C) .. controls +(135:.75cm) and
      +(45:.75cm) .. (C);

      \draw[arrows=-latex'] (E) .. controls +(135:.75cm) and
      +(45:.75cm) .. (E);

      \draw[arrows=-latex'] (-.75,0) -- (A);
      \draw[arrows=-latex'] (A) -- (B);
      \draw[arrows=-latex'] (B) -- (C);
      \draw[arrows=-latex'] (B) -- (D);
      \draw[arrows=-latex'] (D) -- (E);

    \end{tikzpicture}
    \caption{A game~$\mathcal G$.}
    \label{fig-extypedifun}
  \end{minipage}
  \hfill
  \begin{minipage}[b]{0.62\linewidth}
    \centering
    \begin{tikzpicture}[xscale=.27,yscale=.6]
      \everymath{\scriptscriptstyle}

      \draw[very thin] (0,10.5) -- (-7,8.5) -- (-12.5,1.5) --
      (12.5,1.5) -- (7,8.5) -- (0,10.5);

      \path (12.5,1.5) node[right] (dd) {$\depth$};

\begin{scope}
      \path (0,10) node[] (11) {$\mathbf{A}$};

      \path (-4,9) node[] (21) {$\mathbf{A}$};
      \path (4,9) node[] (22) {$B$};

      \path (-6,8) node[] (31) {$\mathbf{A}$};
      \path (-2,8) node[] (32) {$B$};
      \path (2,8) node[] (33) {$C$};
      \path (6,8) node[] (34) {$D$};

      \path (-7,7) node[] (41) {$\mathbf{A}$};
      \path (-5,7) node[] (42) {$B$};
      \path (-3,7) node[] (43) {$C$};
      \path (-1,7) node[] (44) {$D$};
      \path (2,7) node[] (45) {$C$};
      \path (6,7) node[] (46) {$E$};

      \draw[thick] (11)--(21);
      \draw[very thin] (11)--(22);

      \draw[thick] (21)--(31);
      \draw[very thin] (21)--(32);
      \draw[very thin] (22)--(33);
      \draw[very thin] (22)--(34);

      \draw[thick] (31)--(41);
      \draw[very thin] (31)--(42);
      \draw[very thin] (32)--(43);
      \draw[very thin] (32)--(44);
      \draw[very thin] (33)--(45);
      \draw[very thin] (34)--(46);

      \draw[thick,densely dotted] (41) -- (-8,5.5);
      \draw[densely dotted] (41) -- (-6.2,6);
      \draw[densely dotted] (42) -- (-5.8,6);
      \draw[densely dotted] (42) -- (-4.2,6);
      \draw[densely dotted] (43) -- (-3,6);
      \draw[densely dotted] (44) -- (-1,6);
      \draw[densely dotted] (45) -- (2,6);
      \draw[densely dotted] (46) -- (6,6);
\end{scope}

\begin{scope}[xshift=-8.25cm,yshift=-5cm,xscale=.4]
      \path (0,10) node[] (11) {$\mathbf{A}$};

      \path (-4,9) node[] (21) {$\mathbf{A}$};
      \path (4,9) node[] (22) {$B$};

      \path (-6,8) node[] (31) {$A$};
      \path (-2,8) node[] (32) {$\mathbf{B}$};
      \path (2,8) node[] (33) {$C$};
      \path (6,8) node[] (34) {$D$};

      \path (-7.5,7) node[] (41) {$A$};
      \path (-5.5,7) node[] (42) {$B$};
      \path (-2.8,7) node[] (43) {$\mathbf{C}$};
      \path (-.8,7) node[] (44) {$D$};
      \path (2,7) node[] (45) {$C$};
      \path (6,7) node[] (46) {$E$};

      \draw[thick] (11)--(21);
      \draw[very thin] (11)--(22);

      \draw[very thin] (21)--(31);
      \draw[thick] (21)--(32);
      \draw[very thin] (22)--(33);
      \draw[very thin] (22)--(34);

      \draw[very thin] (31)--(41);
      \draw[very thin] (31)--(42);
      \draw[thick] (32)--(43);
      \draw[very thin] (32)--(44);
      \draw[very thin] (33)--(45);
      \draw[very thin] (34)--(46);
\end{scope}

    \end{tikzpicture}
    \caption{The truncated tree~$\tfin$.}
    \label{fig-extypedifdeux}
  \end{minipage}
  \hfill\null
\end{figure}

\end{example}

On the other hand, notice that from the proof of
Proposition~\ref{prop:NE fin ds inf}, we can construct a Nash
equilibrium such that each player pays either an infinite cost, or a
cost bounded by~$|\Pi|\cdot 2 \cdot |V|$.%\fbox{ok?}

\subsection{Nash Equilibria with Finite Memory Preserving
  Types}\label{sub:fin-mem}
%------------------------------------------------

In this section we show that given a Nash equilibrium, we can
construct another Nash equilibrium with the same \type\ such that all
its strategies are finite-memory. We then answer to Problem~\ref{pbm
  2} for Nash equilibria.

\begin{theorem}\label{theo:fin-mem}
  If there exists a Nash equilibrium in a quantitative multiplayer
  reachability game~$\mathcal{G}$, then there exists a finite-memory
  Nash equilibrium of the same \type\ in~$\mathcal{G}$.
\end{theorem}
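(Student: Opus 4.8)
The plan is to take an arbitrary Nash equilibrium $(\sigma_i)_{\ipi}$ in $\mathcal{G}$ with outcome $\rho$ and type $T=\visit(\rho)$, and to reuse the machinery already developed for Theorem~\ref{theo:ex EN}, but this time anchoring the construction to $\rho$ itself rather than to an arbitrary equilibrium of the truncated game. The main obstacle is that the construction in Proposition~\ref{prop:NE fin ds inf} only guarantees type $\visit(\alpha)$, and Example~\ref{ex:diff-types} shows this may strictly shrink the type: when we cut $\rho$ at a cycle, the goal sets visited only \emph{after} the extracted prefix $\alpha\beta\gamma$ are lost. To preserve the full type $T$ we must extract the repeating cycle \emph{late enough} that every player achieving his goal in $\rho$ has already done so in $\alpha$.

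First I would prove that a suitable prefix always exists. The key observation is that $\visit(h)$ is monotone along $\rho$ and takes at most $|\Pi|+1$ distinct values; hence there is a point $k_0$ after which $\visit$ stabilizes at $T$. Beyond $k_0$, consider the suffix of $\rho$: since $V$ is finite, among any $|V|+1$ consecutive vertices some vertex repeats, yielding a cycle $\beta$ with $\last(\alpha)=\last(\alpha\beta)$ and $\visit(\alpha)=\visit(\alpha\beta\gamma)=T$. Concretely I would locate a prefix $\alpha$ with $\visit(\alpha)=T$ and then within the next $|V|+1$ vertices isolate $\alpha\beta\gamma$ satisfying exactly the hypotheses of Lemma~\ref{lemma:jeu djsn}, with the crucial strengthening that $\visit(\alpha)=\visit(\rho)=T$ rather than merely $\visit(\alpha)\subseteq\visit(\rho)$. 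The length constraints of that lemma are met for an appropriate constant $\nbband$, which can be bounded in terms of $|\Pi|$ and $|V|$.

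Having extracted $\alpha\beta\gamma$ from the \emph{genuine} Nash equilibrium $\rho$ of $\mathcal{G}$ (not of $\gfin$), I would then verify that the hypotheses of Lemma~\ref{lemma:NE same type} hold. Here there is a subtlety: Lemma~\ref{lemma:jeu djsn} and Lemma~\ref{lemma:NE same type} are phrased in terms of a Nash equilibrium of $\gfin$. I would therefore restrict $(\sigma_i)_{\ipi}$ to the truncated tree $\tfin$ of depth $\depth=(|\Pi|+1)\cdot 2\cdot|V|$; the restriction is still a Nash equilibrium in $\gfin$ because any profitable deviation within depth $\depth$ would already be profitable in $\mathcal{G}$, contradicting that $(\sigma_i)_{\ipi}$ is an equilibrium of $\mathcal{G}$. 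With this in place, Lemma~\ref{lemma:NE same type} furnishes a finite-memory Nash equilibrium $(\tau_i)_{\ipi}$ of $\ginf$ with outcome $\abo$ and $\valtype((\tau_i)_{\ipi})=\visit(\alpha)=T$.

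I expect the delicate point to be justifying that cutting at the stabilization of $\visit$ is compatible with the quantitative length bounds required by the two lemmas, and that the restriction of $(\sigma_i)_{\ipi}$ to $\gfin$ legitimately inherits the equilibrium property so that the punishment strategies $\strat_{-j}$ of Lemma~\ref{lemma:jeu djsn} are available. Once these are settled, the finite-memory property and the equality of types follow verbatim from Lemma~\ref{lemma:NE same type}, and transporting $(\tau_i)_{\ipi}$ from $\ginf$ back to $\mathcal{G}$ via the bijection of Section~\ref{sub:def} completes the proof.
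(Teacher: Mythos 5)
There is a genuine gap, and it sits exactly where you flag a ``delicate point'': the claim that the prefix $\alpha$ with $\visit(\alpha)=T$ can be found with $\nbband$ ``bounded in terms of $|\Pi|$ and $|V|$'', so that the whole prefix $\alpha\beta\gamma$ fits inside the truncated tree of the fixed depth $\depth=(|\Pi|+1)\cdot 2\cdot|V|$. This is false for an arbitrary Nash equilibrium of $\mathcal{G}$: the point at which $\visit$ stabilizes along $\rho$ depends on the equilibrium, not on the graph. Take $V_1=\{A\}$, $V_2=\{B\}$ with edges $A\to A$, $A\to B$, $B\to B$, $\F_1=\emptyset$ and $\F_2=\{B\}$. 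For every $N$, the profile in which player~1 loops at $A$ for $N$ steps and then moves to $B$ is a Nash equilibrium of type $\{2\}$ (player~1 cannot affect his own infinite cost and player~2 has no choice at $B$), and $\visit$ stabilizes only at depth $N+1$. For $N$ larger than $(|\Pi|+1)\cdot 2\cdot|V|=12$, no prefix $\alpha\beta\gamma$ with $\visit(\alpha)=\{2\}$ exists inside $\tfin$; the restriction of the profile to $\gfin$ has outcome $A^{\depth+1}$ of type $\emptyset$, and your construction returns an equilibrium of type $\emptyset\neq\{2\}$. As written, the argument therefore does not establish preservation of the type, which is the entire content of the theorem beyond Theorem~\ref{theo:ex EN}.

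Two repairs are possible. The cheap one stays within your scheme: Lemmas~\ref{lemma:jeu djsn} and~\ref{lemma:NE same type} place no upper bound on $\nbband$, so you may choose $\depth$ \emph{adaptively}, e.g.\ $\depth=(\nbband+1)\cdot|V|$ for the least $\nbband$ with $(\nbband-1)\cdot|V|\geq\max_{i\in T}\payoff_i(\rho)$; your observation that the restriction of a Nash equilibrium of $\mathcal{G}$ to $\gfin$ remains a Nash equilibrium is correct, and the rest then goes through. The paper takes a different and more informative route: it first performs surgery on the given equilibrium (Lemmas~\ref{lemma 1} and~\ref{lemma 2}, packaged as Proposition~\ref{prop:fin-mem,NE abo}) to cut out the useless cycles and pump a useful one, producing a \emph{new} Nash equilibrium of the same type whose outcome is $\abo$ with $|\alpha\beta|<(|\Pi|+1)\cdot|V|$, and only then truncates at the fixed depth $(|\Pi|+2)\cdot|V|$. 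That normalization step --- which requires redefining the strategies off the main play and re-verifying the equilibrium property by mapping each profitable deviation against the new profile back to one against the old --- is entirely absent from your proposal, and it is precisely what legitimizes a depth bound depending only on $|\Pi|$ and $|V|$ (and yields, as a bonus, that players in the type pay at most $|\Pi|\cdot 2\cdot|V|$).
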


The proof is based on two steps. The first step constructs from
$(\sigma_i)_{\ipi}$ ano\-ther Nash equilibrium~$(\tau_i)_{\ipi}$ with the
same \type\ such that the play~$\langle (\tau_i)_{\ipi} \rangle$ is
of the form~$\abo$ with~$\visit(\alpha)=\valtype((\sigma_i)_{\ipi})$.
This is possible thanks to
Lemmas~\ref{lemma 1} and~\ref{lemma 2},
by first eliminating unne\-ces\-sary cycles in the play~$\langle
(\sigma_i)_{\ipi} \rangle$ and then locating a prefix~$\alpha\beta$
such that~$\beta$ is a cycle that can be infinitely repeated.

The second step transforms the Nash equilibrium~$(\tau_i)_{\ipi}$ into
a finite-memory one thanks to Lemma~\ref{lemma:NE same type} given in
Section~\ref{sub:ex NE}.  For that purpose, we consider the strategy
profile~$(\tau_i)_{\ipi}$ limited to the tree~$\tinf$ truncated at a
well-chosen depth.

%The detailed proof of Theorem~\ref{theo:fin-mem} can be found
%in the Appendix (Section~\ref{sec:ann fin-mem}).

%\input{proofs-fin-mem} ci-dessous

The next lemma indicates how to eliminate a cycle in the outcome of a
Nash equilibrium.

\begin{lemma}\label{lemma 1}
  Let~$(\sigma_i)_{\ipi}$ be a strategy profile in a game~$\mathcal{G}$
  and~$\rho = \langle 
  (\sigma_i)_{\ipi} \rangle$ its outcome.
  Suppose that~$\rho = \la\lap\rhobar$, where~$\lap$ contains at least
  one vertex, such that
  \begin{align*}
    & \visit(\la)=\visit(\la\lap)\\
    & \last(\la)=\last(\la\lap).
  \end{align*}
  We define a strategy profile~$(\tau_i)_{\ipi}$  as follows: \
  \[ \tau_i(h) = \left\{
  \begin{array}{ll}
    \sigma_i(h) &
  \mbox{ if~$\la \not\leq h$,}\\
  \sigma_i(\la\lap\delta) & \mbox{ if~$h=\la\delta$} 
  \end{array}\right. \]
  where~$h$ is a history of~$\mathcal{G}$ with~$\last(h) \in V_i$.
  We get the outcome~$\langle (\tau_i)_{\ipi} \rangle = \la\rhobar$.\\
  \noindent If a strategy~$\tau_j'$ is a profitable deviation for player~$j$
  w.r.t.~$(\tau_i)_{\ipi}$,
  then there exists a profitable deviation~$\sigma_j'$ for player~$j$
  w.r.t.~$(\sigma_i)_{\ipi}$.
\end{lemma}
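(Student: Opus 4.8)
The plan is to show that any profitable deviation $\tau_j'$ against $(\tau_i)_{\ipi}$ can be transformed into a profitable deviation $\sigma_j'$ against $(\sigma_i)_{\ipi}$, thereby establishing the contrapositive of what is needed (if $(\sigma_i)_{\ipi}$ is a Nash equilibrium, so is $(\tau_i)_{\ipi}$), but stated in the stronger deviation-transfer form the lemma actually asserts. The verification that $\langle (\tau_i)_{\ipi} \rangle = \la\rhobar$ is routine: since $\last(\la) = \last(\la\lap)$, the vertex after the prefix $\la$ is the same one that $\sigma_i$ would choose after $\la\lap$, so by induction the outcome of $(\tau_i)_{\ipi}$ reads off $\la$ and then follows exactly what $(\sigma_i)_{\ipi}$ does from $\la\lap$ onward, namely $\rhobar$. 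I would dispatch this first.

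The core of the argument is the translation of strategies. The idea is that $(\tau_i)_{\ipi}$ behaves like $(\sigma_i)_{\ipi}$ with the cycle $\lap$ spliced out: a history $\la\delta$ in the $\tau$-world corresponds to the history $\la\lap\delta$ in the $\sigma$-world. Given a profitable deviation $\tau_j'$ for player~$j$, I would define $\sigma_j'$ by mirroring this correspondence, namely $\sigma_j'(h) = \tau_j'(h)$ when $\la \not\leq h$, and $\sigma_j'(\la\lap\delta) = \tau_j'(\la\delta)$ for the histories lying beyond the reinserted cycle. On the cycle $\lap$ itself I would let $\sigma_j'$ simply agree with $\sigma_j$ (equivalently follow the outcome), so that the play produced against $(\sigma_i)_{\ipimj}$ passes through $\la\lap$ and then imitates the deviating play. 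Writing $\rho = \langle \tau_j', (\tau_i)_{\ipimj} \rangle$ and $\rho' = \langle \sigma_j', (\sigma_i)_{\ipimj} \rangle$, the construction gives $\rho' = \la\lap\eta$ exactly when $\rho = \la\eta$, i.e. the two deviating plays differ only by the inserted cycle $\lap$.

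It then remains to compare costs for player~$j$. The key observation is that the condition $\visit(\la) = \visit(\la\lap)$ guarantees that the cycle $\lap$ visits no goal set that $\la$ has not already visited; hence inserting $\lap$ never makes player~$j$ reach $\F_j$ sooner, and in fact shifts any visit occurring strictly after $\la$ later by exactly $|\lap|$. I would argue by cases on whether $\tau_j'$ lets player~$j$ visit $\F_j$ within the prefix $\la$ or only afterwards, checking in each case that $\payoff_j(\rho') \leq \payoff_j(\rho) + |\lap|$ translates the strict inequality $\payoff_j(\rho) < \payoff_j(\langle (\tau_i)_{\ipi}\rangle) = \payoff_j(\la\rhobar)$ into $\payoff_j(\rho') < \payoff_j(\langle (\sigma_i)_{\ipi}\rangle) = \payoff_j(\la\lap\rhobar)$, the two reference costs themselves also differing by exactly $|\lap|$ (again using $\visit(\la) = \visit(\la\lap)$ so that the shift is uniform). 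The main obstacle, and the step deserving real care, is this cost bookkeeping: one must separate whether player~$j$'s goal is hit before, inside, or after the spliced cycle, and confirm the offset $|\lap|$ applies identically to both the deviating outcome and the equilibrium outcome so that the strict gain is preserved rather than eroded. Once the offsets are seen to cancel, $\sigma_j'$ is a profitable deviation w.r.t. $(\sigma_i)_{\ipi}$, completing the proof.
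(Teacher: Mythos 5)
Your proposal is correct and follows essentially the same route as the paper: the same splice-out correspondence $\la\delta \leftrightarrow \la\lap\delta$ used to define $\sigma_j'$ from $\tau_j'$, the same case analysis on where the deviation (and the visit to $\F_j$) occurs, and the same use of $\visit(\la)=\visit(\la\lap)$ to show the cycle offset shifts the deviating and reference costs uniformly so that the strict inequality is preserved. The only discrepancies are cosmetic: the paper splits the construction of $\sigma_j'$ into two explicit cases (deviation inside $\la$ versus after $\la$) rather than giving one unified definition, and under the paper's convention that length counts edges the index shift caused by reinserting $\lap$ is $|\lap|+1$ rather than $|\lap|$.
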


\begin{proof}
Let us set~$\Pi=\{1,\ldots,n\}$. We write
\begin{align*}
  \rho = \langle (\sigma_i)_{\ipi} \rangle & \text{ of cost profile }
  (x_1,\ldots,x_n),\\
  \pi = \langle (\tau_i)_{\ipi} \rangle & \text{ of cost profile }
  (y_1,\ldots,y_n).
\end{align*}

We observe that as~$\rho = \la\lap\rhobar$, we have~$\pi=\la\rhobar$
(see Figures~\ref{fig:loop} and~\ref{fig:cut loop}). It follows that 
\begin{align}
  \forall \ipi,\ \ y_i \leq x_i. \label{eq:y_i + pt x_i}
\end{align} 
More precisely,
\begin{align}
  & \text{- if } x_i = +\infty, \text{ then } y_i=+\infty; \label{eq:x_i=y_i}\\
  & \text{- if } x_i < +\infty \text{ and } i \in \visit(\la), 
  \text{ then } y_i=x_i; \notag\\
  & \text{- if } x_i < +\infty \text{ and } i \not \in \visit(\la), 
    \text{ then } y_i=x_i-(|\lap|+1). \label{eq:x_i>y_i}
\end{align}

\begin{figure}[h!]
  \null\hfill
  \begin{minipage}[b]{0.48\linewidth}
    \centering
    \begin{tikzpicture}[yscale=.5,xscale=.5]
      \everymath{\scriptstyle}
      \draw (0,10) -- (-5,0);
      \draw (0,10) -- (5,0);

      \draw[black!20!white,fill=black!20!white] (.5,7) -- (4,0) -- (-3.5,0);
      \draw[black!10!white,fill=black!10!white] (-.5,4) -- (1.5,0) -- (-2.5,0);

      \draw[fill=black] (0,10) circle (2pt);
      \draw[thick] (0,10) .. controls (-.5,8.5) .. (.5,7);

      \draw[fill=black] (.5,7) circle (2pt);
      \draw[] (.5,7) .. controls (1.2,5) and (-.2,5) .. (-.5,4);

      \draw[fill=black] (-.5,4) circle (2pt);
      \draw[thick] (-.5,4) .. controls (-1,3) and (-.5,2) .. (-1.5,0);

      \draw[fill=black] (-.2,8.1) circle (2pt);
      \draw[thick,dashed] (-.2,8.1) .. controls (-3,5) and (-1,2) .. (-4.5,0);

      \draw[fill=black] (-.825,2.3) circle (2pt);
      \draw[thick,dashed] (-.825,2.3) .. controls (1,.8)  .. (1.2,0);

      \draw (-1.5,-.1) node[below] (q0) {$\rho = \langle
        (\sigma_i)_{\ipi} \rangle$};
      \draw (-4.5,.1) node[below] (q0) {$\rho_1'$};
      \draw (1.3,.1) node[below] (q0) {$\rho_2'$};

      \draw (0,9) node[below] (q0) {$\la$};
      \draw (1,6) node[below] (q0) {$\lap$};
      \draw (-1.3,2) node[below] (q0) {$\rhobar$};
    \end{tikzpicture}
    \caption{Play $\rho$ and possible deviations.}
    \label{fig:loop}
  \end{minipage}
  \hfill
  \begin{minipage}[b]{0.48\linewidth}
    \centering
    \begin{tikzpicture}[yscale=.5,xscale=.5]
      \everymath{\scriptstyle}
      \draw (0,10) -- (-5,0);
      \draw (0,10) -- (5,0);

      \draw[fill=black] (0,10) circle (2pt);
      \draw[thick] (0,10) .. controls (-.5,8.5) .. (.5,7);

      \draw[fill=black] (-.2,8.1) circle (2pt);
      \draw[thick,dashed] (-.2,8.1) .. controls (-3,5) and (-1,2) .. (-4.5,0);
 
      \draw (-.5,1.5) node[below] (q0) {$\pi = \langle
        (\tau_i)_{\ipi} \rangle$};
      \draw (-4.5,.1) node[below] (q0) {$\pi_1'$};
      \draw (2.2,1.5) node[below] (q0) {$\pi_2'$};

      \draw (0,9) node[below] (q0) {$\la$};
      \draw (1,6) node[below] (q0) {$\lap$};

\begin{scope}[xshift=1cm,yshift=3cm]
      \draw[white,fill=black!10!white] (-.5,4) -- (2,-1) -- (-3,-1);

      \draw[fill=black] (-.5,4) circle (2pt);
      \draw[thick] (-.5,4) .. controls (-1,3) and (-.5,2) .. (-1.5,0);
      \draw[thick,dotted] (-1.5,0) .. controls  (-1.7,-.75) ..  (-1.7,-1.5);

      \draw[fill=black] (-.825,2.3) circle (2pt);
      \draw[thick,dashed] (-.825,2.3) .. controls (1,.8)  .. (1.2,0);
      \draw[thick,dotted] (1.2,0) .. controls  (1.2,-.75) ..  (1.1,-1.5);
      \draw (-1.3,2) node[below] (q0) {$\rhobar$};
\end{scope}
    \end{tikzpicture}
    \caption{Play $\pi$ and possible deviations.}
    \label{fig:cut loop}
  \end{minipage}
  \hfill\null
\end{figure}

Let~$\tau_j'$ be a profitable deviation for player~$j$ 
w.r.t.~$(\tau_i)_{\ipi}$, and~$\pi'$ be the outcome of the 
strategy profile~$(\tau_j', (\tau_i)_{\ipimj})$.
Then $$\payoff_j(\pi') < y_j.$$
We show how to construct a profitable deviation~$\sigma_j'$ for player~$j$
w.r.t.~$(\sigma_i)_{\ipi}$. Two cases occur:
\begin{enumerate}
\item[$(i)$] player~$j$ deviates from~$\pi$ just after a proper prefix~$h$ 
  of~$\la$ (like for the play~$\pi'_1$ in Figure~\ref{fig:cut loop}). 

  \smallskip
  We define~$\sigma_j'=\tau_j'$ and we denote by~$\rho'$ the outcome 
  of~$(\sigma_j',(\sigma_i)_{\ipimj})$. Given the definition of
  the strategy profile~$(\tau_i)_{\ipi}$, one can verify that~$\rho'=\pi'$
  (see the play~$\rho_1'$ in Figure~\ref{fig:loop}).
  Thus
  $$\payoff_j(\rho')=\payoff_j(\pi') < y_j \leq x_j$$
  by Equation~\eqref{eq:y_i + pt x_i}, which implies that~$\sigma_j'$
  is a profitable deviation of player~$j$ w.r.t. $(\sigma_i)_{\ipi}$.

\smallskip

\item[$(ii)$] player~$j$ deviates from~$\pi$ after the prefix~$\la$
  ($\pi$ and~$\pi'$ 
  coincide at least on~$\la$). 

  \smallskip
  This case is illustrated by the play~$\pi_2'$ 
  in Figure~\ref{fig:cut loop}. We define for all histories~$h$
  ending in a vertex of~$V_j$:
  \[ \sigma_j'(h) = \left\{
  \begin{array}{ll}
  \sigma_j(h) &  \mbox{ if~$\la\lap \not\leq h$,}\\
  \tau_j'(\la\delta) & \mbox{ if~$h=\la\lap\delta$.} 
\end{array}\right. \]
Let us set~$\rho'=\langle \sigma_j',(\sigma_i)_{\ipimj} \rangle$. As
player~$j$ deviates after~$\la$ with the strategy~$\tau_j'$, one can
prove that
$$\pi'=\la\pibar' \quad \text{and} \quad \rho'=\la\lap\pibar'$$
by definition of~$(\tau_i)_{\ipi}$ (see the play~$\rho_2'$ in 
Figure~\ref{fig:loop}). As~$\payoff_j(\pi') < y_j$, it means
that~$j \not \in \visit(\la)$ (other\-wise the deviation would not be
profitable for player~$j$). Since~$\visit(\la)=\visit(\la\lap)$,
we also have
$$\payoff_j(\pi')+(|\lap|+1) = \payoff_j(\rho').$$
By Equations~\eqref{eq:x_i=y_i} and~\eqref{eq:x_i>y_i}, we get
\begin{itemize}
  \item either~$x_j=y_j=+\infty$ and~$\payoff_j(\rho') < x_j$,
  \item or~$x_j=y_j + (|\lap|+1)$ and~$\payoff_j(\rho') < x_j$,
\end{itemize}
which proves that~$\sigma_j'$ is a profitable deviation for player~$j$
w.r.t.~$(\sigma_i)_{\ipi}$.\qed
\end{enumerate}

\end{proof}

While Lemma~\ref{lemma 1} deals with elimination of unnecessary cycles,
Lemma~\ref{lemma 2} deals with repetition of a useful cycle.

\begin{lemma}\label{lemma 2}
  Let~$(\sigma_i)_{\ipi}$ be a strategy profile in a game~$\mathcal{G}$
  and~$\rho = \langle (\sigma_i)_{\ipi} \rangle$ its outcome.
  We assume that~$\rho = \la\lap\rhobar$, where~$\lap$ contains at least
  one vertex, such that
  \begin{align*}
    & \visit(\la)=\visit(\rho)\\
    & \last(\la)=\last(\la\lap).
  \end{align*}
  We define a strategy profile~$(\tau_i)_{\ipi}$  as follows: \
  \[ \tau_i(h) = \left\{
  \begin{array}{ll}
    \sigma_i(h) & \mbox{ if~$\la \not\leq h$,}\\
    \sigma_i(\la\delta) & \mbox{ if~$h=\la\lap^k\delta$,~$k \in \N$, and
     ~$\lap \not \leq \delta$} 
  \end{array}\right. \]
  where~$h$ is a history of~$\mathcal{G}$ with~$\last(h) \in V_i$.
  We get the outcome~$\langle (\tau_i)_{\ipi} \rangle = \la\lap^{\omega}$.\\
  \noindent  If a strategy~$\tau_j'$ is a profitable deviation for player~$j$
  w.r.t.~$(\tau_i)_{\ipi}$,
  then there exists a profitable deviation~$\sigma_j'$ for player~$j$
  w.r.t.~$(\sigma_i)_{\ipi}$.
\end{lemma}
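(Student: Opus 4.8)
The plan is to mirror the structure of the proof of Lemma~\ref{lemma 1}, the only difference being that we now \emph{repeat} the cycle~$\lap$ instead of deleting it. Write the cost profiles of~$\rho$ and of~$\pi=\langle(\tau_i)_{\ipi}\rangle=\la\lap^{\omega}$ as~$(x_i)_{\ipi}$ and~$(y_i)_{\ipi}$ respectively. Every vertex occurring in~$\pi$ already occurs in~$\la\lap\leq\rho$, while~$\la\leq\pi$, so~$\visit(\pi)=\visit(\la)=\visit(\rho)$. Consequently, for each~$i\in\visit(\la)$ the first visit of~$\F_i$ happens inside the common prefix~$\la$, hence~$y_i=x_i$; and for each~$i\notin\visit(\rho)$ we have~$x_i=y_i=+\infty$. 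Thus~$y_i=x_i$ for every~$i\in\Pi$, i.e. the two profiles coincide, which is the feature that makes the argument slightly simpler than in Lemma~\ref{lemma 1}.

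Now let~$\tau_j'$ be a profitable deviation for player~$j$ w.r.t.~$(\tau_i)_{\ipi}$, with outcome~$\pi'$, so that~$\payoff_j(\pi')<y_j=x_j$, and split according to the position of the first deviation. If player~$j$ deviates just after a proper prefix~$h$ of~$\la$, then~$\pi'$ branches off~$\la$ and therefore meets only histories~$h'$ with~$\la\not\leq h'$, on which~$\tau_i$ coincides with~$\sigma_i$ for every coalition player. Setting~$\sigma_j'=\tau_j'$ thus yields~$\rho'=\langle\sigma_j',(\sigma_i)_{\ipimj}\rangle=\pi'$, whence~$\payoff_j(\rho')=\payoff_j(\pi')<x_j$ and~$\sigma_j'$ is a profitable deviation w.r.t.~$(\sigma_i)_{\ipi}$.

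The interesting case is when~$\pi$ and~$\pi'$ coincide at least on~$\la$. I would first observe that then~$j\notin\visit(\la)$: otherwise~$\F_j$ is visited inside~$\la$, so~$\payoff_j(\pi')=x_j$, contradicting~$\payoff_j(\pi')<x_j$. Hence~$x_j=y_j=+\infty$ and~$\pi'$ reaches~$\F_j$ only after leaving the pattern~$\la\lap^{\omega}$. Let the first deviation occur just after a history~$h_0$; being a prefix of~$\pi=\la\lap^{\omega}$, it has the form~$h_0=\la\lap^k\delta_0$ with~$\delta_0<\lap$, and~$\last(h_0)=\last(\la\delta_0)\in V_j$ because~$\lap$ is a cycle on~$\last(\la)$. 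Since~$\la\delta_0\leq\la\lap\leq\rho$, this vertex genuinely occurs on~$\rho$ inside the first copy of~$\lap$, and I would define~$\sigma_j'$ to follow~$\sigma_j$ up to~$\la\delta_0$, play there the deviating move~$v=\tau_j'(h_0)$, and afterwards copy~$\tau_j'$, i.e.~$\sigma_j'(\la\delta_0\mu)=\tau_j'(\la\lap^k\delta_0\mu)$ for the relevant continuations~$\mu$.

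The key point, and the step I expect to require the most care, is to check that the two plays coincide after the deviation: writing~$\pi'=\la\lap^k\delta_0\,\eta$, I claim~$\rho'=\langle\sigma_j',(\sigma_i)_{\ipimj}\rangle=\la\delta_0\,\eta$. Indeed, once player~$j$ plays~$v\neq\sigma_j(\la\delta_0)$ the suffix~$\delta_0 v\cdots$ no longer admits~$\lap$ as a prefix, so in the definition of each coalition strategy~$\tau_i$ the factorisation of a post-deviation history~$\la\lap^k\delta_0 v w$ reduces to~$\sigma_i(\la\delta_0 v w)$; thus the coalition moves in~$\pi'$ and in~$\rho'$ agree, and so do player~$j$'s moves by construction. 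Finally, since~$j\notin\visit(\rho)$, neither~$\la\lap^k\delta_0$ nor~$\la\delta_0$ visits~$\F_j$, so~$\F_j$ is first reached at the same position~$o$ inside the common suffix~$\eta$. Therefore~$\payoff_j(\rho')=|\la\delta_0|+o\leq|\la\lap^k\delta_0|+o=\payoff_j(\pi')<+\infty=x_j$, which shows that~$\sigma_j'$ is a profitable deviation for player~$j$ w.r.t.~$(\sigma_i)_{\ipi}$ and completes the argument.
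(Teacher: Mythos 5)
Your proof is correct and follows essentially the same route as the paper's: you first establish that the cost profiles of~$\rho$ and of~$\la\lap^{\omega}$ coincide, copy the deviation verbatim when it occurs before the relevant prefix is completed, and otherwise collapse the~$k$ traversed copies of~$\lap$ to transport the deviation back onto~$\rho$, using~$j\notin\visit(\la)$ to conclude. The only (immaterial) difference is that you place the case boundary at~$\la$ rather than at~$\la\lap$, so a deviation inside the first copy of~$\lap$ is handled by your collapse construction with~$k=0$ instead of by the direct copy~$\sigma_j'=\tau_j'$; your explicit verification that the two post-deviation suffixes coincide, and that~$j\notin\visit(\la)$ forces~$x_j=+\infty$, spells out what the paper leaves implicit.
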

  
\begin{proof}
  We use the same notations as in the proof of Lemma~\ref{lemma 1}.
  Here we have~$x_i=y_i$ for all~$\ipi$ since~$\visit(\la)=\visit(\rho)$.
  One can prove that~$\pi=\la\lap^{\omega}$ 
  (see Figures~\ref{fig:1 loop} and~\ref{fig:inf loop}).

  \begin{figure}[h!]
  \null\hfill
  \begin{minipage}[b]{0.48\linewidth}
    \centering
    \begin{tikzpicture}[yscale=.5,xscale=.5]
      \everymath{\scriptstyle}
      \draw (0,10) -- (-5,0);
      \draw (0,10) -- (5,0);

      \draw[black!20!white,fill=black!20!white] (0,8) -- (4,0) -- (-4,0);
      \draw[white,fill=white] (0,6) -- (3,0) -- (-3,0);

      \draw[fill=black] (0,10) circle (2pt);
      \draw[] (0,10) .. controls (-.3,9) .. (0,8);

      \draw[fill=black] (0,8) circle (2pt);
      \draw[thick] (0,8) .. controls (.4,7) and (-.2,6.5) .. (0,6);

      \draw[fill=black] (0,6) circle (2pt);
      \draw[dashed,thick] (0,6) .. controls (1,3) and (-.5,2) .. (-1.5,0);

      \draw (-.3,9) node[right] (q0) {$\la$};
      \draw (-.05,6.8) node[right] (q0) {$\lap$};
      \draw (.5,3) node[below] (q0) {$\rhobar$};
      \draw (-1.5,-.1) node[below] (q0) {$\rho = \langle
        (\sigma_i)_{\ipi} \rangle$};
    \end{tikzpicture}
    \caption{Play $\rho$ and its prefix $\la\lap$.}
    \label{fig:1 loop}
  \end{minipage}
  \hfill
  \begin{minipage}[b]{0.48\linewidth}
    \centering
    \begin{tikzpicture}[yscale=.5,xscale=.5]
      \everymath{\scriptstyle}
      \draw (0,10) -- (-5,0);
      \draw (0,10) -- (5,0);

      \draw[black!20!white,fill=black!20!white] (0,8) -- (4,0) -- (-4,0);
      \draw[black!10!white,fill=black!10!white] (0,6) -- (3,0) -- (-3,0);
      \draw[black!20!white,fill=black!20!white] (0,4) -- (2,0) -- (-2,0);
      \draw[black!10!white,fill=black!10!white] (0,2) -- (1,0) -- (-1,0);

      \draw[fill=black] (0,10) circle (2pt);
      \draw[] (0,10) .. controls (-.3,9) .. (0,8);

      \draw[fill=black] (0,8) circle (2pt);
      \draw[thick] (0,8) .. controls (.4,7) and (-.2,6.5) .. (0,6);

      \draw (-.3,9) node[right] (q0) {$\la$};
      \draw (-.05,6.8) node[right] (q0) {$\lap$};

      \draw (0,-.1) node[below] (q0) {$\pi = \langle
        (\tau_i)_{\ipi} \rangle$};

      \begin{scope}[yshift=-2cm]
      \draw[fill=black] (0,8) circle (2pt);
      \draw[thick] (0,8) .. controls (.4,7) and (-.2,6.5) .. (0,6);
      \draw (-.05,6.8) node[right] (q0) {$\lap$};
      \end{scope}

      \begin{scope}[yshift=-4cm]
      \draw[fill=black] (0,8) circle (2pt);
      \draw[thick] (0,8) .. controls (.4,7) and (-.2,6.5) .. (0,6);
      \draw (-.05,6.8) node[right] (q0) {$\lap$};
      \end{scope}

      \begin{scope}[yshift=-6cm]
      \draw[fill=black] (0,8) circle (2pt);
      \draw[thick] (0,8) .. controls (.4,7) and (-.2,6.5) .. (0,6);
      \draw (-.05,6.8) node[right] (q0) {$\lap$};
      \end{scope}

    \end{tikzpicture}
    \caption{Play $\pi=\la\lap^{\omega}$.}
    \label{fig:inf loop}
  \end{minipage}
  \hfill\null
\end{figure}

  We show how to define a profitable deviation~$\sigma_j'$ from
  the deviation~$\tau_j'$. We distinguish the following two cases:
  \begin{enumerate}
  \item[$(i)$] player~$j$ deviates from~$\pi$ just after a proper prefix~$h$
    of~$\la\lap$.

    \smallskip
    We define~$\sigma_j'=\tau_j'$. As in the first case of the proof
    of Lemma~\ref{lemma 1}, we have~$\payoff_j(\rho') < x_j$, 
    which implies that~$\sigma_j'$ is
    a profitable deviation of player~$j$ w.r.t.~$(\sigma_i)_{\ipi}$.
  \item[$(ii)$] player~$j$ deviates from~$\pi$ after the prefix~$\la\lap$, i.e.
    after a prefix~$\la\lap^k$ and strictly before the 
    prefix~$\la\lap^{k+1}$ ($k\geq 1$).

    \smallskip
    We define for all histories~$h$ ending in a vertex of~$V_j$:
    \[ \sigma_j'(h) = \left\{
    \begin{array}{ll}
      \sigma_j(h) &  \mbox{ if~$\la \not\leq h$,}\\
      \tau_j'(\la\lap^k\delta) & \mbox{ if~$h=\la\delta$.} 
    \end{array}\right. \]
    One can prove that
    $$\pi'=\la\lap^k\pibar' \quad \text{and} \quad \rho'=\la\pibar'.$$
    And then, in the point of view of costs we have
    $$\payoff_j(\rho') < \payoff_j(\pi') < y_j=x_j,$$ 
    which proves that~$\sigma_j'$ is a profitable deviation for player~$j$
    w.r.t.~$(\sigma_i)_{\ipi}$.\qed
  \end{enumerate}
\end{proof}

The next proposition achieves the first step of the proof of 
Theorem~\ref{theo:fin-mem} as mentioned in Section~\ref{sub:fin-mem}.
It shows that one can construct from a Nash equi\-li\-brium another
 Nash equilibrium with the same \type\
and with an outcome of the form~$\abo$. Its proof uses 
Lemmas~\ref{lemma 1} and~\ref{lemma 2}.

\begin{proposition}\label{prop:fin-mem,NE abo}
  Let~$(\sigma_i)_{\ipi}$ be a Nash equilibrium in a
  game~$\mathcal{G}$. Then there exists a Nash
  equilibrium~$(\tau_i)_{\ipi}$ with the same \type\ and such
  that~$\langle (\tau_i)_{\ipi} \rangle = \abo$,
  where~$\visit(\alpha)=\valtype((\sigma_i)_{\ipi})$
  and~$|\alpha\beta| < (|\Pi|+1)\cdot|V|$.
\end{proposition}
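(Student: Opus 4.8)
The plan is to massage the outcome $\rho = \langle (\sigma_i)_{\ipi} \rangle$ into the required shape $\abo$ by two successive transformations, each realised by one of Lemmas~\ref{lemma 1} and~\ref{lemma 2}. The point I would rely on is that these two lemmas are exactly \emph{deviation-lifting} statements: a profitable deviation against the transformed profile yields one against the original, so applied to a Nash equilibrium they again produce a Nash equilibrium (by contraposition). Moreover each transformation only deletes or repeats a cycle that visits no new goal set, so the \type\ is preserved throughout. Write $n = |\Pi|$ and $P = \visit(\rho) = \valtype((\sigma_i)_{\ipi})$.

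First I would shrink, using Lemma~\ref{lemma 1}, the initial part of $\rho$ needed to visit every goal set of $P$. Let $s_1 < \cdots < s_k$ (with $k \leq |P| \leq n$) be the positions at which $|\visit|$ strictly increases, and put $s_0 = 0$, so that $\visit(\rho_0\cdots\rho_{s_k}) = P$. As long as some vertex is repeated inside a segment $\rho_{s_{m-1}}\cdots\rho_{s_m-1}$ (which reaches no new goal set), the cycle delimited by the two occurrences satisfies the hypotheses $\visit(\la)=\visit(\la\lap)$ and $\last(\la)=\last(\la\lap)$ of Lemma~\ref{lemma 1}; excising it preserves the Nash property and the \type, and strictly shortens this prefix, so after finitely many excisions the process stops. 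At termination, for each $m$ the vertices $\rho'_{s_{m-1}},\ldots,\rho'_{s_m}$ of the reduced outcome $\rho'$ are pairwise distinct: the pre-goal vertices are distinct by construction, and the milestone vertex $\rho'_{s_m}$ cannot coincide with any earlier vertex of its segment, for otherwise the newly reached goal set would already have been visited strictly before $s_m$. Hence each segment has at most $|V|-1$ edges and $s_k \leq n(|V|-1)$.

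It then remains to create the repeated cycle. On the tail of $\rho'$ beyond the prefix $\hat\alpha = \rho'_0\cdots\rho'_{s_k}$, among the $|V|+1$ vertices $\rho'_{s_k},\ldots,\rho'_{s_k+|V|}$ two must coincide; let $\beta$ be the cycle they delimit and $\alpha$ the prefix of $\rho'$ up to the first of these two occurrences. Then $\last(\alpha)=\last(\alpha\beta)$, and since $\hat\alpha \leq \alpha$ we have $\visit(\alpha)=P=\visit(\rho')$. Applying Lemma~\ref{lemma 2} with $\la=\alpha$ and $\lap=\beta$ turns the Nash equilibrium obtained after the first step into a Nash equilibrium $(\tau_i)_{\ipi}$ with outcome $\abo$, again of \type\ $P$, and with $\visit(\alpha)=\valtype((\sigma_i)_{\ipi})$ as required. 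Finally $|\alpha\beta| \leq s_k + |V| \leq n(|V|-1)+|V| = (n+1)\cdot|V|-n < (|\Pi|+1)\cdot|V|$, which is the announced bound.

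The routine points (that iterating Lemma~\ref{lemma 1} terminates, since the relevant prefix length is a strictly decreasing natural number, and that every intermediate object is still a genuine strategy profile of $\mathcal{G}$) cause no difficulty, the Nash-preservation and \type-preservation being delegated to Lemmas~\ref{lemma 1} and~\ref{lemma 2}. The one genuinely delicate step is the length bookkeeping of the second paragraph, and within it the \emph{freshness} of each milestone vertex: this is exactly what replaces the naive estimate $|V|$ per segment by $|V|-1$ and thereby forces the strict inequality $|\alpha\beta| < (|\Pi|+1)\cdot|V|$ rather than a mere $\leq$.
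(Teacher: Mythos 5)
Your proof is correct and follows essentially the same route as the paper's: iterated applications of Lemma~\ref{lemma 1} to excise superfluous cycles until the prefix visiting all of $\visit(\rho)$ is short enough (the paper phrases this as forcing the cost profile to satisfy $x_i < i\cdot|V|$, while you phrase it as each inter-milestone segment having pairwise distinct vertices, hence at most $|V|-1$ edges), followed by a single application of Lemma~\ref{lemma 2} to a cycle found by pigeonhole just after that prefix. The milestone bookkeeping and the freshness argument for the milestone vertices give the same bound $|\alpha\beta| < (|\Pi|+1)\cdot|V|$ as the paper's telescoping of the costs.
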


\begin{proof}
  Let us set~$\Pi=\{1,\ldots,n\}$ and~$\rho=\langle (\sigma_i)_{\ipi} \rangle$.
  Without loss of generality suppose that 
  \begin{align*}
     \payoff(\rho)=(x_1,\ldots,x_n)  & \ \ \text{ such that } 
    x_1 \leq \ldots \leq x_{\jk} < +\infty\\
     & \ \ \text{ and } x_{\jk+1}=\ldots=x_n=+\infty
  \end{align*}
  where~$0 \leq \jk \leq n$.
  We consider two cases:
  \begin{enumerate}
  \item[$(i)$] $x_1 \geq |V|$.

    \smallskip    
    Then, there exists a prefix~$\la\lap$
    of~$\rho$, with $\lap$ containing at least one vertex, such that 
    \begin{align*}
      &|\la\lap| < x_1\\
      &\visit(\la)=\visit(\la\lap)=\emptyset\\
      &\last(\la)=\last(\la\lap).
    \end{align*}
    We define the strategy profile~$(\tau_i)_{\ipi}$ as proposed in
    Lemma~\ref{lemma 1}. By this lemma it is actually a Nash equilibrium
    in~$\mathcal{G}$. With~$\pi=\langle (\tau_i)_{\ipi} \rangle$,
    we have 
    $$\rho=\la\lap\rhobar \quad \text{and} \quad \pi = \la\rhobar.$$
    Thus if the cost profile for the play~$\pi$ is~$(y_1,\ldots,y_n)$, 
    we have
    \begin{align*}
      & y_1<x_1,\ldots,y_{\jk}<x_{\jk}\\
      & y_{\jk+1}=x_{\jk+1}=+\infty,\ldots,y_{n}=x_{n}=+\infty.
    \end{align*}

  \item[$(ii)$] $(x_{l+1}-x_l) \geq |V|$ for~$1 \leq l \leq \jk-1$.

    \smallskip
    Then, there exists a prefix~$\la\lap$
    of~$\rho$, with $\lap$ containing at least one vertex, such that 
    \begin{align*}
      & x_l <|\la\lap| < x_{l+1}\\
      &\visit(\la)=\visit(\la\lap)=\{1,\ldots,l\}\\
      &\last(\la)=\last(\la\lap).
    \end{align*}
    We define the strategy profile~$(\tau_i)_{\ipi}$ given in
    Lemma~\ref{lemma 1}. It is then a Nash equilibrium
    in~$\mathcal{G}$, and for~$\pi=\langle (\tau_i)_{\ipi} \rangle$, we have 
    $$\rho=\la\lap\rhobar \quad \text{and} \quad \pi = \la\rhobar.$$
    Hence if the cost profile for the play~$\pi$ is~$(y_1,\ldots,y_n)$, 
    we have
    \begin{align*}
    & y_1=x_1,\ldots,y_{l}=x_{l};\\
    & y_{l+1}<x_{l+1},\ldots,y_{\jk}<x_{\jk};\\
    & y_{\jk+1}=x_{\jk+1}=+\infty,\ldots,y_{n}=x_{n}=+\infty.
    \end{align*}
  \end{enumerate}
  By applying finitely many times the two previous cases, we can
  assume without loss of generality that~$(\sigma_i)_{\ipi}$ is a Nash
  equilibrium with a cost profile~$(x_1,\ldots,x_n)$ such that
  \[\begin{array}{ll}
  x_i < i\cdot |V| & \text{\ \ for } i \leq \jk;\\
  x_i = +\infty & \text{\ \ for } i > \jk.
  \end{array}\]

  Let us go further. We can write~$\rho=\alpha\beta\rhobar$ such that
  \begin{align*}
    & \visit(\alpha)=\visit(\rho)\\
    & \last(\alpha)=\last(\alpha\beta)\\
    & |\alpha\beta| < (\jk+1) \cdot |V| \leq (n+1) \cdot |V|.
  \end{align*}
  Indeed, the prefix~$h$ of~$\rho$ of length~$(k+1)\cdot|V|$ visits
  each goal set~$\F_i$, with~$i \leq \jk$, and after the last
  visited~$\F_{\jk}$, there remains enough vertices to observe a
  cycle. Notice that~$\visit(\alpha)=\visit(\alpha\beta)=
  \visit(\rho)$ ($=\valtype((\sigma_i)_{\ipi})$).

  If we define the strategy profile~$(\tau_i)_{\ipi}$ like in
  Lemma~\ref{lemma 2}, we get a Nash equilibrium in~$\mathcal{G}$
  with outcome~$\abo$ and the same \type\ as~$(\sigma_i)_{\ipi}$.
\qed
\end{proof}

We are now ready to prove Theorem~\ref{theo:fin-mem}.
\begin{proof}[of Theorem~\ref{theo:fin-mem}]
  Let us set~$\Pi=\{1,\ldots,n\}$.
  Let~$(\sigma_i)_{\ipi}$ be a Nash equilibrium in the game~$\mathcal{G}$.
  The first step consists in constructing a Nash equilibrium as in
  Proposition~\ref{prop:fin-mem,NE abo}. Let us denote it again 
  by~$(\sigma_i)_{\ipi}$. Let us set~$\rho=\langle (\sigma_i)_{\ipi} \rangle =
  \abo$ such that~$\visit(\alpha)=\valtype((\sigma_i)_{\ipi})$ and~$|\alpha
  \beta| < (n+1)\cdot|V|$. The strategy profile~$(\sigma_i)_{\ipi}$
  is also a Nash equilibrium in the game~$\ginf$ played on the 
  unraveling~$\tinf$ of~$G$.

  For the second step we consider~$\tfin$ the truncated tree of~$\tinf$
  of depth~$\depth=(n+2)\cdot|V|$. It is clear that 
  the strategy profile~$(\sigma_i)_{\ipi}$
  limited to this tree is also a Nash equilibrium of~$\gfin$.

  We know that~$|\alpha\beta| <(n+1)\cdot|V|$ and we set~$\gamma$ such
  that~$\alpha\beta\gamma$ is a prefix of~$\rho$
  and~$|\alpha\beta\gamma|=(n+2)\cdot|V|$.  Furthermore we have
  $\last(\alpha)=\last(\alpha\beta)$ and $\visit(\alpha)=\visit(\alpha
  \beta\gamma)$ (since~$\visit(\alpha)=\valtype(\rho)$).  Then this
  prefix~$\alpha\beta\gamma$ satisfies the properties described in
  Lemma~\ref{lemma:jeu djsn} (by setting~$\nbband=n+1$).  By
  Lemma~\ref{lemma:NE same type} we conclude that there exists a Nash
  equilibrium~$(\tau_i)_{\ipi}$ with finite memory such
  that~$\valtype((\tau_i)_{\ipi})=\visit(\alpha)$, that is, with the
  same \type\ as the initial Nash equilibrium~$(\sigma_i)_{\ipi}$.
  \qed
\end{proof}

%\fbox{Dire que la m\'emoire est exponentielle ?}

\section{Secure Equilibria}\label{sec:SE}

%\section{Existence of a Finite-memory Secure Equilibrium}\label{sec:SE}
%===========================

% \fbox{R\'esultat sous la forme d'un ssi}\\
% \fbox{Insister sur la diff\'erence de m\'ethode par rapport \`a EN}\\
% \fbox{Mettre en avant le contre-exemple 15}\\

%In this section we positively solve Problem~\ref{pbm 1} and
%Problem~\ref{pbm 2} for secure equilibria for two-player games thanks
%to Theorem~\ref{theo:ex ES} and Theorem~\ref{theo:fin-mem ES}
%respectively.

In the previous section, we positively solved Problem 1 and Problem 2
for Nash equilibria. We here solve these two problems for secure
equilibria, but in two-player games only. The main results are stated
in Theorems~\ref{theo:ex ES} and~\ref{theo:fin-mem ES} below. In this
section, we exclusively consider two-player games.

\begin{theorem}\label{theo:ex ES}
  In every quantitative two-player reachability game, there exists a
  finite-memory secure equilibrium.
\end{theorem}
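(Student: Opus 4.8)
The plan is to transpose the three-step scheme that established Theorem~\ref{theo:ex EN} to the secure setting: produce a secure equilibrium on a finite truncated tree $\gfin$, extract from its outcome a lasso-shaped prefix $\alpha\beta\gamma$, and lift it to a finite-memory secure equilibrium of $\ginf$ by equipping each player with a punishment strategy against the other's deviations, in the manner of Lemma~\ref{lemma:NE same type}.

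For the finite game, I would first note that each relation $\prec_j$ ($j\in\{1,2\}$) is a \emph{strict total order} on cost profiles: it orders them lexicographically, first by the cost of player~$j$ (smaller being better) and then by the cost of the other player (larger being better), so trichotomy holds even when costs are $+\infty$. Its reflexive closure $\precsim_j$ is therefore a total, reflexive and transitive relation, i.e. a preference relation in the sense of Theorem~\ref{theo:kuhn}. Feeding $\precsim_1$ and $\precsim_2$ into Kuhn's theorem yields a profile $(\sigma_1,\sigma_2)$ for which $\payoff(\rho')\precsim_j\payoff(\rho)$ for every unilateral deviation; since $\precsim_j$ is a total order this is exactly the non-existence of a $\prec_j$-profitable deviation, so $(\sigma_1,\sigma_2)$ is a secure equilibrium in $\gfin$, the secure analogue of Corollary~\ref{coro:kuhn}. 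Before lifting, I would normalise the outcome $\rho$ to the shape $\abo$ with $\payoff(\abo)=\payoff(\rho)$, by choosing $\alpha$ long enough to contain every goal that $\rho$ ever visits and $\beta$ a subsequent cycle visiting no new goal; this is the secure counterpart of the cost-bookkeeping of Lemmas~\ref{lemma 1} and~\ref{lemma 2} and Proposition~\ref{prop:fin-mem,NE abo}, and it is what lets the punishment argument refer interchangeably to $\payoff(\abo)$ and $\payoff(\rho)$.

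The lifting step is where the secure case genuinely departs from the Nash one, and it is the heart of the proof. Write $(x_1,x_2)=\payoff(\abo)$. When player~$1$ deviates, player~$2$ must now force the resulting play $\rho'$ to satisfy $\neg\,(\,(x_1,x_2)\prec_1\payoff(\rho')\,)$, that is $\payoff_1(\rho')>x_1$, or else $\payoff_1(\rho')=x_1$ together with $\payoff_2(\rho')\leq x_2$. Instead of the single safety game $\mathcal{G}_j$ of Lemma~\ref{lemma:jeu djsn}, I would introduce the zero-sum game $\mathcal{H}_1$ in which player~$1$ wins iff $(x_1,x_2)\prec_1\payoff$, and symmetrically $\mathcal{H}_2$. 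A short case analysis on whether $x_1,x_2$ are finite or $+\infty$ shows that the winning condition of $\mathcal{H}_j$ is always a Boolean combination of a \emph{bounded} reachability requirement and a safety requirement on the goal sets: when $x_1=+\infty$ and $x_2<+\infty$ the punisher wins iff he keeps player~$1$ out of $\F_1$ forever \emph{and} reaches $\F_2$ within $x_2$ steps, while when both are $+\infty$ we recover exactly the pure safety game $\mathcal{G}_j$ of Proposition~\ref{prop:gamei}. Interpreting these conditions on the product of $G$ with a bounded step counter and invoking Proposition~\ref{propGTW02}, the punisher has a \emph{finite-memory} winning strategy whenever it wins. The secure analogue of Lemma~\ref{lemma:jeu djsn} is then: from every vertex reachable along $\abo$ up to the chosen depth $\depth$, the punisher wins $\mathcal{H}_j$. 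Indeed, were player~$1$ to win $\mathcal{H}_1$ from such a vertex, the induced play would be consistent with $\sigma_2$ and would realise a profile strictly $\prec_1$-preferred to $(x_1,x_2)=\payoff(\rho)$ within $\depth$ steps, i.e. a $\prec_1$-profitable deviation in $\gfin$, contradicting that $(\sigma_1,\sigma_2)$ is secure there. Letting each player follow $\abo$ and switch to the relevant finite-memory winning strategy of $\mathcal{H}_j$ as soon as the other deviates, one obtains, as in Lemma~\ref{lemma:NE same type}, a finite-memory profile $(\tau_1,\tau_2)$ with outcome $\abo$ in which neither player has a $\prec_j$-profitable deviation.

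The step I expect to be the main obstacle is the punishment itself. Unlike the Nash coalition, whose only task is the \memoryless\ safety strategy keeping the deviator out of $\F_j$ forever, the secure punisher must control \emph{both} costs at once, and the two parts of its lexicographic objective can conflict when $x_2$ is finite: it must simultaneously delay the deviator's goal and secure its own goal in time. Proving that the secure property of $(\sigma_1,\sigma_2)$ in $\gfin$ really certifies the punisher's entire winning region for this composite objective — and that a single depth $\depth$ of order $|V|$ suffices for both players and for every reachable deviation point — is the delicate point, and it is precisely there that the normalisation $\payoff(\abo)=\payoff(\rho)$ and the careful case analysis on $x_1,x_2\in\N\cup\{+\infty\}$ are needed.
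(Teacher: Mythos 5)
Your overall scheme (Kuhn's theorem on $\gfin$ with the preference relations $\precsim_j$, extraction of a lasso $\alpha\beta\gamma$, lifting via punishment strategies) is the same as the paper's, and the first step --- that $\precsim_j$ is a preference relation so that Kuhn's theorem yields a secure equilibrium in $\gfin$ --- is exactly Proposition~\ref{prop:def ES} and Corollary~\ref{coro:kuhn ES}. The gap is in the lifting step, precisely where you locate ``the main obstacle'', and it is not merely delicate: the argument you sketch does not close. You define the punisher as a winning strategy in a zero-sum game $\mathcal{H}_1$ whose winning condition for player~1 is $(x_1,x_2)\prec_1\payoff$, and you certify the punisher's winning region by the contrapositive ``if player~1 wins $\mathcal{H}_1$ from a vertex of $\abo$, the induced play against $\sigma_2$ is a $\prec_1$-profitable deviation in $\gfin$''. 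This transfer fails. When $x_1=+\infty$ the objective of player~1 in $\mathcal{H}_1$ contains the \emph{unbounded} disjunct ``reach $\F_1$ eventually'', and a strategy winning a disjunction need not win either disjunct separately, so you cannot invoke the $|V|$-step bound of Proposition~\ref{prop:gamei} to force the improvement to materialise before depth~$\depth$. Concretely, player~1 could win $\mathcal{H}_1$ with a strategy whose unique play against $\sigma_2$ reaches $\F_2$ in time (so the bounded disjunct fails on that play) and reaches $\F_1$ only after depth~$\depth$; the restriction of that strategy to $\gfin$ is then \emph{not} a $\prec_1$-profitable deviation there, and no contradiction with the security of $(\sigma_1,\sigma_2)$ is obtained. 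So the lemma ``the punisher wins $\mathcal{H}_j$ from every vertex along $\abo$'' is not established, and the whole punishment construction rests on it.

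The paper circumvents this by never playing a composite zero-sum game. Its punishment (case~(b) of the proof of Proposition~\ref{prop:ES ssi ES fini}) is hybrid: after a deviation of player~1, player~2 follows $\sigma_2$ until depth $|\alpha|$ and only then switches to the pure safety strategy $\strat_{2,1}$ (or plays arbitrarily if $\alpha$ visits $\F_1$). Early improvements then transfer verbatim to $\gfin$ because the deviation play coincides with a $\sigma_2$-consistent play up to depth $|\alpha|$, while late visits to $\F_1$ are excluded by Lemma~\ref{lemma:jeu djsn} alone; the verification is a direct three-case analysis on how a deviation could be $\prec_1$-profitable, with no appeal to determinacy of a lexicographic objective. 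A second, smaller gap: your ``normalisation'' of the outcome to $\abo$ with $\alpha$ containing every goal visited by $\rho$ is not always possible within depth~$\depth$, since the last goal may be visited too close to the leaves of $\tfin$ for a cycle to follow. The paper needs Lemma~\ref{lemma:visit(alpha)=visit(rho)} exactly here: for a \emph{secure} equilibrium the slicing can only fail to satisfy $\visit(\alpha)=\visit(\rho)$ when $\visit(\alpha)=\emptyset$ and $\visit(\rho)=\{1,2\}$, and that residual configuration is handled by a separate, simpler construction with no punishment at all. Your proof needs an analogue of that lemma, or a separate treatment of this case.
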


The proof of Theorem~\ref{theo:ex ES} is based on the same ideas as
for the proof of Theorem~\ref{theo:ex EN} (existence of a Nash
equilibrium).  By Kuhn's theorem (Theorem~\ref{theo:kuhn}), there
exists a secure equilibrium in the game~$\gfin$ played on the finite
tree~$\tfin$, for any depth~$\depth$.  By choosing an adequate
depth~$\depth$, Proposition~\ref{prop:ES ssi ES fini} enables to
extend this secure equilibrium to a secure equilibrium in the infinite
tree~$\tinf$, and thus in~$\mathcal{G}$.

%input{SE} ci-dessous
%\input{def-ES} ci-dessous
The notion of secure equilibrium is based on the binary
relations~$\prec_j$ of Definition~\ref{def:ES}. One can easily see
that~$\prec_j$ is not reflexive.  To be able to apply Kuhn's theorem,
it is more convenient to define secure equilibria via a preference
relation.  Given two cost profiles~$(x_1,x_2)$ and~$(y_1,y_2)$:
$$ (x_1,x_2) \precsim_j (y_1,y_2) \quad \text{ iff } \quad (x_1,x_2)
\prec_j (y_1,y_2) \quad \vee \quad (x_1=y_1 \wedge x_2=y_2)\,.$$ The
relation $\precsim_j$ is clearly a preference relation\footnote{Remark
  that $\precsim_j$ is a kind of lexicographic order on~$(\N \cup
  \{+\infty\}) \times (\N \cup \{+\infty\})$.}.  We can now provide an
equivalent definition of secure equilibrium.

%\fbox{un mot sur n joueurs?}

\begin{proposition}\label{prop:def ES}
  A strategy profile~$(\sigma_1,\sigma_2)$ of a game~$\mathcal{G}$ is
  a secure equi\-li\-brium iff for all strategies~$\sigma_1'$ of
  player~$1$ in~$\mathcal{G}$, we have:
  $$\payoff(\rho') \precsim_1 \payoff(\rho)$$
  where~$\rho = \langle \sigma_1,\sigma_2\rangle$ and~$\rho' =
  \langle \sigma_1',\sigma_2 \rangle$, and symmetrically for
  all strategies~$\sigma_2'$ of player~$2$.
\end{proposition}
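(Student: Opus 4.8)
The plan is to reduce the statement to a purely arithmetic equivalence between the two relations~$\prec_1$ and~$\precsim_1$, after which the quantified conditions match term by term. Write~$\rho = \langle \sigma_1,\sigma_2\rangle$ and, for an arbitrary strategy~$\sigma_1'$ of player~$1$, write~$\rho' = \langle \sigma_1',\sigma_2\rangle$. By Definition~\ref{def:ES}, the profile~$(\sigma_1,\sigma_2)$ is secure on player~$1$'s side iff there is no~$\sigma_1'$ with~$\payoff(\rho) \prec_1 \payoff(\rho')$, that is, iff~$\payoff(\rho) \not\prec_1 \payoff(\rho')$ for every~$\sigma_1'$. The condition in the proposition requires~$\payoff(\rho') \precsim_1 \payoff(\rho)$ for every~$\sigma_1'$. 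Thus it suffices to prove the pointwise equivalence
\[
  (x_1,x_2) \not\prec_1 (y_1,y_2) \quad \Longleftrightarrow \quad (y_1,y_2) \precsim_1 (x_1,x_2)
\]
for all cost profiles~$(x_1,x_2)$ and~$(y_1,y_2)$, and then to instantiate it with~$(x_1,x_2)=\payoff(\rho)$ and~$(y_1,y_2)=\payoff(\rho')$ uniformly over~$\sigma_1'$.

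I would establish the pointwise equivalence by unfolding both relations and performing a short case distinction on the order between~$x_1$ and~$y_1$. Negating the definition of~$\prec_1$ yields that~$(x_1,x_2) \not\prec_1 (y_1,y_2)$ holds iff~$x_1 \leq y_1$ and~$(x_1 \neq y_1 \vee x_2 \geq y_2)$, whereas expanding the definition of~$\precsim_1$ gives that~$(y_1,y_2) \precsim_1 (x_1,x_2)$ holds iff~$y_1 > x_1 \vee (y_1 = x_1 \wedge y_2 \leq x_2)$. Comparing these two formulas in the three cases~$x_1 < y_1$,~$x_1 = y_1$, and~$x_1 > y_1$ shows that both sides are true in the first case, both reduce to~$x_2 \geq y_2$ in the second, and both are false in the third. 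This gives the desired equivalence.

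Finally, the player~$2$ side of Definition~\ref{def:ES} and of the proposition is obtained by exchanging the roles of the two players and replacing~$\prec_1,\precsim_1$ by~$\prec_2,\precsim_2$. Since these relations are defined symmetrically, the identical pointwise computation applies, so the~$\prec_2$-condition of security is equivalent to the~$\precsim_2$-condition of the proposition. Combining the two sides,~$(\sigma_1,\sigma_2)$ is a secure equilibrium exactly when both quantified~$\precsim_j$-conditions hold. There is no real obstacle here: the only point demanding care is negating~$\prec_1$ correctly and keeping the case analysis exhaustive, which in particular covers the case where~$x_1$ or~$y_1$ equals~$+\infty$ uniformly, because all comparisons take place in~$\N \cup \{+\infty\}$.
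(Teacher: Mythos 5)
Your proof is correct: the paper states Proposition~\ref{prop:def ES} without proof, treating it as immediate from the definitions, and your pointwise equivalence $(x_1,x_2) \not\prec_1 (y_1,y_2) \Leftrightarrow (y_1,y_2) \precsim_1 (x_1,x_2)$, verified by trichotomy on $x_1$ versus $y_1$ in the totally ordered set $\N \cup \{+\infty\}$, is exactly the computation being taken for granted. The instantiation with $\payoff(\rho)$ and $\payoff(\rho')$ and the symmetric treatment of player~$2$ are handled properly, so nothing is missing.
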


Since $\precsim_1$ and $\precsim_2$ are preference relations, we get
the next corollary by Kuhn's theorem.

\begin{corollary}\label{coro:kuhn ES}
  Let~$\mathcal{G}$ be a quantitative two-player reachability game
  and~$\tinf$ be the unraveling of~$G$. Let~$\gfin$ be the game played
  on the truncated tree of~$\tinf$ of depth~$\depth$, with~$\depth
  \geq 0$. Then there exists a secure equilibrium in~$\gfin$.
\end{corollary}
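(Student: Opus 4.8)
The plan is to apply Kuhn's theorem (Theorem~\ref{theo:kuhn}) directly to the finite game~$\gfin$, in complete analogy with the Nash case treated in Corollary~\ref{coro:kuhn}, but now equipping each player~$j \in \{1,2\}$ with the preference relation~$\precsim_j$ introduced just before the statement rather than the relation capturing Nash equilibrium. The crucial point that makes this instantiation legitimate is that Kuhn's theorem requires a \emph{preference} relation (total, reflexive and transitive), whereas the relation~$\prec_j$ of Definition~\ref{def:ES} is irreflexive and thus cannot be fed directly to the theorem. This is precisely why the reflexive variant~$\precsim_j$ and the reformulation of Proposition~\ref{prop:def ES} were set up beforehand.

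First I would note that~$\gfin$ is played on the truncated tree~$\tfin$, which is finite since~$G$ is finite and the depth~$\depth \geq 0$ is fixed; hence the hypotheses of Kuhn's theorem are met. For each player~$j$ I take~$\precsim_j$ as defined above, which is a preference relation (essentially a lexicographic order on~$(\N \cup \{+\infty\}) \times (\N \cup \{+\infty\})$, as remarked in the footnote). Kuhn's theorem then produces a strategy profile~$(\sigma_1,\sigma_2)$ such that for every player~$j \in \{1,2\}$ and every strategy~$\sigma_j'$ of player~$j$ we have~$\payoff(\rho') \precsim_j \payoff(\rho)$, where~$\rho = \langle \sigma_1,\sigma_2 \rangle$ is the outcome of~$(\sigma_1,\sigma_2)$ and~$\rho'$ is the outcome obtained by replacing~$\sigma_j$ with~$\sigma_j'$.

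It remains only to recognise this conclusion as the defining property of a secure equilibrium. By Proposition~\ref{prop:def ES}, the condition~$\payoff(\rho') \precsim_1 \payoff(\rho)$ for all~$\sigma_1'$ together with its symmetric counterpart for player~$2$ is exactly equivalent to~$(\sigma_1,\sigma_2)$ being a secure equilibrium in~$\gfin$. This finishes the argument. I do not expect any genuine obstacle here: the substantive work — reformulating secure equilibrium through the reflexive relation~$\precsim_j$ and verifying that~$\precsim_j$ is a bona fide preference relation — has already been carried out, so the corollary is a direct application of Kuhn's theorem that mirrors Corollary~\ref{coro:kuhn}.
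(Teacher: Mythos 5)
Your proposal is correct and follows exactly the route the paper takes: the paper states that since $\precsim_1$ and $\precsim_2$ are preference relations, the corollary follows from Kuhn's theorem, with Proposition~\ref{prop:def ES} providing the identification of the resulting profile as a secure equilibrium. You have simply made explicit the finiteness of~$\tfin$ and the role of the reflexive closure~$\precsim_j$, which is precisely the intended argument.
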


Now that we can guarantee the existence of secure equilibrium in
finite trees, it remains to show how to lift them to infinite trees.
The next proposition states that it is possible to extend a secure
equilibrium in~$\gfin$ to a secure equilibrium in the game~$\ginf$
with the same type, if the depth~$d$ is greater or equal to~$(|\Pi| +
1)\cdot 2\cdot|V|$ and there are only two players.  It also says that
we can construct a secure equilibrium in~$\gfin$ from a secure
equilibrium in~$\ginf$, while keeping the same type.

\begin{proposition}\label{prop:ES ssi ES fini}
  Let~$\mathcal{G}$ be a two-player game and $\tinf$ be the unraveling
  of~$G$.
  \begin{enumerate}
  \item[$(i)$] If there exists a secure equilibrium of a certain type in the
    game~$\ginf$, then there exists a secure equilibrium of the same
    type in the game~$\gfin$, for some depth~$\depth \geq (|\Pi| +
    1)\cdot 2\cdot|V|$.
  \item[$(ii)$] If there exists a secure equilibrium of a certain type in the
    game~$\gfin$, where~$\depth \geq (|\Pi| + 1)\cdot 2\cdot|V|$, then
    there exists a finite-memory secure equilibrium of the same type
    in the game~$\ginf$.
  \end{enumerate}
\end{proposition}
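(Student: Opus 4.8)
The plan is to mirror the structure used for Nash equilibria in Section~\ref{sec:NE}, but to handle the two directions separately and to keep track not only of each player's own cost but of the \emph{full cost profile}, since the preference relations~$\precsim_1,\precsim_2$ are sensitive to the opponent's cost as well. Throughout I fix $\depth = (|\Pi|+1)\cdot 2\cdot|V| = 3\cdot 2\cdot|V|$ (here $|\Pi|=2$) and recall that by Corollary~\ref{coro:kuhn ES} secure equilibria exist in $\gfin$ for every depth.

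For part~$(i)$ I would start from a secure equilibrium~$(\sigma_1,\sigma_2)$ in~$\ginf$ and first bring its outcome into a canonical shape~$\abo$, exactly as Proposition~\ref{prop:fin-mem,NE abo} does for Nash equilibria, using the cycle-elimination and cycle-repetition arguments of Lemmas~\ref{lemma 1} and~\ref{lemma 2}. The subtle point is that for secure equilibria I must verify that these surgeries preserve not just each player's own cost but the \emph{entire cost profile} along the outcome (so that a $\precsim_j$-profitable deviation against the new profile would yield one against the old), and that the punishment phase can be realized while respecting~$\prec_j$ rather than just the players' own payoffs. Once the outcome has the form~$\abo$ with $|\alpha\beta| < (|\Pi|+1)\cdot|V|$, I truncate at depth~$\depth$; the prefix of~$\abo$ of length~$\depth$ together with the coalition strategies furnishes a strategy profile in~$\gfin$, and I check that any $\precsim_j$-profitable deviation in~$\gfin$ would lift to one in~$\ginf$, contradicting the assumed security of~$(\sigma_1,\sigma_2)$. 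Since the truncation does not change which goal sets are reached along~$\abo$ up to depth~$\depth$, the type is preserved.

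For part~$(ii)$ I would reverse the construction: starting from a secure equilibrium~$(\sigma_1,\sigma_2)$ in~$\gfin$ with $\depth \geq (|\Pi|+1)\cdot 2\cdot|V|$, I extract a prefix~$\alpha\beta\gamma$ of the outcome satisfying the hypotheses of Lemma~\ref{lemma:jeu djsn} (using the counting argument from the proof of Proposition~\ref{prop:NE fin ds inf}: $\beta$ is a cycle that does not reach any new goal set, and the bound on~$\depth$ guarantees its existence). I then define the lifted profile~$(\tau_1,\tau_2)$ with outcome~$\abo$ and a punishment regime: if a player deviates, the opponent plays the coalition strategy~$\strat_{-j}$ of Lemma~\ref{lemma:jeu djsn} when the deviator's goal has not yet been reached along~$\alpha$, and otherwise replays~$\sigma$ inside the truncated tree. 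The finite-memory claim follows exactly as in Lemma~\ref{lemma:NE same type}, via an equivalence relation~$\sim_{\pun}$ of finite index that collapses repetitions of the cycle~$\beta$.

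The main obstacle is the security check for the deviating player, and here lies the genuine difference from the Nash case. A $\prec_j$-profitable deviation need not improve player~$j$'s own cost: player~$j$ may keep his cost fixed while \emph{raising the opponent's cost}. Consequently the coalition strategy~$\strat_{-j}$, which only guarantees that player~$j$ cannot reach~$\F_j$ sooner, is no longer automatically sufficient; I must additionally argue that after a deviation the opponent's punishment strategy can simultaneously block player~$j$'s goal \emph{and} not inadvertently let player~$j$ increase the opponent's cost in a way forbidden by~$\prec_j$. This is exactly the point where two players (rather than $n$) is essential: with a single opponent the roles of ``coalition'' and ``the other player whose cost appears in~$\prec_j$'' coincide, so the opponent can be assumed to already minimize his own cost along~$\abo$, and I can exploit the secure-equilibrium property of~$(\sigma_1,\sigma_2)$ in~$\gfin$—reformulated via the preference relation~$\precsim_j$ of Proposition~\ref{prop:def ES}—to rule out any $\precsim_j$-profitable deviation in~$\ginf$. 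Carefully propagating the full cost profile through the cycle surgeries and the truncation, rather than only the scalar costs, is the delicate bookkeeping that the proof must get right.
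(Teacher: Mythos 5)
Your plan for part $(ii)$ is missing the one ingredient that makes type preservation possible, namely Lemma~\ref{lemma:visit(alpha)=visit(rho)}. You extract $\alpha\beta\gamma$ as in Proposition~\ref{prop:NE fin ds inf} and lift to a profile with outcome $\alpha\beta^{\omega}$, but that profile has type $\visit(\alpha)$, and nothing in your argument forces $\visit(\alpha)=\visit(\rho)$; Example~\ref{ex:diff-types} shows that for Nash equilibria these sets genuinely differ, so the claim cannot follow from the Nash machinery alone. The paper first proves that for a \emph{secure} equilibrium in $\gfin$ either $\visit(\alpha)=\visit(\rho)$, or $\visit(\alpha)=\emptyset$ and $\visit(\rho)=\{1,2\}$ --- the proof builds a $\prec_j$-profitable deviation from $\strat_{1,2}$, and this is exactly where the two-player hypothesis enters --- and then treats the two cases by different constructions (in the exceptional case the lifted profile simply follows $(\sigma_1,\sigma_2)$ until depth $\max\{\ind_1(\rho),\ind_2(\rho)\}$ and then plays arbitrarily). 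Your closing discussion of why $\strat_{-j}$ alone is insufficient identifies the right difficulty, but ``exploit the secure-equilibrium property to rule out any $\precsim_j$-profitable deviation'' is not yet an argument; the paper needs a three-case comparison of the cost profiles of $\rho$, $\rho'$, $\pi$, $\pi'$, resting on the facts that $\rho'$ and $\pi'$ coincide up to depth $|\alpha|$ and that $\visit(\alpha)=\visit(\rho)$.

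For part $(i)$ you take a substantially harder road than necessary and leave its key step unproved. You propose to first normalise the outcome to $\alpha\beta^{\omega}$ via the cycle surgeries of Lemmas~\ref{lemma 1} and~\ref{lemma 2}; but those lemmas are proved only for Nash-style deviations, and extending them to $\prec_j$-profitable deviations is delicate because removing or repeating a cycle shifts the \emph{opponent's} cost, which enters the relation $\prec_j$. You flag this but do not resolve it. The paper avoids the issue entirely: it chooses $\depth=\max\{(|\Pi|+1)\cdot 2\cdot|V|,\ind_1(\pi),\ind_2(\pi)\}$ (this is why the statement says ``for \emph{some} depth'', a freedom you give up by fixing $\depth=(|\Pi|+1)\cdot 2\cdot|V|$ at the outset) and simply restricts $(\tau_1,\tau_2)$ to $\tfin$; any $\prec_1$-profitable deviation in $\gfin$, extended arbitrarily beyond depth $\depth$, is then $\prec_1$-profitable in $\ginf$, a contradiction. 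I would redo both parts along these lines.
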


To prove Proposition~\ref{prop:ES ssi ES fini}, we need the following
technical lemma whose hypotheses are the same as in
Lemma~\ref{lemma:jeu djsn}.  Recall that Lemma~\ref{lemma:jeu djsn}
states that for all~$j \in \Pi$ such that~$\alpha$ does not visit
$\F_j$, the players~$i\not = j$ can play together to prevent
player~$j$ from reaching his goal set~$\F_j$ from any
history~$h\vertu$ consistent with~$(\sigma_i)_{\ipimj}$ and such
that~$|h\vertu| \leq |\alpha \beta|$.  We denote by~$\strat_{-j}$ the
\memoryless\ winning strategy of the coalition, and for each player~$i
\not= j$, $\strat_{i,j}$ the \memoryless\ strategy of player~$i$
in~$\mathcal{G}$ induced by~$\strat_{-j}$.

\begin{lemma}\label{lemma:visit(alpha)=visit(rho)}
  Suppose~$d \geq 0$. Let~$(\sigma_1,\sigma_2)$ be a secure
  equilibrium in $\gfin$ and $\rho=\langle \sigma_1,\sigma_2 \rangle$
  its outcome.  Assume that $\rho$ has a prefix $\alpha\beta\gamma$,
  where $\beta$ contains at least one vertex, such that
  \begin{align*}
    & \visit(\alpha)=\visit(\alpha\beta\gamma)\\
    & \last(\alpha)=\last(\alpha\beta)\\
    & |\alpha\beta| \leq \nbband\cdot|V|\\
    & |\alpha\beta\gamma| = (\nbband+1)\cdot|V|
  \end{align*}
  for some $\nbband \geq 1$. Then we have 
  $$(\visit(\alpha)\not=\emptyset \vee \visit(\rho) \not= \{1,2\})
  \Rightarrow \visit(\alpha)=\visit(\rho).$$
\end{lemma}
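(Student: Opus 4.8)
The plan is to argue by cases on $\visit(\alpha)$, exploiting throughout that $\alpha$ is a prefix of $\rho$, so that $\visit(\alpha) \subseteq \visit(\rho)$ and only the reverse inclusion needs to be established under the stated antecedent. Two preliminary remarks streamline the work. First, the sandwich $\visit(\alpha) \subseteq \visit(\alpha\beta) \subseteq \visit(\alpha\beta\gamma) = \visit(\alpha)$ forces $\visit(\alpha\beta) = \visit(\alpha)$, so any player $i \in \visit(\rho) \setminus \visit(\alpha)$ first meets $\F_i$ strictly after position $|\alpha\beta\gamma| = (\nbband+1)\cdot|V|$, whence $\payoff_i(\rho) > (\nbband+1)\cdot|V|$, while any $i \in \visit(\alpha)$ satisfies $\payoff_i(\rho) \leq |\alpha|$. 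Second, since the hypotheses here coincide with those of Lemma~\ref{lemma:jeu djsn}, I may invoke it directly. The case $\visit(\alpha) = \{1,2\}$ is immediate: $\visit(\alpha) \subseteq \visit(\rho) \subseteq \{1,2\}$ forces equality.

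Next I would treat $\visit(\alpha) = \{1\}$, the case $\{2\}$ being symmetric; here $\visit(\alpha) \neq \emptyset$, so I must prove $\visit(\rho) = \{1\}$. Suppose not, so $2 \in \visit(\rho)$ with $\payoff_2(\rho)$ finite but exceeding $(\nbband+1)\cdot|V|$. I would produce a $\prec_1$-profitable deviation $\sigma_1'$, contradicting secureness. As $2 \notin \visit(\alpha)$ and the history $\alpha\beta$ is consistent with $\sigma_1 = (\sigma_i)_{\ipimj}$ (for $j=2$) and of length $|\alpha\beta|$, Lemma~\ref{lemma:jeu djsn} gives that the coalition $\{1\}$ wins $\mathcal{G}_2$ from $\last(\alpha\beta)$. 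So I let $\sigma_1'$ follow $\sigma_1$ up to $\alpha\beta$ and then switch to the \memoryless\ strategy $\strat_{1,2}$, keeping the token in $V \setminus \F_2$. Because $1 \in \visit(\alpha)$ yields $\payoff_1(\rho) \leq |\alpha| \leq |\alpha\beta|$, player~$1$ still meets $\F_1$ at the same position, so $\payoff_1(\rho') = \payoff_1(\rho)$, whereas $\payoff_2(\rho') = +\infty$ in $\gfin$ since $\F_2$ is avoided after $\alpha\beta$ and was not met within $\alpha\beta$. Thus $\payoff(\rho) \prec_1 \payoff(\rho')$ (equal first coordinate, larger second), the desired contradiction.

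The remaining case is $\visit(\alpha) = \emptyset$, where the antecedent forces $\visit(\rho) \neq \{1,2\}$; I must then rule out $\visit(\rho) = \{1\}$ and, symmetrically, $\{2\}$, leaving $\visit(\rho) = \emptyset = \visit(\alpha)$. Assume $\visit(\rho) = \{1\}$, so $\payoff_1(\rho)$ is finite while $\payoff_2(\rho) = +\infty$. This time I would have player~$2$ deviate: since $1 \notin \visit(\alpha)$, Lemma~\ref{lemma:jeu djsn} (for $j=1$) gives that the coalition $\{2\}$ wins $\mathcal{G}_1$ from $\last(\alpha\beta)$, so $\sigma_2'$ follows $\sigma_2$ through $\alpha\beta$ and then plays $\strat_{2,1}$ to keep the token in $V \setminus \F_1$. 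Since $\F_1$ is not met in $\alpha\beta$ (as $\visit(\alpha\beta) = \emptyset$), this gives $\payoff_1(\rho'') = +\infty$, and hence $\payoff(\rho) \prec_2 \payoff(\rho'')$ whether player~$2$'s own cost stays $+\infty$ or drops to a finite value (both subcases favour player~$2$). This contradicts secureness and closes the case.

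The only delicate points I anticipate are bookkeeping: checking that $\last(\alpha\beta)$ meets the precise consistency-and-length conditions of Lemma~\ref{lemma:jeu djsn}, and verifying that each constructed profile is genuinely $\prec_j$-profitable in $\gfin$, where avoiding a goal set up to depth $\depth$ already yields cost $+\infty$. These constructions also explain the role of the antecedent: in the excluded configuration $\visit(\alpha) = \emptyset$ together with $\visit(\rho) = \{1,2\}$, a player switching to a punishing strategy after $\alpha\beta$ has not yet reached his own goal within $\alpha$, so the punishment would raise his own cost from finite to $+\infty$ and fail to be $\prec_j$-profitable; this is precisely why the conclusion can fail there and must be ruled out by hypothesis.
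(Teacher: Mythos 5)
Your proposal is correct and follows essentially the same route as the paper: both invoke Lemma~\ref{lemma:jeu djsn} to obtain the memoryless punishing strategy for the lone coalition player and turn it into a $\prec_j$-profitable deviation contradicting secureness, the paper merely packaging the argument as a single contradiction hypothesis ($2 \in \visit(\rho)\setminus\visit(\alpha)$, whence $1\in\visit(\alpha)$ or $1\notin\visit(\rho)$) with the switch occurring after $\alpha$ rather than your explicit case split on $\visit(\alpha)$ with the switch after $\alpha\beta$. Both variants are valid and the differences are cosmetic.
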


In particular, Lemma~\ref{lemma:visit(alpha)=visit(rho)} implies that
if~$\alpha$ visits none of the goal sets, then $\rho$ visits either
both goal sets or none. Notice that in the case of Nash equilibria, we
can have situations contradicting
Lemma~\ref{lemma:visit(alpha)=visit(rho)}, and in particular the
previous situation, as it can be seen in Example~\ref{ex:diff-types}.

\begin{proof}
  By contradiction, assume that~$2 \in \visit(\rho) \setminus
  \visit(\alpha)$ (the case where $1 \in \visit(\rho) \setminus
  \visit(\alpha)$ is symmetric).  The hypothesis implies that~$1 \in
  \visit(\alpha)$ or~$1 \not \in \visit(\rho)$.

  By Lemma~\ref{lemma:jeu djsn}, player~1 wins the
  game~$\mathcal{G}_2$ from~$\last(\alpha)$, that is, has a
  \memoryless\ winning strategy~$\strat_{1,2}$ from this vertex. Then
  if player~1 plays according to~$\sigma_1$ until depth~$|\alpha|$,
  and then switches to~$\strat_{1,2}$ from~$\last(\alpha)$, this
  strategy is a $\prec_1$-profitable deviation for player~1
  w.r.t.~$(\sigma_1,\sigma_2)$. Indeed, if~$1 \in \visit(\alpha)$,
  player~1 manages to increase player~2's cost while keeping his own
  cost.  On the other hand, if~$1 \not \in \visit(\rho)$, either
  player~1 succeeds in reaching his goal set (i.e. strictly decreasing
  his cost), or he does not reach it (then gets the same cost as
  in~$\rho$) but succeeds in increasing player~2's cost. Thus we get a
  contradiction.  \qed
\end{proof}

We can now give the proof of Proposition~\ref{prop:ES ssi ES fini}.
The idea for showing case $(i)$ is to look at the play~$\pi$ of the
secure equilibrium in~$\ginf$ and consider the depth~$\depth$ needed
to visit all the goal sets of the players in~$\visit(\pi)$. Then, the
secure equilibrium in~$\gfin$ is defined exactly as the secure
equilibrium of~$\ginf$.

The proof of case $(ii)$ works pretty much as the one of
Proposition~\ref{prop:NE fin ds inf} (whereas the latter proposition
does not preserve the type of the Nash equilibrium). Thanks to
Lemma~\ref{lemma:visit(alpha)=visit(rho)}, the proof reduces into only
two cases depending on when the goal sets are visited. In the most
interesting case, a well-chosen prefix~$\alpha\beta$, with~$\beta$
being a cycle, is first extracted from the outcome~$\rho$ of the
secure equilibrium~$(\sigma_1, \sigma_2)$ of~$\gfin$. The outcome of
the required secure equilibrium of~$\ginf$ will be equal to~$\abo$. As
soon as a player deviates from this play, the other player punishes
him, but the way to define the punishment is here more involved than
in the proof of Proposition~\ref{prop:NE fin ds inf}.  In the other
case, the proof is simpler, but the ideas are quite the same.

Before entering the details, let us introduce a notation. For any play
$\rho= \rho_0\rho_1\ldots$ of $\mathcal{G}$ and any player~$i \in
\Pi$, we define $\ind_i(\rho)$ as the least index $l$ such that
$\rho_l \in \F_i$ if it exists, or $-1$ if not\footnote{We are
  conscious that it is counterintuitive to use the particular value
  $-1$, but it is helpful in the proofs.}.

\begin{proof}[of Proposition~\ref{prop:ES ssi ES fini}]
  First let us begin with the proof of~$(i)$. Suppose that there
  exists a secure equilibrium~$(\tau_1,\tau_2)$ in~$\ginf$ and that
  the play~$\pi$ is the outcome of this strategy profile.  Let us set
  $\depth := \max\{(|\Pi| + 1)\cdot 2\cdot|V|,\ind_1(\pi),
  \ind_2(\pi)\}$ and define~$(\sigma_1,\sigma_2)$ as the strategy
  profile in~$\gfin$ corresponding to the strategies $(\tau_1,\tau_2)$
  restricted to the finite tree.  Clearly the outcome~$\rho$
  of~$(\sigma_1,\sigma_2)$ is a prefix of~$\pi$
  and~$\visit(\rho)=\visit(\pi)$, so $(\sigma_1,\sigma_2)$ and
  $(\tau_1,\tau_2)$ are of the same type.  It remains to show
  that~$(\sigma_1,\sigma_2)$ is a secure equilibrium in~$\gfin$.
 
  Assume by contradiction that player~1 has a $\prec_1$-profitable
  deviation~$\sigma_1'$ w.r.t. $(\sigma_1,\sigma_2)$ (the case of
  player~2 is symmetric).  We write~$\rho'$ for the outcome
  of~$(\sigma_1',\sigma_2)$ in $\gfin$.  There are two cases to
  consider: either player~$1$ manages to decrease his cost in~$\rho'$
  w.r.t.~$\rho$, or he pays the same cost as in~$\rho$ but he is able
  to increase the cost of player~$2$ in~$\rho'$ w.r.t.~$\rho$.  In
  both cases, if player~1 plays according to~$\sigma_1'$ in~$\ginf$
  until depth~$\depth$ and then arbitrarily, one can easily be
  convinced that we get a $\prec_1$-profitable
  deviation\footnote{Notice that in the second case, when~$\rho$ does
    not visit~$\F_1$ in~$\gfin$, player~1 may reach his goal set
    in~$\ginf$ when deviating in this way, and this would be
    profitable for him in this game.}  w.r.t.~$(\tau_1,\tau_2)$
  in~$\ginf$.%\fbox{ok?}}.
%  (the main argument is that $\visit(\rho)=\visit(\pi)$).
  % \fbox{en dire plus? Si $\F_2$ pas vu ds
  %   $\rho'$, il peut \^etre vu ds $\pi'$, mais
  %   son co\^ut est quand m\^eme augment\'e ds $\pi'$.}
  This leads to a contradiction. %\fbox{Assez clair?}
  
  Now let us proceed to the proof of~$(ii)$.
  Let~$(\sigma_1,\sigma_2)$ be a secure equilibrium in the
  game~$\gfin$, where~$\depth \geq (|\Pi| + 1)\cdot 2\cdot|V|$,
  and~$\rho$ its outcome.  We define the prefixes~$\la\lap$
  and~$\alpha\beta\gamma$ as in the proof of Proposition~\ref{prop:NE
    fin ds inf} (see Figure~\ref{fig:slicing play}).

  By Lemma~\ref{lemma:visit(alpha)=visit(rho)} there are only two
  cases to consider:
  \begin{enumerate}
  \item[(a)] $\visit(\alpha)=\emptyset$ and $\visit(\rho)=\{1,2\}$;
  \item[(b)] $\visit(\alpha)=\visit(\rho)$.
  \end{enumerate}
  We define a different secure equilibrium according to the case.

  Let us start with case~(a): $\visit(\alpha)=\emptyset$ and
  $\visit(\rho)=\{1,2\}$. We define the following strategy profile:
  \[ \tau_i(h) = \left\{
  \begin{array}{ll}
    \sigma_i(h) & \mbox{ if~$|h| < \max\{\ind_1(\rho),\ind_2(\rho)\}$,}\\
    \textit{arbitrary} & \mbox{ otherwise.}\\
  \end{array}\right.\]
  where~$i=1,2$, and \textit{arbitrary} means that the next vertex is
  chosen arbi\-tra\-rily, but in a \memoryless\ way. 
  %\fbox{insister?tjs le m\^eme sommet choisi?}  
  Note that the outcome of~$(\tau_1,\tau_2)$ is of the form
  $\alpha'(\beta')^\omega$ where
  $\visit(\alpha')=\visit(\rho)=\{1,2\}$ and $\beta'$ is a cycle.
  So,~$(\tau_1,\tau_2)$ has the same type as~$(\sigma_1,\sigma_2)$.
  It remains to prove that $(\tau_1,\tau_2)$ is a finite-memory secure
  equilibrium in~$\ginf$.

  Assume by contradiction that player~1 has a~$\prec_1$-profitable
  deviation~$\tau_1'$ w.r.t. $(\tau_1,\tau_2)$ in~$\ginf$ (the case
  for player~2 is symmetric).  The strategy~$\sigma_1'$ equal
  to~$\tau_1'$ in $\gfin$ is clearly a~$\prec_1$-profitable deviation
  w.r.t. $(\sigma_1,\sigma_2)$, which is a contradiction with the fact
  that~$(\sigma_1,\sigma_2)$ is a secure equilibrium in
  $\gfin$. %\fbox{plus de d\'etails?}
  Moreover, as done in the proof of Lemma~\ref{lemma:NE same type},
  $(\tau_1,\tau_2)$ is a finite-memory strategy profile.

  %If player~1 is able
  %to decrease his cost by playing~$\tau_1'$, it means that he 

  %%%%%%%% SECOND CASE %%%%%%%%%

  Now we consider case (b): $\visit(\alpha)= \visit(\rho)$.  Like in
  the proof of Lemma~\ref{lemma:NE same type} we consider the infinite
  play~$\abo$ in the game~$\ginf$.  The basic idea of the strategy
  profile~$(\tau_1,\tau_2)$ is the same as for the Nash equilibrium
  case: player~$2$ (resp.~1) plays according to~$\abo$ and punishes
  player~$1$ (resp.~2) if he deviates from~$\abo$, in the following
  way. Suppose that player~1 deviates (the case for player~2 is
  similar). Then player~2 plays according to~$\sigma_2$ until
  depth~$|\alpha|$, and after that, he plays arbitrarily if $\alpha$
  visits~$\F_1$, otherwise he plays according to~$\strat_{2,1}$.

  We define the same punishment function~$\pun$ as in the proof of
  Lemma~\ref{lemma:NE same type}: for $v_0$, we define $\pun(v_0) =
  \bot$ and for~$h \in V^+$ such that~$\last(h) \in V_i$ and $v \in
  V$, we let:
  \[ \pun(hv) = \left\{
  \begin{array}{ll}
    \bot & \mbox{ if~$\pun(h)= \bot$ and $hv < \abo$,}\\
    i & \mbox{ if~$\pun(h)= \bot$ and $hv \not< \abo$,}\\
    \pun(h) & \mbox{ otherwise ($\pun(h) \not=\bot$).} 
  \end{array}\right. \]

The definition of the secure equilibrium~$(\tau_1,\tau_2)$ is as
follows: for $h \in H$ such that $\last(h) \in V_i$:
  \[ \tau_i(h) = \left\{
  \begin{array}{ll}
    v
    & \mbox{ if~$\pun(h)=\bot$ ($h<\abo$); such that
     $hv<\abo$,}\\
    \textit{arbitrary} & \mbox{ if~$\pun(h)=i$,}\\
    \sigma_i(h) & \mbox{ if~$\pun(h) \not= \bot,i$ and
     $|h| \leq |\alpha|$,}\\
    \strat_{i,\pun(h)}(h) & \mbox{ if~$\pun(h) \not= \bot,i$, 
     $|h| > |\alpha|$ and $\alpha$ does not visit $\F_{\pun(h)}$,}\\
    \textit{arbitrary} & \mbox{ otherwise ($\pun(h) \not= \bot,i$,
     $|h| > |\alpha|$ and $\alpha$ visits~$\F_{\pun(h)}$)}\\
  \end{array}\right.\]
  where~$i=1,2$, and \textit{arbitrary} means that the next vertex is
  chosen arbi\-tra\-rily (in a \memoryless\ way). Clearly the outcome
  of~$(\tau_1,\tau_2)$ is the play~$\abo$, and the type
  of~$(\tau_1,\tau_2)$ is equal to~$\visit(\alpha)=\visit(\rho)$, the
  type of~$(\sigma_1,\sigma_2)$.  Moreover, as done in the proof of
  Lemma~\ref{lemma:NE same type}, $(\tau_1,\tau_2)$ is a finite-memory
  strategy profile.
  
  Remark that the definition of the strategy profile~$(\tau_1,\tau_2)$
  is a little different from the one in the proof of
  Lemma~\ref{lemma:NE same type} because here, if player~1 deviates
  (for example), then player~2 has to prevent him from reaching his
  goal set~$\F_1$ (faster), or having the same cost but succeeding in
  increasing player~2's cost.

  It remains to show that~$(\tau_1,\tau_2)$ is a secure equilibrium in
  the game~$\ginf$.  Assume by contradiction that there exists
  a~$\prec_1$-profitable deviation~$\tau_1'$ for player~$1$
  w.r.t.~$(\tau_1,\tau_2)$. The case of a~$\prec_2$-profitable
  deviation~$\tau_2'$ for player~$2$ is similar.  We construct a
  play~$\rho'$ in~$\gfin$ as follows: player~$1$ plays according to
  the strategy~$\tau_1'$ restricted to $\gfin$ (denoted
  by~$\sigma_1'$) and player~$2$ plays according to~$\sigma_{2}$.
  Thus the play~$\rho'$ coincide with the play~$\pi'=\langle
  \tau_1',\tau_{2} \rangle$ at least until depth~$|\alpha|$ (by
  definition of~$\tau_{2}$); it can differ afterwards.  We have:
  \[
  \begin{array}{lllll}
    \rho&=&\langle \sigma_1,\sigma_{2}\rangle
    &\text{ of cost profile } & (x_1,x_{2})\\
    \rho'& =&\langle \sigma_1',\sigma_{2} \rangle
    &\text{ of cost profile } & (x_1',x_{2}')\\
    \pi &=&\langle \tau_1,\tau_{2}\rangle
    &\text{ of cost profile } & (y_1,y_{2})\\
    \pi'&=&\langle \tau_1',\tau_{2}\rangle
    &\text{ of cost profile } & (y_1',y_{2}').
  \end{array} \]
  The situation is depicted in Figure~\ref{fig:ES plays}.

  %\documentclass{article}
%\usepackage{quant_games}
%\usepackage{pgf,tikz}
%\usetikzlibrary {arrows}
%\usetikzlibrary {decorations}
%\usetikzlibrary {snakes}
%\usepackage{amsmath,amssymb,stmaryrd,wasysym}

%\begin{document}

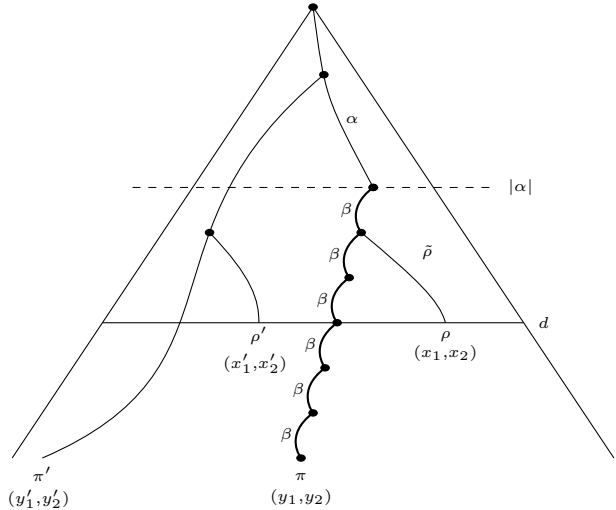
\begin{figure}[h!]
  \centering
  \begin{tikzpicture}[yscale=.6,xscale=.8]
    \everymath{\scriptstyle}
    \draw (0,10) -- (-5,0);
    \draw (0,10) -- (5,0);
    
    \draw[very thin, dashed] (-3,6) -- (3,6);
    \draw (3.1,6) node[right] (q0) {$|\alpha|$};

    \draw (.4,7.5) node[right] (q0) {$\alpha$};

    \draw (-3.5,3) -- (3.5,3);
    \draw (3.6,3) node[right] (q0) {$\depth$};

    \draw[fill=black] (0,10) circle (2pt);
    \draw[] (0,10) .. controls (.2,8) .. (1,6);
    \draw (.8,5.5) node[left] (q0) {$\beta$};
    \draw (.6,4.5) node[left] (q0) {$\beta$};
    \draw (.4,3.5) node[left] (q0) {$\beta$};
    \draw (.2,2.5) node[left] (q0) {$\beta$};
    \draw (0,1.5) node[left] (q0) {$\beta$};
    \draw (-.2,.5) node[left] (q0) {$\beta$};

    \draw[fill=black] (1,6) circle (2pt); 

    \draw[] (.8,5) .. controls (1.7,4) and (2.1,3.5) .. (2.2,3);
    \draw (2.2,3) node[below] (q0) {$\rho$};
    \draw (2.2,2.7) node[below] (q0) {$(x_1,x_2)$};

    \draw[fill=black] (-1.72,5) circle (2pt);
    \draw[] (-1.72,5) .. controls (-1,4) and (-.9,3.5) .. (-.9,3);
    \draw (-.9,3.1) node[below] (q0) {$\rho'$};
    \draw (-.96,2.5) node[below] (q0) {$(x'_1,x'_2)$};

    \draw (1.9,4.9) node[below] (q0) {$\tilde{\rho}$};

    \draw[fill=black] (1,6) circle (2pt);
    \draw[thick] (1,6) .. controls (.5,5.5) and (.8,5) .. (.8,5);
    
    \begin{scope}[yshift=-1cm,xshift=.8cm]
      \draw[fill=black] (0,6) circle (2pt);
      \draw[thick] (0,6) .. controls (-.5,5.5) and (-.2,5) .. (-.2,5);
    \end{scope}
    \begin{scope}[yshift=-2cm,xshift=.6cm]
      \draw[fill=black] (0,6) circle (2pt);
      \draw[thick] (0,6) .. controls (-.5,5.5) and (-.2,5) .. (-.2,5);
    \end{scope}
    \begin{scope}[yshift=-3cm,xshift=.4cm]
      \draw[fill=black] (0,6) circle (2pt);
      \draw[thick] (0,6) .. controls (-.5,5.5) and (-.2,5) .. (-.2,5);
    \end{scope}
    \begin{scope}[yshift=-4cm,xshift=.2cm]
      \draw[fill=black] (0,6) circle (2pt);
      \draw[thick] (0,6) .. controls (-.5,5.5) and (-.2,5) .. (-.2,5);
    \end{scope}
    \begin{scope}[yshift=-5cm]
      \draw[fill=black] (0,6) circle (2pt);
      \draw[thick] (0,6) .. controls (-.5,5.5) and (-.2,5) .. (-.2,5);
    \end{scope}
    \begin{scope}[yshift=-6cm,xshift=-.2cm]
      \draw[fill=black] (0,6) circle (2pt);
    \end{scope}

    \draw (-.2,-.15) node[below] (q0) {$\pi$};
    \draw (-.2,-.5) node[below] (q0) {$(y_1,y_2)$};

    \draw[fill=black] (.18,8.5) circle (2pt);
    \draw[] (.18,8.5) .. controls (-3,5) and (-1,2) .. (-4.5,0);
    \draw (-4.5,0) node[below] (q0) {$\pi'$};
    \draw (-4.56,-.5) node[below] (q0) {$(y'_1,y'_2)$};

%    \draw[fill=black] (-.825,2.3) circle (2pt);
%    \draw[thick,dashed] (-.825,2.3) .. controls (1,.8)  .. (1.2,0);
    
%    \draw (-1.5,-.1) node[below] (q0) {$\rho = \langle
%      (\sigma_i)_{\ipi} \rangle$};
%    \draw (-4.5,.1) node[below] (q0) {$\rho'$};
%    \draw (1.3,.1) node[below] (q0) {$\rho''$};
    
%    \draw (0,9) node[below] (q0) {$\la$};
%    \draw (1,6) node[below] (q0) {$\lap$};
%    \draw (-1.3,2) node[below] (q0) {$\rhobar$};
  \end{tikzpicture}
  \caption{Plays~$\rho$ and~$\pi$, and their respective deviations~$\rho'$
    and~$\pi'$.}
  \label{fig:ES plays}
\end{figure}

%\end{document} 

  By contradiction, we assumed that $\tau_1'$ is a
  $\prec_1$-profitable deviation for player~1
  w.r.t. $(\tau_1,\tau_2)$, i.e. $(y_1,y_2) \prec_1 (y_1',y_2')$. Now
  we are going to show that~$(x_1,x_{2}) \prec_1 (x_1',x_{2}')$,
  meaning that~$\sigma_1'$ is a~$\prec_1$-profitable deviation for
  player~$1$ w.r.t.~$(\sigma_1,\sigma_{2})$ in~$\gfin$. This will lead
  to the contradiction. As~$\tau_1'$ is a~$\prec_1$-profitable
  deviation w.r.t.~$(\tau_1,\tau_2)$, one of the following three cases
  stands.
  \begin{enumerate}
  \item[(1)] $y_1' < y_1 < +\infty$. 

    \smallskip
    As~$\pi=\abo$, it means that~$\alpha$ visits~$F_1$, and then:
    $$y_1'<y_1 = x_1 \leq |\alpha|.~$$ As~$y_1'<|\alpha|$, we
    have~$x_1'=y_1'$ (as~$\rho'$ and~$\pi'$ coincide until
    depth~$|\alpha|$). Therefore $x_1' < x_1$, and $(x_1,x_{2})
    \prec_1 (x_1',x_{2}')$.
    \smallskip
  \item[(2)] $y_1' < y_1 = +\infty$. 
    
    \smallskip
    If~$y_1' \leq |\alpha|$, we have~$x_1'=y_1'$ (by the same argument
    as before).  As~$\visit(\alpha)=\visit(\rho)$, we have
    $x_1=y_1=+\infty$ and $x_1' < x_1$ (and so $(x_1,x_{2}) \prec_1
    (x_1',x_{2}')$).

    We show that the case~$y_1' > |\alpha|$ is impossible.  By
    definition of~$\tau_{2}$ the play~$\pi'$ is consistent
    with~$\sigma_{2}$ until depth~$|\alpha|$, and then
    with~$\strat_{2,1}$ (as~$y_1=+\infty$). By Lemma~\ref{lemma:jeu
      djsn} the play~$\pi'$ can not visit~$\F_1$ after a depth~$>
    |\alpha|$.
    \smallskip
  \item[(3)] $y_1=y_1'$ and~$y_{2} < y_{2}'$. 
    
    \smallskip
    Note that this implies~$y_{2} <+\infty$ and~$x_{2}=y_{2}$
    (as~$\pi=\abo$).  Since~$\rho'$ and~$\pi'$ coincide until
    depth~$|\alpha|$, $y_2 < y_2'$ and~$x_2=y_2\leq |\alpha|$, we have
    $$x_{2}=y_{2} < x_{2}'$$ showing that the cost of player~2 is
    increased.  In order to ensure that $\sigma_1'$ is a
    $\prec_1$-profitable deviation, it remains to show that either
    player~$1$ keeps the same cost, or he decreases his cost.

    If~$y_1'=y_1<+\infty$, it follows as in the first case that:
    $$y_1 =x_1 \leq |\alpha| \quad \text{and} \quad x_1'= y_1'.$$
    Therefore~$x_1=x_1'$, i.e. player~$1$ has the same cost in~$\rho$
    and~$\rho'$. And so, $(x_1,x_{2}) \prec_1 (x_1',x_{2}')$.

    On the contrary, if~$y_1'=y_1=+\infty$, it follows
    that~$x_1=+\infty$ (as~$\visit(\alpha)=\visit(\rho)$). And so, we
    have that $x_1' < +\infty = x_1$, or $x_1'=x_1$. But in both
    cases, it holds that $(x_1,x_{2}) \prec_1 (x_1',x_{2}')$.
  \end{enumerate}
  In conclusion, we constructed a~$\prec_1$-profitable deviation
  $\sigma_1'$ w.r.t.~$(\sigma_1,\sigma_2)$ in $\gfin$, and then we get
  a contradiction.  \qed
\end{proof}

\begin{remark}
  Let us notice that in case~$(i)$ of Proposition~\ref{prop:ES ssi ES
    fini}, the proof remains valid if we take $\depth =
  \max\{0,\ind_1(\pi),\ind_2(\pi)\}$. Thus, in the statement of
  case~$(i)$, the constraint $\depth \geq (|\Pi| + 1)\cdot 2\cdot|V|$
  can be replaced by $\depth \ge 0$.
\end{remark}

%\fbox{ce qui suit est il a la bonne place? faire une seule preuve du
%  theoreme?}

%Thanks to Proposition~\ref{prop:ES ssi ES fini}, if we show that
%Kuhn's theorem holds for secure equilibria, then we will be able to
%prove Theorem~\ref{theo:ex ES}.

We can now proceed to the proof of Theorem~\ref{theo:ex ES}.

\begin{proof}[of Theorem~\ref{theo:ex ES}]
  Let us set $\depth:=(\Pi+1)\cdot 2 \cdot |V|$ and apply
  Corollary~\ref{coro:kuhn ES} on the game~$\gfin$.  Then we get a
  secure equilibrium in this game. By Proposition~\ref{prop:ES ssi ES
    fini} there exists in~$\mathcal{G}$ a finite-memory secure
  equilibrium with the same type.  \qed
\end{proof}

Theorem~\ref{theo:ex ES} positively answers to Problem~\ref{pbm 1} for
secure equilibria in \emph{two-player} games. The next theorem solves
Problem~\ref{pbm 2} for the same kind of games.

\begin{theorem}\label{theo:fin-mem ES}
  If there exists a secure equilibrium in a quantitative two-player
  reachability game~$\mathcal{G}$, then there exists a finite-memory
  secure equilibrium of the same \type\ in~$\mathcal{G}$.
\end{theorem}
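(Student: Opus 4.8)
The plan is to derive Theorem~\ref{theo:fin-mem ES} directly from Proposition~\ref{prop:ES ssi ES fini} by composing its two parts, exactly mirroring the two-step structure used for Nash equilibria in Theorem~\ref{theo:fin-mem}. The whole point of stating Proposition~\ref{prop:ES ssi ES fini} as an equivalence between secure equilibria in $\ginf$ and in a sufficiently deep $\gfin$---with \emph{both} directions preserving the type---is that chaining the two directions converts an arbitrary secure equilibrium into a finite-memory one of the same type.

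First I would take a secure equilibrium $(\sigma_1,\sigma_2)$ of $\mathcal{G}$ and, using the equivalence between $\mathcal{G}$ and the game $\ginf$ played on the unraveling $\tinf$, regard it as a secure equilibrium of $\ginf$; write $T = \valtype((\sigma_1,\sigma_2)) = \visit(\langle \sigma_1,\sigma_2 \rangle)$ for its type. Applying part~$(i)$ of Proposition~\ref{prop:ES ssi ES fini} then yields, for some depth $\depth \geq (|\Pi|+1)\cdot 2\cdot|V|$, a secure equilibrium of type $T$ in the truncated game $\gfin$. Because this $\depth$ satisfies the hypothesis of part~$(ii)$, a second application---this time of part~$(ii)$---produces a finite-memory secure equilibrium of type $T$ in $\ginf$, which by the same equivalence is a finite-memory secure equilibrium of $\mathcal{G}$ with the original type $T$. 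This closes the argument.

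I do not anticipate a real obstacle at the level of this theorem, since all the difficulty has been front-loaded into Proposition~\ref{prop:ES ssi ES fini}---most notably into case~(b) of its part~$(ii)$, where one must design the punishment strategy and run the delicate case analysis on the three possible shapes of a $\prec_1$-profitable deviation. The only thing demanding attention here is bookkeeping: one must confirm that the depth returned by part~$(i)$, namely $\depth = \max\{(|\Pi|+1)\cdot 2\cdot|V|,\ind_1(\pi),\ind_2(\pi)\}$, does meet the lower bound $(|\Pi|+1)\cdot 2\cdot|V|$ required to invoke part~$(ii)$, and that the notion of type is carried faithfully back and forth across the $\mathcal{G} \leftrightarrow \ginf$ correspondence so that ``same type'' has the same meaning throughout.
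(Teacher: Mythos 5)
Your proposal is correct and follows essentially the same route as the paper: the paper's proof of Theorem~\ref{theo:fin-mem ES} is precisely the composition of parts~$(i)$ and~$(ii)$ of Proposition~\ref{prop:ES ssi ES fini}, with the depth produced by part~$(i)$ meeting the bound required by part~$(ii)$. Your additional bookkeeping remarks (on the depth and on carrying the type across the $\mathcal{G} \leftrightarrow \ginf$ correspondence) are consistent with what the paper implicitly relies on.
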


\begin{proof}
  Let~$(\sigma_1,\sigma_2)$ be a secure equilibrium in~$\mathcal{G}$.
  By the first part of Proposition~\ref{prop:ES ssi ES fini}, there
  exists a secure equilibrium of the same type in the game~$\gfin$,
  for a certain depth~$\depth \geq (\Pi+1)\cdot 2 \cdot |V|$.  If we
  apply the second part of Proposition~\ref{prop:ES ssi ES fini}, we
  get a finite-memory secure equilibrium of the same type
  as~$(\sigma_1,\sigma_2)$ in~$\mathcal{G}$.  \qed
\end{proof}

The proof of Theorem~\ref{theo:fin-mem ES} is based on
Proposition~\ref{prop:ES ssi ES fini} which, roughly speaking, ensures
that every secure equilibrium of $\gfin$ can be lifted to a secure
equilibrium \emph{of the same type} in $\ginf$, and conversely. Notice
that Proposition~\ref{prop:ES ssi ES fini} has no counterpart for Nash
equilibria, since we can not guarantee that the type can be preserved,
as it can be seen from Example~\ref{ex:diff-types}.  This approach
makes the proof of Theorem~\ref{theo:fin-mem ES} rather different than
the proof of Theorem~\ref{theo:fin-mem}.

Notice that Proposition~\ref{prop:ES ssi ES fini} stands for
two-player games because its proof uses
Lemma~\ref{lemma:visit(alpha)=visit(rho)} that has been proved only
for two players.

%\section{Extensions}\label{sec:ext}
\section{Extensions of the Model}\label{sec:ext}

\subsection{Safety Objectives}\label{sub:safety}

Let us now consider quantitative games played on a graph where some
players have \emph{reachability objectives}, whereas others have
\emph{safety objectives}. As previously, the players with 
reachability objectives want to reach their goal set as soon as
possible. The players with safety objectives want to avoid their
\emph{bad set} or, if impossible, delay its visit as long as
possible. Let us make that precise through the following definition.

\begin{definition}\label{def-game-safety}
  An \emph{infinite turn-based quantitative multiplayer
    reachability/safety game} is a tuple $\safegame$ where
  \begin{itemize}
  \item[\textbullet] $\Pi$ is a finite set of players partitioned into
    $\Pi_r$ and $\Pi_s$ which are the players with reachability and
    safety objectives respectively,
  \item[\textbullet] $\G$ is a finite directed graph 
    where $V$ is the set of vertices, $(V_i)_{i \in \Pi}$ is a partition of $V$
    into the state sets of each player, $v_0 \in V$ is the initial vertex,
    and $E \subseteq V \times V$ is the set of edges, and
  \item[\textbullet] $\F_i \subseteq V$ is the goal set of player~$i$,
    for $i \in \Pi_r$; $\B_i \subseteq V$ is the bad set of
    player~$i$, for $i \in \Pi_s$.
  \end{itemize}
\end{definition}

For any play~$\rho=\rho_0\rho_1\ldots$ of $\mathcal{G}$, we note
$\payoff_i(\rho)$ the \emph{cost} of player~$i$. For $i \in \Pi_r$ the
cost is defined as before and for $i \in \Pi_s$ the cost is defined
by:
\[ \payoff_i(\rho) = \left\{
\begin{array}{ll}
  - l &
  \mbox{ if $l$ is the \emph{least} index such that $\rho_l \in \B_i$,}\\
  - \infty & \mbox{ otherwise.} 
\end{array}\right. \]
As before, the aim of each player~$i$ is to \emph{minimize} his cost,
i.e. reach his goal set~$\F_i$ as soon as possible for $i \in \Pi_r$,
or delay the visit of $\B_i$ as long as possible for $i \in \Pi_s$.
The notions of play, strategy, outcome and Nash equilibrium extend in
a natural way. The main result of this subsection is the following
theorem which solves Problem~\ref{pbm 1} in this framework.

%\fbox{Si on a le temps et l'envie mettre un exemple!}

%\fbox{phrase de liaison}

\begin{theorem}\label{theo:ex ENSafe}
  In every quantitative multiplayer reachability/safety game, there
  exists a finite-memory Nash equilibrium.
\end{theorem}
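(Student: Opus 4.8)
The plan is to mirror the proof of Theorem~\ref{theo:ex EN} as closely as possible, the only genuinely new ingredient being the treatment of safety players. First I would extend the notion of visited target: for a play~$\rho$ let~$\visit^{*}(\rho)$ denote the set of players~$i$ such that~$\rho$ visits~$\F_i$ (if~$i \in \Pi_r$) or~$\B_i$ (if~$i \in \Pi_s$). Since every player---whether his objective is reachability or safety---aims at \emph{minimizing} his cost, the preference relation~$(x_i)_{\ipi} \precsim_j (y_i)_{\ipi}$ iff~$x_j \geq y_j$ still captures Nash equilibria. Hence Kuhn's theorem (Theorem~\ref{theo:kuhn}) applies verbatim and yields, exactly as in Corollary~\ref{coro:kuhn}, a Nash equilibrium~$(\sigma_i)_{\ipi}$ in the finite game~$\gfin$ for the depth~$\depth = (|\Pi|+1)\cdot 2\cdot|V|$.

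Next I would extract from the outcome~$\rho$ of~$(\sigma_i)_{\ipi}$ a prefix~$\alpha\beta\gamma$ with~$\beta$ a cycle, exactly as in the proof of Proposition~\ref{prop:NE fin ds inf} but with~$\visit$ replaced by~$\visit^{*}$ throughout. The counting argument is unchanged: there are still at most~$|\Pi|$ target sets to be visited, so within depth~$(|\Pi|+1)\cdot 2\cdot|V|$ one finds~$\alpha\beta\gamma \leq \rho$ satisfying~$\visit^{*}(\alpha)=\visit^{*}(\alpha\beta\gamma)$, $\last(\alpha)=\last(\alpha\beta)$ and the same length bounds. Because of the condition on~$\visit^{*}$ (and monotonicity, which forces~$\visit^{*}(\alpha)=\visit^{*}(\alpha\beta)$), the infinite play~$\pi=\abo$ satisfies, for every safety player~$i$: if~$\alpha$ does not visit~$\B_i$ then~$\beta$ does not either, so~$\pi$ avoids~$\B_i$ forever and~$\payoff_i(\pi)=-\infty$, the least possible cost for~$i$; and if~$\alpha$ visits~$\B_i$, say first at index~$l_i\leq|\alpha|$, then~$\payoff_i(\pi)=-l_i$.

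Then I would define the lifted profile~$(\tau_i)_{\ipi}$ with outcome~$\pi=\abo$ using the punishment function~$\pun$ of Lemma~\ref{lemma:NE same type}, splitting the first deviator~$j$ into two kinds. If~$j\in\Pi_r$ and~$\alpha$ does not visit~$\F_j$, the coalition plays the memoryless strategy~$\strat_{-j}$ of Lemma~\ref{lemma:jeu djsn}: that lemma and Propositions~\ref{propGTW02}--\ref{prop:gamei} hold here unchanged, since the coalition (which may now contain safety players) only plays the zero-sum game~$\mathcal{G}_j$ whose goal set is~$\F_j$; thus~$j$ is kept away from~$\F_j$ forever and his cost stays~$+\infty$. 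In \emph{all} other cases---$j\in\Pi_r$ with~$\alpha$ visiting~$\F_j$, \emph{and every}~$j\in\Pi_s$---the coalition simply replays~$(\sigma_i)_{\ipimj}$ inside~$\tfin$ and plays arbitrarily (in a memoryless way) once depth~$\depth$ is reached, exactly as in case~$(i)$ of Lemma~\ref{lemma:NE same type}. The finite-memory character of~$(\tau_i)_{\ipi}$ is then inherited verbatim from that lemma.

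The verification reduces to the safety case, which is the only new point. For a safety player~$j$ with~$\alpha$ not visiting~$\B_j$ there is nothing to check, since~$\payoff_j(\pi)=-\infty$ is already minimal. For a safety player~$j$ with~$\alpha$ visiting~$\B_j$ at index~$l_j$, I would argue exactly as for a satisfied reachability player. Since~$(\sigma_i)_{\ipi}$ is a Nash equilibrium in~$\gfin$ and~$\payoff_j(\rho)=-l_j$, every strategy of~$j$ played against~$(\sigma_i)_{\ipimj}$ in~$\gfin$ yields cost~$\geq -l_j$, i.e. visits~$\B_j$ within index~$l_j$. Now let~$\tau_j'$ be any deviation in~$\ginf$; as the coalition replays~$(\sigma_i)_{\ipimj}$ up to depth~$\depth$, the play~$\langle\tau_j',(\tau_i)_{\ipimj}\rangle$ coincides with a play of~$\gfin$ on its first~$\depth$ steps, so it too visits~$\B_j$ within index~$l_j\leq|\alpha|\leq\depth$. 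Hence~$\payoff_j(\langle\tau_j',(\tau_i)_{\ipimj}\rangle)\geq -l_j=\payoff_j(\pi)$, and~$\tau_j'$ is not profitable. The main point to get right is precisely this one: the \emph{delay-or-avoid} threat facing a safety player is neutralised by the very same finite-game Nash property that neutralises the \emph{reach-faster} threat facing a reachability player, so that safety players require \emph{no} zero-sum sublemma of their own, and only the unsatisfied reachability players ever invoke Lemma~\ref{lemma:jeu djsn}.
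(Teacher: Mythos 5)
Your proposal is correct and follows essentially the same route as the paper: Kuhn's theorem in the truncated tree, extraction of the prefix~$\alpha\beta\gamma$, and a lifted profile in which only the unsatisfied reachability players are punished via the zero-sum coalition strategy of Lemma~\ref{lemma:jeu djsn}, while satisfied reachability players and safety players whose bad set is hit in~$\alpha$ are handled by replaying~$(\sigma_i)_{\ipimj}$ in~$\tfin$ and safety players whose bad set is avoided need no punishment at all --- exactly the paper's observation that safety players require no new zero-sum sublemma. The one step to tighten is your claim that $\langle\tau_j',(\tau_i)_{\ipimj}\rangle$ truncated at depth~$\depth$ is a play of~$\gfin$ consistent with~$(\sigma_i)_{\ipimj}$: this holds only if the deviation occurs inside~$\alpha$ (a deviation after~$\alpha\beta$ follows~$\abo$ rather than~$\rho$ and need not be consistent with the~$\sigma_i$), so one should first note, as in case~$(i)$ of Lemma~\ref{lemma:NE same type}, that a profitable deviation of a player~$j$ with~$\B_j$ visited at index~$l_j\leq|\alpha|$ must occur before that visit, later deviations being trivially non-profitable since the cost~$-l_j$ is already fixed.
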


In order to prove Theorem~\ref{theo:ex ENSafe}, we have to revisit the
results of Section~\ref{sec:NE}. Let us first notice that
Lemma~\ref{lemma:jeu djsn} remains true in this framework when
player~$j$ belongs to $\Pi_r$. Lemma~\ref{lemma:NE same type} remains
true, however we have to slightly adapt its proof.

\begin{proof}[of Lemma~\ref{lemma:NE same type} in the case of
  reachability/safety objectives]

  Let us first introduce some notations. In the rest of the proof, we
  denote by $\Pi_r^f$ (resp. $\Pi_s^f$) the subset of players $i \in
  \Pi_r$ (resp. $i \in \Pi_s$) such that $\alpha$ visits $\F_i$
  (resp. $\B_i$) and by $\Pi_r^\infty$ (resp.  $\Pi_s^\infty$) the set
  $\Pi_r \setminus \Pi_r^f$ (resp. $\Pi_s \setminus \Pi_s^f$).

  The punishment function~$\pun$ is defined exactly as in the proof of
  Lemma~\ref{lemma:NE same type}. For $v_0$, we define $\pun(v_0) =
  \bot$ and for~$h \in V^+$ such that~$\last(h) \in V_i$ and $v \in
  V$, we let:
  \[ \pun(hv) = \left\{
  \begin{array}{ll}
    \bot & \mbox{ if~$\pun(h)= \bot$ and~$hv < \abo$,}\\
    i & \mbox{ if~$\pun(h)= \bot$ and~$hv \not< \abo$,}\\
    \pun(h) & \mbox{ otherwise ($\pun(h) \not=\bot$)\,.} 
  \end{array}\right. \]
  
The difference with the proof of Lemma~\ref{lemma:NE same type} arises
in the definition of the Nash equilibrium~$(\tau_i)_{\ipi}$. The new
equilibrium needs to incorporate an adequate punishment for the
players with safety objectives. More precisely, in order to dissuade a
player $j \in \Pi_s^f$ from deviating, the other players punish him by
playing the strategies $(\sigma_i)_{i \in \Pi \setminus \{j\}}$ in
$\tfin$. Notice that a player $j \in \Pi_s^\infty$ has no incentive to
deviate. Formally we define the Nash equilibrium $(\tau_i)_{i \in
  \Pi}$ as follows. For~$h \in H$ such that~$\last(h) \in V_i$,
  \begin{align}\label{eq:def taui safe}   \tau_i(h) = \left\{
  \begin{array}{ll}
    v
    & \mbox{ if~$\pun(h)=\bot$ ($h<\abo$); such that~$hv<\abo$,}\\
    \textit{arbitrary} & \mbox{ if~$\pun(h)=i$,}\\
    \strat_{i,\pun(h)}(h) & \mbox{ if~$\pun(h) \not= \bot,i$ 
      and~$\pun(h) \in \Pi_r^\infty$,}\\
    \sigma_i(h) & \mbox{ if~$\pun(h) \not= \bot,i$,
     $\pun(h) \in \Pi_r^f \cup \Pi_s^f$ and
     $|h| < \depth$,}\\
    \textit{arbitrary} & \mbox{ otherwise,} 
  \end{array}\right.\end{align}
  where \textit{arbitrary} means that the next vertex is chosen arbitrarily
  (in a \memoryless\ way). Clearly the outcome of~$(\tau_i)_{\ipi}$ is
  the play~$\abo$, and~$\valtype((\tau_i)_{\ipi})$ is equal
  to~$\visit(\alpha)$ ($=\visit(\alpha\beta)$).
  
  It remains to prove that~$(\tau_i)_{\ipi}$ is a finite-memory Nash
  equilibrium in the game~$\ginf$. In order to do so, we prove that
  none of the players has a profitable deviation.  For players with
  reachability objectives, the arguments are exactly the same as the
  ones provided in the proof of Lemma~\ref{lemma:NE same type}. Let us
  now consider players with safety objectives. In the case where $j
  \in \Pi_s^\infty$, player~$j$ has clearly no incentive to
  deviate. In the case where $j \in \Pi_s^f$, to decrease his cost,
  player~$j$ has no incentive to deviate after the
  prefix~$\alpha$. Thus we assume that the strategy~$\tau_j'$ causes a
  deviation from a vertex visited in~$\alpha$. By
  Equation~\eqref{eq:def taui safe} the other players first play
  according to~$(\sigma_i)_{\ipimj}$ in~$\gfin$, and then in an
  arbitrary way.

  Suppose that~$\tau_j'$ is a profitable deviation for player~$j$
  w.r.t.~$(\tau_i)_{\ipi}$ in the game~$\ginf$.  Let us set~$\playtau=
  \langle (\tau_i)_{\ipi} \rangle$ and~$\playdev= \langle
  \tau_j',(\tau_i)_{\ipimj} \rangle$. Then
    $$\payoff_j(\playdev) < \payoff_j(\playtau).$$
    On the other hand we know that
    $$\payoff_j(\playtau) = \payoff_j(\rho) \leq |\alpha|.$$   
    So if we limit the play~$\playdev$ in~$\ginf$ to its prefix of
    length~$\depth$, we get a play~$\rho'$ in~$\gfin$ such that
    $$\payoff_j(\rho') \le \payoff_j(\playdev) < \payoff_j(\rho).$$
    Notice that we do not necessarily have that $\payoff_j(\rho') =
    \payoff_j(\playdev)$ (as in the proof of Lemma~\ref{lemma:NE same
      type}) since the bad set $\B_j$ can be visited by $\pi'$ and not
    by $\rho'$. As the play~$\rho'$ is consistent with the
    strategies~$(\sigma_i)_{\ipimj}$ by Equation~\eqref{eq:def taui
      safe}, the strategy~$\tau_j'$ restricted to the tree~$\tfin$ is
    a profitable deviation for player~$j$ w.r.t.~$(\sigma_i)_{\ipi}$
    in the game~$\gfin$. This is impossible.  Moreover, as done in the
    proof of Lemma~\ref{lemma:NE same type}, $(\tau_1,\tau_2)$ is a
    finite-memory strategy profile.

\qed
\end{proof}

Since Lemma~\ref{lemma:NE same type} holds in the context of
reachability/safety objectives, Proposition~\ref{prop:NE fin ds inf}
ensures that the equilibrium in $\gfin$ provided by Kuhn's theorem
(Corollary~\ref{coro:kuhn}) can be lifted to $\ginf$. This proves
Theorem~\ref{theo:ex ENSafe}.

\subsection{Tuples of Costs on Edges}

In this subsection, we come back to a pure reachability framework and
we extend our model in the following way: we assume that edges are
labelled with tuples of positive costs (one cost for each player).
Here we do not only count the number of edges to reach the goal of a
player, but we sum up his costs along the path until his goal is
reached. His aim is still to minimize his global cost for a play. We
generalize Definition~\ref{def-game}.

\begin{definition}\label{def-game-costs}
  An \emph{infinite turn-based quantitative multiplayer reachability
    game with tuples of costs on edges} is a tuple $\gamecosts$ where
  \begin{itemize}
  \item[\textbullet] $\Pi$ is a finite set of players,
  \item[\textbullet] $\G$ is a finite directed graph 
    where $V$ is the set of vertices, $(V_i)_{i \in \Pi}$ is a partition of $V$
    into the state sets of each player, $v_0 \in V$ is the initial vertex,
    and $E \subseteq V \times V$ is the set of edges, 
  \item[\textbullet] $\cost_i: E \to \R^{>0}$ is the cost function of player~$i$
    defined on the edges of the graph,
  \item[\textbullet] $\F_i \subseteq V$ is the goal set of player~$i$.
  \end{itemize}
\end{definition}

%\fbox{un mot sur pq co\^uts strict. positifs?}

%\fbox{exemple?}

We also positively solve Problem~\ref{pbm 1}
for Nash equilibria in this context.

\begin{theorem}\label{theo:ex EN costs}
  In every quantitative multiplayer reachability game with tuples of
  costs on edges, there exists a finite-memory Nash equilibrium.
\end{theorem}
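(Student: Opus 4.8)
The plan is to follow the architecture of the proof of Theorem~\ref{theo:ex EN}: use Kuhn's theorem to obtain a Nash equilibrium $(\sigma_i)_{\ipi}$ in the finite game $\gfin$ played on a truncated tree of a well-chosen depth~$\depth$, and then lift it to a finite-memory Nash equilibrium of $\ginf$ by the coalition-punishment construction of Lemma~\ref{lemma:NE same type} (the strategy profile $(\tau_i)_{\ipi}$ of Equation~\eqref{eq:def taui}, with outcome $\abo$). Corollary~\ref{coro:kuhn} transfers verbatim, since $(x_i)_{\ipi}\precsim_j(y_i)_{\ipi}$ iff $x_j\ge y_j$ is still a preference relation when costs lie in $\R^{>0}$. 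What breaks is the identification of cost with number of edges: a play reaching $\F_j$ in fewer edges need not be cheaper. Exactly two ingredients exploited this identification, namely Lemma~\ref{lemma:jeu djsn} and the extraction of $\alpha\beta\gamma$ in Proposition~\ref{prop:NE fin ds inf}, and those are what I would redo. Write $c_{\min}:=\min\{\cost_i(e): i\in\Pi,\ e\in E\}>0$ and $c_{\max}:=\max\{\cost_i(e): i\in\Pi,\ e\in E\}$, and set $G:=\lceil(|V|-1)\,c_{\max}/c_{\min}\rceil+1$ and $L:=|V|+G$.

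First I would reprove Lemma~\ref{lemma:jeu djsn} for vertices $\vertu$ lying on $\rho$ (which is all the punishment in Lemma~\ref{lemma:NE same type} actually uses), replacing the length hypotheses by the single requirement $|\gamma|\ge G$. Suppose player $j$ with $\F_j\notin\visit(\alpha)$ could win $\gamej$ from a vertex $\vertu$ of $\alpha\beta$. By Proposition~\ref{prop:gamei} he reaches $\F_j$ within $|V|-1$ further edges, so deviating from $\rho$ at $\vertu$ he pays at most $(|V|-1)c_{\max}$ beyond the prefix cost to $\vertu$. In $\rho$, by $\visit(\alpha)=\visit(\alpha\beta\gamma)$, his goal is reached only after $\alpha\beta\gamma$ (or never), hence at least $|\gamma|$ edges, i.e.\ at least $|\gamma|\,c_{\min}$ in cost, beyond that same prefix. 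Since $|\gamma|\ge G>(|V|-1)c_{\max}/c_{\min}$, the deviation is strictly cheaper, contradicting that $(\sigma_i)_{\ipi}$ is an equilibrium in $\gfin$; when $\rho$ never visits $\F_j$ the comparison is the trivial finite-beats-$+\infty$ one, as before. Thus the coalition wins $\gamej$ from every vertex of $\alpha\beta$, and because that winning region is closed under $j$'s moves, the memoryless strategy $\strat_{-j}$ keeps player $j$ out of $\F_j$ after any deviation from $\abo$, exactly as in case~$(ii)$ of Lemma~\ref{lemma:NE same type}.

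Next I would adapt the extraction so that $|\gamma|\ge G$. Cutting $\rho$ into consecutive windows of length $L$, at most $|\Pi|$ of them can contain a first visit of some $\F_i$; hence among the first $|\Pi|+1$ windows there is a \emph{clean} block $[q,q+L]$ with $\visit(\rho_{0}\ldots\rho_{q})=\visit(\rho_{0}\ldots\rho_{q+L})$. Pigeonhole inside $\rho_{q}\ldots\rho_{q+|V|}$ yields indices $s<t\le q+|V|$ with $\rho_s=\rho_t$; putting $\alpha=\rho_{0}\ldots\rho_{s}$, $\beta=\rho_{s}\ldots\rho_{t}$ and $\gamma=\rho_{t}\ldots\rho_{q+L}$ gives $\last(\alpha)=\last(\alpha\beta)$, $\visit(\alpha)=\visit(\alpha\beta\gamma)$ and $|\gamma|\ge L-|V|=G$, precisely the hypotheses needed above, with $\beta$ nonempty. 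Taking the first clean window also bounds $|\alpha|\le|\Pi|\,L+|V|$ independently of $\depth$. For the players with $\F_j\in\visit(\alpha)$ (finite cost in $\abo$), the punishment is still to replay $(\sigma_i)_{\ipimj}$ inside $\tfin$, and the argument of case~$(i)$ of Lemma~\ref{lemma:NE same type} goes through once every profitable deviation is forced to reach $\F_j$ before depth $\depth$. As $\payoff_j(\rho)\le|\alpha|\,c_{\max}$, a cheaper play uses fewer than $|\alpha|\,c_{\max}/c_{\min}$ edges, so it suffices to choose $\depth\ge\max\{(|\Pi|+1)L,\ |\alpha|\,c_{\max}/c_{\min}\}$, a finite quantity by the bound on $|\alpha|$; then such a deviation restricts to a profitable deviation against $(\sigma_i)_{\ipi}$ in $\gfin$, a contradiction. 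The finite-memory bookkeeping of Lemma~\ref{lemma:NE same type} applies unchanged, since $\strat_{i,j}$, the arbitrary moves and following $\alpha\beta^\omega$ are all finite-memory.

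The main obstacle is exactly this decoupling of cost from path length. It is what forces the two quantitative adjustments: lengthening $\gamma$ past $(|V|-1)c_{\max}/c_{\min}$ so that the coalition's punishment in Lemma~\ref{lemma:jeu djsn} stays profitable despite a short but expensive escape by $j$, and inflating $\depth$ past $|\alpha|\,c_{\max}/c_{\min}$ so that a cheaper reach of an already-satisfied player cannot hide beyond the truncation. Everything else---Kuhn's theorem, the punishment function $\pun$, and the finite-memory argument---carries over without modification.
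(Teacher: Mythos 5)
Your proposal is correct and follows essentially the same route as the paper: Kuhn's theorem on a truncated tree, then the coalition-punishment lifting, with the two quantitative repairs the paper also makes --- stretching $\gamma$ by a factor of roughly $\cmax/\cmin$ so that the punishment lemma (the paper's Lemma~\ref{lemma:jeu djsn costs}) survives the decoupling of cost from path length, and enlarging the depth so that a cheaper reach by an already-satisfied player is still witnessed inside $\gfin$ (the paper's Lemma~\ref{lemma:NE same type costs} and Proposition~\ref{prop:NE fin ds inf costs}). Your constants $G$ and $L$ play exactly the role of the paper's $\K\cdot|V|$ and $(\K+1)\cdot|V|$, and your windowed extraction of a clean block is the same pigeonhole argument the paper performs on prefixes of length $(\nbband+\K)\cdot|V|$.
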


To prove Theorem~\ref{theo:ex EN costs}, we follow the same scheme as
in Section~\ref{sec:NE}. In particular, we rely on Kuhn's theorem
(Corollary~\ref{coro:kuhn}) and need to prove a counterpart of
Lemma~\ref{lemma:jeu djsn}, Lemma~\ref{lemma:NE same type} and
Proposition~\ref{prop:NE fin ds inf} in this framework.

Let us first introduce some notations that will be useful in this
context.  We define $\cmin := \min_{\ipi} \min_{e \in E} \cost_i(e)$,
$\cmax := \max_{\ipi} \max_{e \in E} \cost_i(e)$ and $\K:= \left\lceil
\frac{\cmax}{\cmin} \right\rceil$. It is clear that $\cmin, \cmax >0$ and
$\K \geq 1$.

We also adapt the definition of $\payoff_i(\rho)$, the \emph{cost} 
of player~$i$ for a play~$\rho=\rho_0\rho_1\ldots$,
\[ \payoff_i(\rho) = \left\{
\begin{array}{ll}
  \displaystyle\sum_{k=1}^{k=l} \cost_i((\rho_{k-1},\rho_k)) &
  \mbox{ if $l$ is the \emph{least} index such that $\rho_l \in
    \F_i$,}\\
%\mbox{ where $l=\ind_i(\rho)$,}\\
  + \infty & \mbox{ otherwise.} 
\end{array}\right. \]
%Remark that in Section~\ref{sec:prelim}, we defined $\payoff_i(\rho)$ by
%$\ind_i(\rho)$ if $\rho$ visits $\F_i$.

The counterpart of Lemma~\ref{lemma:jeu djsn} is the following
one, taking into account the constant $\K$ defined before.

\begin{lemma}\label{lemma:jeu djsn costs}
  Suppose~$d \geq 0$. Let~$(\sigma_i)_{\ipi}$ be a Nash equilibrium
  in~$\gfin$ and~$\rho$ the outcome of~$(\sigma_i)_{\ipi}$.  Assume
  that~$\rho$ has a prefix~$\alpha\beta\gamma$, where~$\beta$ contains
  at least one vertex, such that
  \begin{align*}
    & \visit(\alpha)=\visit(\alpha\beta\gamma)\\
    & \last(\alpha)=\last(\alpha\beta)\\
    & |\alpha\beta| \leq \nbband\cdot|V|\\
    & |\alpha\beta\gamma| = (\nbband+\K)\cdot|V|
  \end{align*}
  for some~$\nbband \geq 1$.\\
  \noindent  Let~$j \in \Pi$ be such that~$\alpha$ does not visit~$\F_j$. 
  Consider the qualitative two-player zero-sum game~$\gamej$. 
  Then for all histories~$h\vertu$ of~$\mathcal{G}$ consistent 
  with~$(\sigma_i)_{\ipimj}$ and such that $h\vertu \leq \alpha\beta$,
  the coalition of the players~$i \not=j$ wins the game~$\mathcal{G}_j$ 
  from~$\vertu$.
\end{lemma}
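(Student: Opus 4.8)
The plan is to argue by contradiction, closely following the proof of Lemma~\ref{lemma:jeu djsn} but replacing its length-based cost estimate by a cancellation argument on a common prefix. So suppose, for some history $h\vertu \leq \alpha\beta$ consistent with $(\sigma_i)_{\ipimj}$ (which is automatic, since $\alpha\beta \leq \rho$), that player~$j$ wins the game $\mathcal{G}_j$ from $\vertu$. By Proposition~\ref{prop:gamei} he then has a \memoryless\ winning strategy $\strat_j$ reaching $\F_j$ within at most $|V|-1$ edges from $\vertu$. I would consider the play $\rho'$ of $\gfin$ having $h\vertu$ as a prefix, in which player~$j$ follows $\strat_j$ from $\vertu$ on while every other player keeps playing $\sigma_i$. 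The aim is to show that the strategy of player~$j$ induced by $\rho'$ is a profitable deviation w.r.t.~$(\sigma_i)_{\ipi}$ in $\gfin$, which contradicts the Nash equilibrium hypothesis.

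The key point, and the main difference with Lemma~\ref{lemma:jeu djsn}, is how to bound $\payoff_j(\rho')$. In the unit-cost setting the two plays can be compared through their absolute indices; here the multiplicative gap between $\cmin$ and $\cmax$ makes that comparison too weak. Instead I would exploit the hypothesis $h\vertu \leq \alpha\beta$: since $\alpha\beta \leq \rho$, the history $h\vertu$ is a common prefix of both $\rho$ and $\rho'$, so player~$j$'s accumulated cost along $h\vertu$, call it $c$, is identical in the two plays and cancels. Moreover $\visit(\alpha)=\visit(\alpha\beta\gamma)$ together with the fact that $\alpha$ does not visit $\F_j$ shows that $\alpha\beta\gamma$, hence $h\vertu$, does not visit $\F_j$; thus in $\rho'$ the first visit to $\F_j$ occurs during the $\strat_j$-phase, giving $\payoff_j(\rho') \leq c + (|V|-1)\cdot\cmax < c + |V|\cdot\cmax$.

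It then remains to compare with $\payoff_j(\rho)$, in two cases. If $\payoff_j(\rho)=+\infty$, then since $\rho'$ reaches $\F_j$ at an index at most $|h\vertu|+|V|-1 < (\nbband+1)\cdot|V| \leq (\nbband+\K)\cdot|V| \leq \depth$, the play $\rho'$ has finite cost in $\gfin$, whence $\payoff_j(\rho') < \payoff_j(\rho)$. If $\payoff_j(\rho)<+\infty$, then $\rho$ visits $\F_j$ only strictly after $\alpha\beta\gamma$, i.e. at an index exceeding $(\nbband+\K)\cdot|V|$; as $|h\vertu| \leq |\alpha\beta| \leq \nbband\cdot|V|$, the segment of $\rho$ from $\vertu$ to that first visit spans more than $\K\cdot|V|$ edges, so $\payoff_j(\rho) \geq c + \K\cdot|V|\cdot\cmin$. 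Since $\K=\lceil \cmax/\cmin\rceil$ yields $\cmax \leq \K\cdot\cmin$, I obtain
\[ \payoff_j(\rho') < c + |V|\cdot\cmax \leq c + \K\cdot|V|\cdot\cmin \leq \payoff_j(\rho). \]
In both cases $\rho'$ is a profitable deviation, the desired contradiction.

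The step I expect to be the genuine obstacle — and the only new idea with respect to Lemma~\ref{lemma:jeu djsn} — is precisely this cancellation of the common-prefix cost $c$ combined with the inflated depth $(\nbband+\K)\cdot|V|$: one must guarantee that the portion of $\rho$ between $\vertu$ and $\F_j$ is long enough (at least $\K\cdot|V|$ edges, hence cost at least $\K\cdot|V|\cdot\cmin$) to dominate the at most $(|V|-1)\cdot\cmax$ that $\strat_j$ spends. This is exactly what forces the hypothesis to be the \emph{prefix} condition $h\vertu \leq \alpha\beta$ rather than the length condition $|h\vertu| \leq |\alpha\beta|$ of the unit-cost lemma; for a history on which player~$j$ has already deviated the cost $c$ would no longer cancel, and the argument would break for $\K > 1$.
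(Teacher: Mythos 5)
Your proof is correct and follows essentially the same route as the paper's: a contradiction via a memoryless winning strategy $\strat_j$ for player~$j$, with the finite-cost case split on whether $\payoff_j(\rho)$ is infinite, and the key estimate obtained by cancelling the common-prefix cost ${\sf c}_j(h\vertu)$ and comparing $\cmax\cdot|V|$ against $\K\cdot\cmin\cdot|V|$ exactly as in the paper. Your closing remark on why the prefix condition $h\vertu \leq \alpha\beta$ replaces the length condition of Lemma~\ref{lemma:jeu djsn} is a correct and useful observation, but it is commentary rather than a required step.
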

%\fbox{attention $hu$ pr\'efixe de $\alpha\beta$, diff\'erent de lemme 13}

\begin{proof}[Sketch]
  As for the proof of Lemma~\ref{lemma:jeu djsn} we proceed by
  contradiction and define a play $\rho'$ in the very same way. We
  can deduce that
  \begin{align*}
    \ind_j(\rho') & \leq  |h\vertu| + |V| &
    \text{(by Proposition~\ref{prop:gamei})} \\
    & \leq  (\nbband+1)\cdot|V| & \text{(by hypothesis)} \\
    & \leq  (\nbband+\K)\cdot|V| & \text{(as $\K \geq 1$)} \\
    & \leq  \depth & \text{(as~$\alpha\beta\gamma\leq \rho$).}
  \end{align*}

  The case where $\payoff_j(\rho)=+\infty$ is solved in the same way.
  For the other case $\payoff_j(\rho)<+\infty$, we note
  ${\sf{c}}_j(h\vertu)$ the sum of the costs of player $j$ along the
  prefix $h\vertu$. We have the following inequalities (see
  Figure~\ref{fig:cout}):
  \begin{align*}
    \payoff_j(\rho') & \leq {\sf{c}}_j(h\vertu) + \cmax \cdot |V| &
    \\
    \payoff_j(\rho)\ & > {\sf{c}}_j(h\vertu) + \cmin \cdot \K \cdot |V| & 
    \text{ (as $\ind_j(\rho) > (l+\K)\cdot |V|$)}\\
    &  \geq {\sf{c}}_j(h\vertu) + \cmin \cdot \frac{\cmax}{\cmin} \cdot |V| &
    \text{ (by definition of $\K$)}\\
    & = {\sf{c}}_j(h\vertu) + \cmax \cdot |V|\,.  &
  \end{align*}
  % \documentclass{article}

% \usepackage{pgf,tikz}
% \usetikzlibrary {arrows}
% \usetikzlibrary {decorations}
% \usetikzlibrary {snakes}
% \usepackage{quant_games}
% \usepackage{amsmath,amssymb,stmaryrd,wasysym}
% \begin{document}

\begin{figure}[h!]
    \centering
    \begin{tikzpicture}[yscale=.6,xscale=.75]
      \everymath{\scriptstyle}
      \draw (0,8) -- (-5,0);
      \draw (0,8) -- (5,0);

      \draw[fill=black] (0,8) circle (2pt);

      \draw (-5,0) -- (5,0);

      \draw[very thin,dotted] (-4,1.1) -- (6.1,1.1);
      \draw[very thin,dotted] (-4,3) -- (6.1,3);
      \draw[very thin,dotted] (-4,4) -- (6.1,4);

      \draw[<->] (5.4,1.1)--(5.4,3) node[left,pos=.5] {$\K\cdot|V|$};

      \path (6.1,4) node[right] (q0) {$(\nbband-1)\cdot|V|$};
      \path (6.1,3) node[right] (q0) {$\nbband\cdot|V|$};
      \path (6.1,1.5) node[right] (q0) {$(\nbband+\K)\cdot|V|$};
      \path (6.2,0) node[right] (q0) {$\depth$};

      \path (.2,6) node[left] (q0) {$h$};
      \path (1.2,3.5) node[right] (q0) {$\vertu$};

      \draw[thick] (0,8) .. controls (0.2,5.6) .. (1.2,3.5);
      \draw[fill=black] (1.33,3.2) circle (1pt);
      \draw[fill=black] (1.06,3.8) circle (1pt);
      \draw[fill=black] (1.2,3.5) circle (2pt);
      \draw[very thin] (1.2,3.5) .. controls (1.6,2.6) .. (2,0);

      \draw[very thin] (1.2,3.5) .. controls (-.6,2.1) .. (-1,0);

%      \draw[<->] (0,2.75) -- (1,3.55) node[above,pos=.5,sloped] (q0)
%      {{\footnotesize $\le |V|$}};

      \draw [snake=brace] (0,2.75) -- (1,3.55) node [above,pos=.5]
            {{\footnotesize $\le |V| \ \quad {}$}};

      \draw[fill=black] (0,2.5) circle (2pt);
      \draw[fill=black] (1.89,.7) circle (2pt);
      \path (0,2.5) node[left] (q0) {$\F_j$};
      \path (1.89,.7) node[right] (q0) {$\F_j$};

      % \draw[] (.4,3.2) .. controls (.8,2.5) .. (1.2,2);
      % \draw[fill=black] (1.2,2) circle (1.5pt);

      % \draw (1.2,2) .. controls (1.5,1.5) .. (2.1,0);

      \draw (2,-.3) node[below] (q0) {$\rho$};
      \draw (-1,-.1) node[below] (q0) {$\rho'$};

      % \draw [snake=brace] (-5.5,3.9) -- (-5.5,8) node [left,pos=.5]
      %       {${} \, \alpha$};
      % \draw [snake=brace] (-5.5,3.2) -- (-5.5,3.8) node [left,pos=.5]
      %       {${} \, \beta$};
      % \draw [snake=brace] (-5.5,1.5) -- (-5.5,3.1) node [left,pos=.5]
      %       {${} \, \gamma$};

    \end{tikzpicture}
    \caption{Plays $\rho$ and $ \rho'$ with their common prefix
      $h\vertu$.}
    \label{fig:cout}
\end{figure}
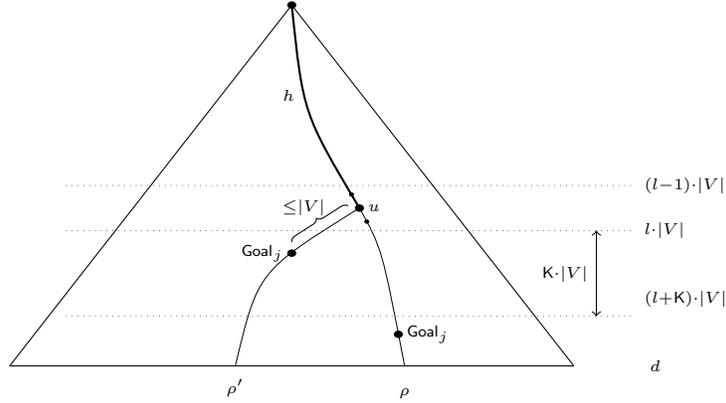

%\end{document} Then we have $\payoff_j(\rho') < \payoff_j(\rho)$,
  and since~$\rho'$ is consistent with~$(\sigma_i)_{\ipimj}$, the
  strategy of player~$j$ induced by the play~$\rho'$ is a profitable
  deviation for player~$j$ w.r.t.~$(\sigma_i)_{\ipi}$. This
  contradicts the fact that $(\sigma_i)_{\ipi}$ is a Nash equilibrium
  in the game $\gfin$. \qed
\end{proof}

The following lemma is the counterpart of Lemma~\ref{lemma:NE same type}.

\begin{lemma}\label{lemma:NE same type costs}
  Suppose~$\depth \geq 0$.
  Let~$(\sigma_i)_{\ipi}$ be a Nash equilibrium in~$\gfin$ 
  and~$\alpha\beta\gamma$ be a prefix of~$\rho=\langle (\sigma_i)_{\ipi}
  \rangle$ as defined in Lemma~\ref{lemma:jeu djsn costs} where
  $|\alpha\beta\gamma| = (\nbband+\K)\cdot|V|$
  for some~$\nbband \geq 1$ such that
  $l \leq \frac{d}{|V|\cdot \K}+1$.\\ %\fbox{rappeler $l$?}\\
  \noindent Then there exists a Nash equilibrium~$(\tau_i)_{\ipi}$ 
  in the game~$\ginf$.
  Moreover~$(\tau_i)_{\ipi}$ is finite-memory,
  and $\valtype((\tau_i)_{\ipi})=\visit(\alpha)$.
\end{lemma}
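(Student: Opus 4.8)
The plan is to mirror the proof of Lemma~\ref{lemma:NE same type} almost verbatim, changing only the two places where the payoff being the mere \emph{index} of the first visit to a goal set is used. As there, I would order $\Pi=\{1,\dots,n\}$ so that $\alpha$ visits $\F_i$ exactly for $i\le\jk$, take the infinite play $\abo$ as the outcome of the profile $(\tau_i)_{\ipi}$ (legitimate since $\last(\alpha)=\last(\alpha\beta)$), and reuse the very same punishment function $\pun$ together with the five-case definition~\eqref{eq:def taui} of $\tau_i$: follow $\abo$ while nobody has deviated, and once player~$j$ is the first deviator let the coalition play $(\strat_{i,j})_{\ipimj}$ if $j>\jk$ and $(\sigma_i)_{\ipimj}$ (truncated at depth~$\depth$) if $j\le\jk$. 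Clearly the outcome is $\abo$ and $\valtype((\tau_i)_{\ipi})=\visit(\alpha)$.

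For the punishment of a player $j>\jk$ nothing changes: $\alpha$ does not visit~$\F_j$, so by Lemma~\ref{lemma:jeu djsn costs} the coalition strategy $\strat_{-j}$ keeps $\langle\tau_j',(\tau_i)_{\ipimj}\rangle$ inside $V\setminus\F_j$ from every vertex of $\abo$, whence $\payoff_j=+\infty$ and no deviation is profitable. The finite-memory part is also identical: I would define the same equivalence $\sim_{\pun}$ (collapsing histories that differ only by a number of $\beta$-cycles and otherwise recording the first deviator) and refine it into $\sim_{\tau_i}$ using $\last$; since each $\strat_{i,j}$ and each arbitrary choice is \memoryless, and $\sigma_i$ is used only below depth~$\depth$, the relation $\sim_{\tau_i}$ has finite index and refines $\approx_{\tau_i}$, so $(\tau_i)_{\ipi}$ is finite-memory.

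The one genuinely new point, and the main obstacle, is the case $j\le\jk$, where the argument must be made quantitative. Suppose $\tau_j'$ is a profitable deviation in $\ginf$, so that with $\playtau=\langle(\tau_i)_{\ipi}\rangle$ and $\playdev=\langle\tau_j',(\tau_i)_{\ipimj}\rangle$ we have $\payoff_j(\playdev)<\payoff_j(\playtau)=\payoff_j(\rho)$; player~$j$ has no incentive to deviate after~$\alpha$, so the deviation starts inside~$\alpha$ and the others answer with $(\sigma_i)_{\ipimj}$ in $\tfin$. Since $\alpha$ visits~$\F_j$, we have $\payoff_j(\rho)\le \cmax\cdot\ind_j(\rho)$, while every edge costs at least $\cmin$, so $\ind_j(\playdev)\le \payoff_j(\playdev)/\cmin<\payoff_j(\rho)/\cmin$; bounding $\ind_j(\rho)$ by the length of the relevant prefix and using $\K=\lceil\cmax/\cmin\rceil$, the hypothesis $\nbband\le\frac{\depth}{|V|\cdot\K}+1$ is exactly what guarantees $\ind_j(\playdev)\le\depth$. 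Consequently, truncating $\playdev$ at depth~$\depth$ yields a play $\rho'$ of $\gfin$ with $\payoff_j(\rho')=\payoff_j(\playdev)<\payoff_j(\rho)$; as $\rho'$ is consistent with $(\sigma_i)_{\ipimj}$, the strategy $\tau_j'$ restricted to $\tfin$ is a profitable deviation against $(\sigma_i)_{\ipi}$ in $\gfin$, contradicting that $(\sigma_i)_{\ipi}$ is a Nash equilibrium there. I expect the only delicate bookkeeping to be the constant chasing that turns the cost inequality into the index inequality $\ind_j(\playdev)\le\depth$ through $\cmin$, $\cmax$ and $\K$; everything else transfers unchanged from the unit-cost case.
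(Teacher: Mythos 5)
Your proposal is correct and follows essentially the same route as the paper: reuse the punishment function and five-case definition of $(\tau_i)_{\ipi}$ from Lemma~\ref{lemma:NE same type}, keep the $j>\jk$ case and the finite-memory argument unchanged, and in the case $j\le\jk$ convert the cost inequality $\payoff_j(\playdev)<\payoff_j(\rho)\le\cmax\cdot|\alpha|$ into the index bound $\ind_j(\playdev)<|\alpha|\cdot\cmax/\cmin\le\depth$ via $\cmin$, $\cmax$ and $\K$, exactly as the paper does. The only point to make fully explicit in a final write-up is the constant chase showing that the hypothesis $\nbband\le\frac{\depth}{|V|\cdot\K}+1$ together with the bound on $|\alpha|$ from Lemma~\ref{lemma:jeu djsn costs} indeed yields $\ind_j(\playdev)\le\depth$, which is the same bookkeeping the paper carries out.
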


\begin{proof}%[Sketch]
  We prove this result in the very same way as Lemma~\ref{lemma:NE
    same type}.  The only difference lies in the case\footnote{Indeed
    when $j>k$, i.e. when player~$j$ has not reached his goal set, the
    coalition punishes him in the exact same way as
    Lemma~\ref{lemma:NE same type} by preventing him from visiting his
    goal set.} $j \leq k$ when we show that $(\tau_i)_{\ipi}$ is a
  Nash equilibrium.  We suppose that~$\tau_j'$ is a profitable
  deviation for player~$j$ w.r.t.~$(\tau_i)_{\ipi}$ in the
  game~$\ginf$. So we have $\payoff_j(\playdev)
  < \payoff_j(\playtau)$, where~$\playtau= \langle (\tau_i)_{\ipi}
  \rangle$ and~$\playdev= \langle \tau_j',(\tau_i)_{\ipimj}
  \rangle$. As $\ind_j(\playtau) \leq |\alpha|$, we know that
  $\payoff_j(\playtau) \leq |\alpha|\cdot \cmax$. It follows that
  $\payoff_j(\playdev) < |\alpha|\cdot \cmax$ and
  \begin{align*}
    \ind_j(\playdev) & <  |\alpha|\cdot \frac{\cmax}{\cmin} & 
    \\
    & \leq (l-1) \cdot |V| \cdot \K  & \\
    & \leq \depth & \text{(by hypothesis).}
  \end{align*}
  The first inequality can be justified as follows. For a
  contradiction, let us assume that $\ind_j(\playdev) \ge |\alpha|\cdot
  \frac{\cmax}{\cmin}$. It follows that $\payoff_j(\playdev) \ge \cmin
  \cdot |\alpha|\cdot \frac{\cmax}{\cmin}$, this contradicts the fact
  that $\payoff_j(\playdev) < |\alpha|\cdot \cmax$.

  As in the proof of Lemma~\ref{lemma:NE same type}, we limit the
  play~$\playdev$ in~$\ginf$ to its prefix of length~$\depth$ and get
  a profitable deviation for player~$j$ w.r.t.~$(\sigma_i)_{\ipi}$ in
  the game~$\gfin$, contradicting the fact that $(\sigma_i)_{\ipi}$ is
  a Nash equilibrium in $\gfin$.

  Moreover, as done in the proof of Lemma~\ref{lemma:NE same type},
  $(\tau_1,\tau_2)$ is a finite-memory strategy profile.  \qed
\end{proof}

As a consequence of the two previous lemmas, Proposition~\ref{prop:NE
  fin ds inf} remains true in this context, we only have to adjust the
depth~$\depth$ of the finite tree.

\begin{proposition}\label{prop:NE fin ds inf costs}
  Let~$\mathcal{G}$ be a game and
  $\tinf$ be the unraveling of~$G$. Let~$\gfin$ be
  the game played on the truncated tree of~$\tinf$ of 
  depth~$\depth=\max\{(|\Pi| + 1)\cdot (\K+1) \cdot |V|,  
    (|\Pi|\cdot (\K+ 1)+1)\cdot |V| \cdot \K\}$.\\
  \noindent If there exists a Nash equilibrium in the game~$\gfin$,
  then there exists a finite-memory Nash equilibrium in the game~$\ginf$.
\end{proposition}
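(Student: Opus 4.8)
The plan is to mimic the proof of Proposition~\ref{prop:NE fin ds inf}, simply replacing Lemma~\ref{lemma:NE same type} by its cost counterpart Lemma~\ref{lemma:NE same type costs} (which itself relies on Lemma~\ref{lemma:jeu djsn costs}) and re-doing the length bookkeeping with the new window size $(\K+1)\cdot|V|$. Set $n=|\Pi|$ and let~$(\sigma_i)_{\ipi}$ be a Nash equilibrium in~$\gfin$ with outcome~$\rho$, a play of length~$\depth$. First I would extract the prefix~$\la\lap$ of~$\rho$ of minimal length for which there exists~$\nbband \geq 1$ with
\begin{align*}
  & |\la| = (\nbband-1)\cdot|V|\\
  & |\la\lap| = (\nbband+\K)\cdot|V|\\
  & \visit(\la)=\visit(\la\lap)\,.
\end{align*}
The relevant window $[(\nbband-1)\cdot|V|,\,(\nbband+\K)\cdot|V|]$ now has length $(\K+1)\cdot|V|$ rather than $2\cdot|V|$; this is exactly the window appearing in the hypotheses of Lemma~\ref{lemma:jeu djsn costs}.

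Next I would bound~$\nbband$. The windows obtained for $\nbband = m\cdot(\K+1)+1$, $m\geq 0$, are pairwise disjoint segments of~$\rho$. Since $\visit$ is non-decreasing along~$\rho$ and strictly increases at most~$n$ times (at most once per player), at most~$n$ of these disjoint windows can satisfy $\visit(\la)\varsubsetneq\visit(\la\lap)$. Hence among the first $n+1$ of them, i.e.\ for some index $\leq n\cdot(\K+1)+1$, the set $\visit$ is left unchanged, so the minimal admissible~$\nbband$ satisfies $\nbband \leq n\cdot(\K+1)+1$. Consequently $|\la\lap| = (\nbband+\K)\cdot|V| \leq (n+1)\cdot(\K+1)\cdot|V| \leq \depth$ by the first term of the maximum defining~$\depth$, which guarantees that $\la\lap$ really exists as a prefix of~$\rho$.

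I would then slice~$\la\lap$ exactly as in Proposition~\ref{prop:NE fin ds inf}: inside the length-$|V|$ segment of~$\rho$ lying between depths $(\nbband-1)\cdot|V|$ and $\nbband\cdot|V|$, the pigeonhole principle produces a repeated vertex, yielding a decomposition $\la\lap = \alpha\beta\gamma$ with $\last(\alpha)=\last(\alpha\beta)$, $|\la|\leq|\alpha|$ and $|\alpha\beta|\leq\nbband\cdot|V|$. Monotonicity of~$\visit$ together with $\visit(\la)=\visit(\la\lap)$ then forces $\visit(\alpha)=\visit(\alpha\beta\gamma)$, while $|\alpha\beta\gamma| = (\nbband+\K)\cdot|V|$ by construction. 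Thus~$\alpha\beta\gamma$ meets every hypothesis of Lemma~\ref{lemma:NE same type costs}.

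Finally I would verify the extra arithmetic condition $\nbband \leq \frac{\depth}{|V|\cdot\K}+1$ demanded by Lemma~\ref{lemma:NE same type costs}, equivalently $(\nbband-1)\cdot|V|\cdot\K \leq \depth$. From $\nbband \leq n\cdot(\K+1)+1$ we get $(\nbband-1)\cdot|V|\cdot\K \leq n\cdot(\K+1)\cdot|V|\cdot\K \leq (n\cdot(\K+1)+1)\cdot|V|\cdot\K \leq \depth$ by the second term of the maximum, so the condition holds. Applying Lemma~\ref{lemma:NE same type costs} then produces the desired finite-memory Nash equilibrium~$(\tau_i)_{\ipi}$ in~$\ginf$ with $\valtype((\tau_i)_{\ipi})=\visit(\alpha)$. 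The main point to get right is precisely this bookkeeping on~$\nbband$: the two terms in the maximum defining~$\depth$ play two distinct roles — the first ensures $\rho$ is long enough to contain~$\la\lap$, the second ensures the bound on~$\nbband$ required by Lemma~\ref{lemma:NE same type costs} — and one must check that the single bound $\nbband \leq n\cdot(\K+1)+1$ simultaneously feeds both.
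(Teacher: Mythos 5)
Your proposal is correct and follows essentially the same route as the paper's proof: extract the minimal prefix $\la\lap$ with window size $(\K+1)\cdot|V|$, bound $\nbband$ by $|\Pi|\cdot(\K+1)+1$, slice out the cycle $\beta$ by pigeonhole in the segment between depths $(\nbband-1)\cdot|V|$ and $\nbband\cdot|V|$, check the two arithmetic conditions against the two terms of the maximum defining~$\depth$, and invoke Lemma~\ref{lemma:NE same type costs}. Your disjoint-windows justification of the bound on~$\nbband$ is just a more explicit rendering of the paper's ``worst case'' argument, and your verification $(\nbband-1)\cdot|V|\cdot\K\leq\depth$ is a harmless slight weakening of the paper's $\nbband\cdot|V|\cdot\K\leq\depth$, both of which satisfy the hypothesis $\nbband\leq\frac{\depth}{|V|\cdot\K}+1$ of the lemma.
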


\begin{proof}%[Sketch]
  The proof is similar to the proof of Proposition~\ref{prop:NE fin ds inf}.
  Let~$(\sigma_i)_{\ipi}$ be a
  Nash equilibrium in the game~$\gfin$ and~$\rho$ its outcome.
  We consider the prefix~$\la\lap$ of~$\rho$ of minimal length such that
  \begin{align*}
    \exists \, \nbband \geq 1 \ \ \ \ \ \ & 
    |\la| = (\nbband-1)\cdot|V|\notag\\
    & |\la\lap|=(\nbband+\K)\cdot|V|\notag\\
    & \visit(\la)=\visit(\la\lap). %\label{eq:v(l)=v(ll')}
  \end{align*}
  In the worst case, 
  the play~$\rho$ visits the goal set of a new player
  in each prefix of length~$i\cdot (\K+1) \cdot|V|$,~$1 \leq i \leq |\Pi|$, 
  i.e.~$|\la|= |\Pi| \cdot (\K+1) \cdot|V|$. So we know 
  that~$\nbband \leq |\Pi| \cdot (\K+1) +1$ and~$\la\lap$ exists as a prefix
  of~$\rho$, because the length~$\depth$ of~$\rho$ is greater or equal
  to~$(|\Pi|+1)\cdot (\K+1) \cdot|V|$ by hypothesis.

  Given the length of~$\lap$ ($\K \geq 1$), one vertex of~$V$ is
  visited at least twice by~$\lap$. More precisely, we can write
  \begin{align*}
    \la\lap = \alpha\beta\gamma  \text{\ \ \  with\ \ \ }
    & \last(\alpha)= \last(\alpha\beta)\notag\\
    & |\alpha| \geq (\nbband -1) \cdot |V| \notag\\
    & |\alpha\beta| \leq \nbband\cdot|V|.%\label{eq:lg ab}
  \end{align*}
  We have~$\visit(\alpha)=\visit(\alpha\beta\gamma)$, and
  $|\alpha\beta\gamma| =(\nbband+\K)\cdot|V|$.

  Moreover, the following inequality holds:
  $$\depth \geq  (|\Pi|\cdot (\K+ 1)+1)\cdot |V| \cdot \K
  \geq \nbband \cdot |V| \cdot \K \quad \text{and so,} \quad \nbband
  \leq \frac{d}{|V|\cdot\K}.$$ Then, we can apply Lemma~\ref{lemma:NE
    same type costs} and get a finite-memory Nash
  equilibrium~$(\tau_i)_{\ipi}$ in the game~$\ginf$ such
  that~$\valtype((\tau_i)_{\ipi})=\visit(\alpha)$.  \qed
\end{proof}

Thanks to Corollary~\ref{coro:kuhn} and Proposition~\ref{prop:NE fin
  ds inf costs}, one can easily deduce Theorem~\ref{theo:ex EN costs}.

Let us comment on the depth~$\depth$ chosen in
Proposition~\ref{prop:NE fin ds inf costs}. It is defined as the
maximum between~$d_1 := (|\Pi| + 1)\cdot (\K+1) \cdot |V|$ and~$d_2:=
(|\Pi|\cdot (\K+ 1)+1)\cdot |V| \cdot \K$. One can easily prove that
$d_1 < d_2$ if and only if $\K^2 > \frac{|\Pi|+1}{|\Pi|}$.
%\fbox{mettre plus?}

\smallskip

We now investigate an alternative method to handle \emph{simple cost
  functions}. More precisely we only consider cost functions
$(\cost_i)_{\ipi}$ such that for all $i$, $j \in \Pi$ we have that
$\cost_i=\cost_j$ and $\cost_i:E \to \N_0$. In other words, it means
that there is a unique non-zero natural cost on every edge. Later on
we are going to compare the depths of the finite trees obtained by the
two methods.

In the case of these simple cost functions, we can directly deduce
Theorem~\ref{theo:ex EN costs} by replacing any edge of cost~$c$ by a
path of length $c$ composed of $c$ new edges (of cost 1) and then
applying the results of Section~\ref{sec:NE} on this new
game. %\fbox{exemple?}
If we write $\Gprime$ the new game obtained by adding new vertices and
edges when necessary, it holds that:
\begin{align*}
  |V'| & \leq |V| + (\cmax - 1)\cdot |E|\\
  & \leq |V| + (\cmax - 1)\cdot |V|^2 \text{, and}\\
  |E'| & \leq \cmax \cdot |E|\,.
\end{align*}
If we apply Proposition~\ref{prop:NE fin ds inf}, the depth~$\depth'$ of the
finite tree that is considered satisfies:
\begin{align*}
  \depth' & = (|\Pi|+1) \cdot 2 \cdot |V'|\\
  & \leq (|\Pi|+1) \cdot 2 \cdot \left(|V| + (\cmax - 1)\cdot |E|\right)\\
  & \leq (|\Pi|+1) \cdot 2 \cdot \left(|V| + (\cmax - 1)\cdot |V|^2\right)\,.
\end{align*}

Whereas if we apply Proposition~\ref{prop:NE fin ds inf costs}
directly on the initial game~$\mathcal{G}$, we have the following
equality:
$$\depth=\max\{(|\Pi| + 1)\cdot (\K+1) \cdot |V|,  
    (|\Pi|\cdot (\K+ 1)+1)\cdot |V| \cdot \K\}\,.$$

Let us first notice that if all the edges of $\mathcal{G}$ are
labelled with the same cost (i.e., $\cmax=\cmin$ and $\K=1$), then
\begin{center}
  \begin{tabular}{lll}
    $\depth'$ & $=$ & $(|\Pi|+1) \cdot 2 \cdot(|V| + (\cmax - 1)\cdot
    |E|)$, and\\ 
    $\depth$ & $=$ & $(|\Pi| + 1)\cdot 2 \cdot |V|$\,.
  \end{tabular}
\end{center}
And so,
\begin{center}
  \begin{tabular}{lll}
    if $\cmax=\cmin=1$, &
    then & $\depth' = \depth= (|\Pi| + 1)\cdot 2 \cdot |V|$, and\\
    if $\cmax=\cmin>1$, &
    then & $\depth' > \depth$\,.
  \end{tabular}
\end{center}

When $\K > 1$, the comparison between~$\depth$ and~$\depth'$ depends
on the values of many parameters of the game. For example, if the
graph of the game has five vertices, three edges of cost 1 and one
edge of cost 100, then it is more interesting to use the game
$\mathcal{G}'$ and techniques from Section~\ref{sec:NE} to construct
the Nash equilibrium, because in this case, $\depth'= (\Pi+1) \cdot 2
\cdot 104$ and $\depth= (|\Pi|\cdot 101 +1)\cdot 5 \cdot 101$, and so
$\depth >> \depth'$.

\section{Conclusion and Perspectives}
In this paper, we first prove the existence of finite-memory Nash
equilibria for quantitative multiplayer reachability games played on
finite graphs. We also prove that this result remains true when the
model is enriched by allowing $n$-tuples of non-negative costs on
edges (one cost by player), answering a question we posed
in~\cite{BBD10}. Moreover we extend our existence result to
quantitative games where both safety and reachability objectives
coexist. Secondly, we prove the existence of finite-memory secure
equilibria for quantitative two-player reachability games played on
finite graphs.
% \fbox{Veut on dire que les techniques sont differentes et mentionner
%   c-ex???}

There are several interesting directions for further research. First,
we intend to investigate the existence of secure equilibria in the
$n$-player framework. Notice that the proof techniques related to our
results on secure equilibria rely on the two-player
assumption. Furthermore, we also want to investigate deeper the size
of the memory needed in the equilibria. This could be a first step
towards a study of the com\-plexi\-ty of computing equilibria with
certain requirements, in the spirit of~\cite{GU08}. We also intend to
look for existence results for \emph{subgame perfect
  equilibria}. Finally we would like to address these questions for
other objectives such as B\"uchi or request-response.

\begin{acknowledgements}
  This work has been partly supported by the ESF project GASICS and a
  grant from the National Bank of Belgium. The third author is
  supported by a grant from L'Oreal-UNESCO/F.R.S.-FNRS. The authors
  are grateful to Jean-Fran\c cois Raskin and Hugo Gimbert for useful
  discussions.
\end{acknowledgements}

\bibliographystyle{abbrv}
\bibliography{quant_games}

\end{document}